\title{Adaptive normalization for IPW estimation}
\author{%
	Samir Khan \\
  Stanford University\\
  \texttt{samirk@stanford.edu} \\
  % examples of more authors
  \and
   Johan Ugander \\
   Stanford University \\
   \texttt{jugander@stanford.edu} \\
}
\newcommand{\R}{\mathbb{R}}
\newcommand{\E}{\mathbb{E}}
\renewcommand{\P}{\mathbb{P}}
\DeclareMathOperator{\var}{var}
\DeclareMathOperator{\cov}{cov}
\newcommand{\Hajek}{\text{H\'ajek}}
\newcommand{\AN}{\text{AN}}
\newcommand{\muanaipw}{\hat{\mu}_{\text{AIPW,\AN}}}
\theoremstyle{theorem}
\newtheorem{theorem}{Theorem}
\newtheorem{lemma}{Lemma}
\theoremstyle{definition}
\newtheorem{assumption}{Assumption}
\DeclareMathOperator*{\argmin}{argmin} % no space, limits underneath in displays
\begin{document}

\maketitle

\begin{abstract}

	Inverse probability weighting (IPW) is a general tool in survey sampling and causal inference, used both in Horvitz--Thompson estimators, which normalize by the sample size, and \Hajek{}/self-normalized estimators, which normalize by the sum of the inverse probability weights. In this work we study a family of IPW estimators, first proposed by Trotter and Tukey in the context of Monte Carlo problems, that are normalized by an affine combination of the sample size and sum of inverse weights. We show how selecting an estimator from this family in a data-dependent way to minimize asymptotic variance leads to an iterative procedure that converges to an estimator with connections to regression control methods. We refer to this estimator as an adaptively normalized estimator. For mean estimation in survey sampling, this estimator has asymptotic variance that is never worse than the Horvitz--Thompson or \Hajek{} estimators, and is smaller except in edge cases. Going further, we show that adaptive normalization can be used to propose improvements of the augmented IPW (AIPW) estimator, average treatment effect (ATE) estimators, and policy learning objectives. Appealingly, these proposals preserve both the asymptotic efficiency of AIPW and the regret bounds for policy learning with IPW objectives, and deliver consistent finite sample improvements in simulations for all three of mean estimation, ATE estimation, and policy learning.
\end{abstract}

%!TEX root = ../TukeyTrotterDraft.tex

\section{Introduction}

Consider the problem of estimating a mean from samples that are observed with non-uniform probabilities. Formally, we want to estimate the mean $\mu$ of a set of responses $Y_1,\cdots, Y_n$, but only observe $Y_1I_1,\cdots, Y_nI_n$, where $I_k$ is an indicator of whether or not unit $k$ was observed. This problem is a fundamental primitive of many problems in survey sampling, causal inference, and beyond.  We focus on the case where the $I_k$ are independent and distributed as $I_k\sim \text{Ber}(p_k)$, which can correspond to a randomized experiment under a Bernoulli design with known $p_k$ or to certain observational contexts.

The standard estimators for $\mu$ are the Horvitz--Thompson and \Hajek{} estimators~\cite{horvitz-thompson, hajek}, both of which are based on the idea of inverse probability weighting (IPW). To introduce these estimators, we first define $$\hat{S}=\sum_{k=1}^n \frac{Y_kI_k}{p_k}\quad\text{and}\quad \hat{n}=\sum_{k=1}^n \frac{I_k}{p_k}$$
as estimates of the population total and sample size. Then, the Horvitz--Thompson and \Hajek{} estimators are 
$$\hat{\mu}_{\text{HT}}=\hat{S}/n \quad\text{and}\quad \hat{\mu}_{\text{\Hajek{}}}=\hat{S}/\hat{n}.$$
These estimators have several desirable properties: $\hat{\mu}_{\text{HT}}$ is unbiased and admissible in the class of all unbiased estimators \cite{admissibility}, while $\hat{\mu}_{\text{\Hajek{}}}$ is approximately unbiased and often has lower variance than $\hat{\mu}_{\text{HT}}$ \cite{sarndal}. 

The idea of inverse probability weighting also figures prominently in the Monte Carlo literature on importance sampling, where the Horvitz--Thompson and \Hajek{} estimators are known as the importance sampling (IS) and self-normalized importance sampling (SNIS) estimators, respectively. In 1954, working in this context, Trotter and Tukey \cite{trottertukey} briefly entertained the idea of a family of estimators that generalizes $\hat{\mu}_{\text{HT}}$ and $\hat{\mu}_{\text{\Hajek{}}}$, namely 
\begin{eqnarray}
\hat{\mu}_{\lambda}=\frac{\hat{S}}{(1-\lambda)n+\lambda\hat{n}},
\label{eqn:tt}
\end{eqnarray}
for $\lambda \in \R$. This family contains Horvitz--Thompson as the special case $\lambda=0$ and \Hajek{} as the special case $\lambda=1$. Curiously, Trotter and Tukey emphasized that $\lambda$ need not be constrained to $[0,1]$, and that values outside that range might sometimes be useful \cite{trottertukey}.

Although this proposal first appeared nearly 70 years ago, it has not received any significant attention in the literature. In this work, we consider this proposal in detail. When $\lambda$ is selected as a function of the data to optimize some criterion, we refer to this idea as {\it adaptive normalization}. We show in Section~\ref{subsec:lambdastar} how choosing a member of the family \eqref{eqn:tt} in a data-dependent way---to iteratively minimize asymptotic variance---leads to estimators that improve on both the Horvitz–Thompson and \Hajek{} estimators in terms of asymptotic variance, despite the additional variance incurred by estimating $\lambda$ from the data. We also show how the adaptively normalized estimator can be understood as a regression control/control variate method in Section~\ref{subsec:connections}.

As applications, we deploy adaptive normalization in diverse settings where the Horvitz--Thompson estimator and \Hajek{} estimator are traditionally used. The popular AIPW estimator of \cite{aipw} essentially uses Horvitz--Thompson as a primitive, and we replace the sample size normalization with adaptive normalization to obtain an estimator that preserves the asymptotic efficiency of the AIPW estimator while improving performance in simulations. Similarly, the problem of estimating average treatment effects (ATEs) can be tackled by separately estimating the mean response in the treated group and the mean response in the control group using adaptive normalization. Finally, we also show that a new objective for policy learning based on adaptive normalization preserves the regret bounds of \cite{kitagawa} and learns better policies in simulations. More broadly, our analysis strongly suggests that choosing normalizations other than $n$ or $\hat n$ is worthy of attention in other contexts where inverse probability weighted estimators appear, such as off-policy evaluation or reinforcement learning.

\paragraph{Motivation for adaptive normalization.}

Why would we want to use values of $\lambda$ learned from the data in \eqref{eqn:tt} rather than $\lambda=0$ or $\lambda=1$, especially values of $\lambda<0$ or $\lambda>1$? As a starting point, consider $\lambda=0$, the Horvitz--Thompson estimator, and $\lambda=1$, the \Hajek{} estimator. As mentioned above, the \Hajek{} estimator often has lower variance than the Horvitz--Thompson estimator. One reason for this variance reduction is that the numerator and denominator of the \Hajek{} estimator are positively correlated, while the numerator and denominator of the Horvitz--Thompson estimator are uncorrelated. 

To see the role this correlation plays, consider a toy example with $n=10$ units with responses and response probabilities 
$$Y_1=Y_2=\cdots=Y_{10}=1,\quad \quad p_1=10^{-5}, p_2=\cdots = p_{10}=0.5.$$
 Then, we claim the Horvitz--Thompson and \Hajek{} estimators have very different behaviors on the event that $Y_1$ is observed. For our illustration, assume units 2--5 are observed but 6--10 are not.

For the Horvitz--Thompson estimator, if $Y_1$ is observed then the numerator of the estimator increases from 8 to $8+1/10^{-5} \approx 10^5$, while the denominator, $n$, remains fixed at 10, so our estimate increases from $8/10$ to $(8+10^5)/10\approx 10^4$. For the \Hajek{} estimator, if $Y_1$ is observed, the numerator similarly increases by $10^5$, but now the denominator also increases by $10^5$, and so our estimate is less affected. The positive correlation between the numerator and denominator of the \Hajek{} estimator provides a kind of shrinkage that serves to reduce variance.

Since the correlation between numerator and denominator in the \Hajek{} estimator is one way that it reduces variance, we would expect that we can further reduce variance by introducing more correlation, which motivates taking $\lambda>1$ in $\eqref{eqn:tt}$. By increasing the weight on the random factor of $\hat{n}$ in the denominator, we increase the covariance between the numerator and denominator, and thus reduce variance. Put crudely, if going from $\lambda =0$ to $\lambda=1$ reduces variance, shouldn't going from $\lambda=1$ to $\lambda=2$ reduce it even more?

Of course, there is a bias-variance trade-off. Increasing $\lambda$ decreases the variance of $\hat{\mu}_{\lambda}$, but increasing $\lambda$ to be arbitrarily large also shrinks our estimates towards 0 and increases bias. Thus the problem becomes one of choosing a value of $\lambda$ that provides the correct amount of shrinkage for a given problem. This trade-off is visualized in Figure \ref{fig:optim}, which shows the MSE of \eqref{eqn:tt} for a simulated problem at a range of values of $\lambda$. Ideally, we would choose $\lambda$ to minimize the curve shown in this figure; unfortunately, we cannot compute the exact MSE of $\hat{\mu}_{\lambda}$. In this work, we take the approach of showing that $\hat{\mu}_{\lambda}$ is asymptotically normal and then choose $\lambda$ to iteratively minimize estimates of the asymptotic variance. Other procedures for selecting $\lambda$ based on different criteria would lead to other estimators, suggesting a general class of adaptively normalized estimators that may be worthy of further study.

\begin{figure}[t]
	\centering
	\includegraphics[width=0.55\linewidth]{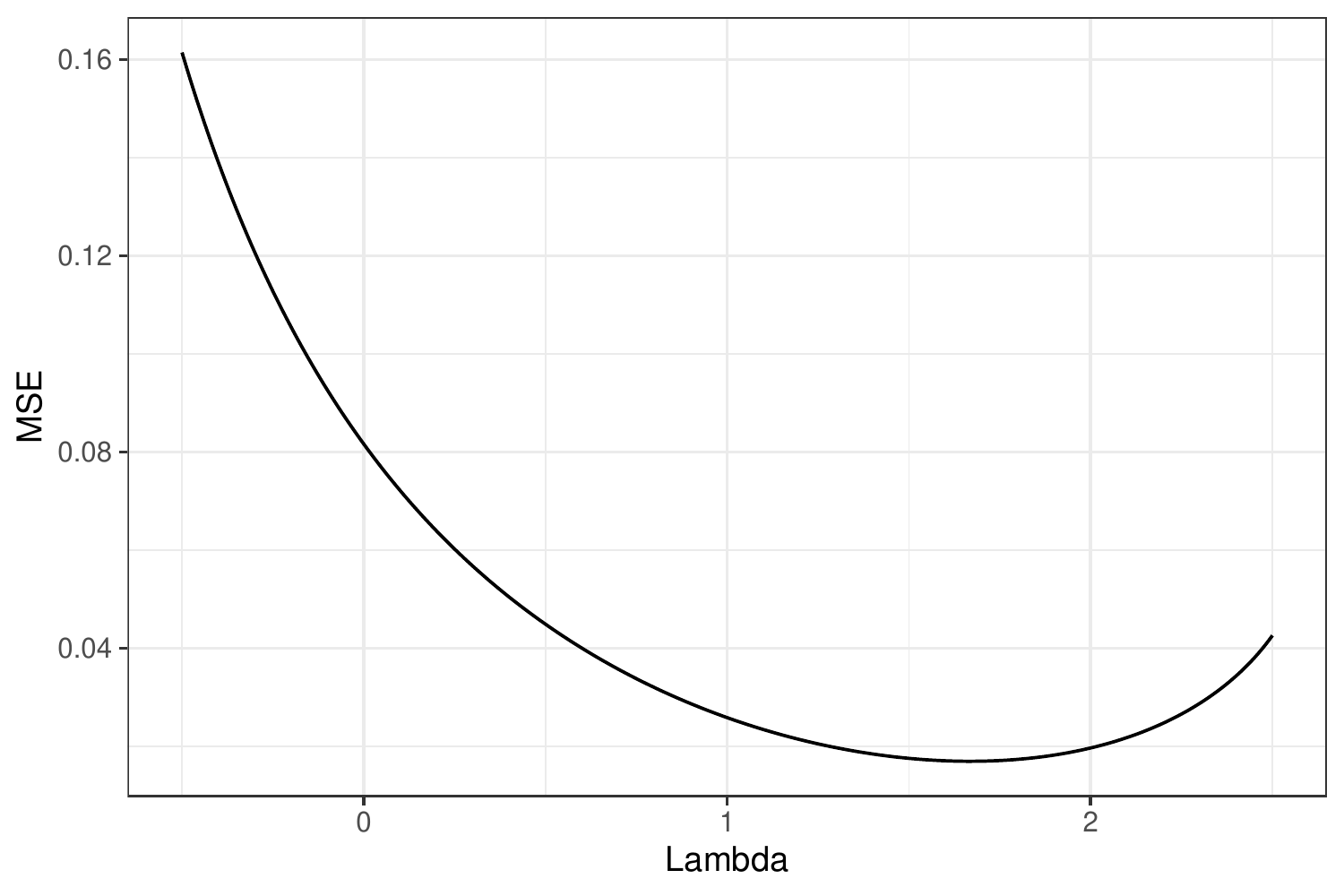}
	\caption{The MSE of $\hat{\mu}_{\lambda}$ as a function of $\lambda$ in an instance of the normal model described in \eqref{eq:normal-model} with $n=250$, $\mu=1$, and $\rho=0.2$. Note that neither the Horvitz--Thompson ($\lambda=0$) nor \Hajek{} ($\lambda=1$) estimator minimizes the MSE. Furthermore, the optimal choice of $\lambda$ is greater than 1.}
	\label{fig:optim}
\end{figure}

\subsection{Related work}

The present work is most closely related to the literature on variance reduction in importance sampling. Since the proposal of importance sampling in \cite{is}, there have been a variety of proposals for variance reduction methods, many of which are surveyed in \cite{artnotes}. The one most relevant to our work is the method of control variates, which is also known as difference estimation in the survey sampling community \cite{sarndal}. We show in Section \ref{subsec:connections} that the estimator we derive in Section \ref{sec:adapt} is algebraically equivalent to a particular kind of control variate estimator, and a slightly modified form of this estimator has been previously discussed by \cite{firth} and studied in Monte Carlo simulations by \cite{hesterberg}. For proponents of this choice of control variate, adaptive normalization provides an independent motivation and derivation. 

Our proposed modifications of the AIPW estimator stand alongside many other proposals: for example, \cite{el-aipw, el-aipw-2} use an empirical likelihood approach, while \cite{improved-aipw} uses a weighting approach, among others.  We do not, however, know of any attempt or alternative derivation that is equivalent to the estimator we derive in this context. Similarly, our policy learning proposals build directly on the work of \cite{kitagawa}, which is itself closely related to work on counterfactual risk minimization in the bandit setting \cite{bandit}, and was extended by \cite{wager} using ideas from AIPW estimation.

%!TEX root = ../TukeyTrotterDraft.tex

\section{Problem formulation and notation}

In this section we formally specify our model and introduce notation. We assume that pairs $(Y_1, p_1),\ldots, (Y_n, p_n)$ are drawn i.i.d.\ from a super-population distribution $\mathcal{D}$ on $\R\times [0,1]$ and that we observe both $p_1,\ldots, p_n$ and $Y_1I_1,\ldots, Y_nI_n$ where the $I_k$ are independent and $I_k\mid p_k\sim\text{Ber}(p_k)$.

This model differs from previous design-based work on asymptotics of the Horvitz--Thompson and \Hajek{} estimators in assuming a super-population model for the $Y_k$ rather than modeling $Y_k$ as a sequence of fixed finite populations \cite{savje-consistency, robinson-consistency}. However, we believe our results are still true in a finite population model, with slightly modified proofs, and we choose to work in a super-population model mainly to avoid making many cumbersome assumptions about the existence of limiting moments of the $Y_k$. In particular, although we are working in a super-population model, we make no parametric assumptions on $\mathcal{D}$, such as assuming a regression model.

Our assumption that the $p_k$ are random as well is best understood in the context of a more general model that we consider in Sections~\ref{subsec:covariate} and~\ref{subsec:policy}. There, we consider pairs $(Y_1, X_1),\ldots, (Y_n, X_n)$ drawn i.i.d.\ from a super-population distribution $\mathcal{D}$ on $\R\times \mathcal{X}$, where the $Y_k$ are responses and the $X_k$ are known covariates. Then, we can think of the treatment probability $p_k$ as a function $p_k=p(X_k)$ of the observed covariates. 

We typically assume that the $p_k$ (or equivalently the map $p(\cdot)$) are known exactly, rather than estimated from the $X_k$. This is mainly to facilitate asymptotic comparisons between our estimators that are independent of the model used to estimate $p_k$, and we believe that it is reasonable to use estimated propensities with our proposed estimators. In Section~\ref{subsec:covariate}, where we discuss AIPW estimation, under appropriate consistency and rate conditions, the choice of model for $p_k$ no longer plays any role in the asymptotics, and so we relax this assumption there and allow the $p_k$ to be estimated propensities.

We now specify our assumptions on $\mathcal{D}$. 

\begin{assumption}[Boundedness and overlap]
	There exist constants $M, \delta>0$ such that $|Y_k|\leq M$ and $\delta \leq p_k\leq 1-\delta$ almost surely. 
	\label{bounded}
\end{assumption}

Under Assumption \ref{bounded}, all of the moments 
\begin{eqnarray}
\mu=\E[Y_1],\quad \pi=\E\left[ \frac{1-p_1}{p_1} \right],\quad T=\E\left[ Y_1\frac{1-p_1}{p_1} \right]
\label{eqn:moments}
\end{eqnarray}
of $\mathcal{D}$ are all finite. 

Throughout the next section, our goal is to estimate $\mu=\E[Y_k]$ in the model presented here. In Section~\ref{sec:apps}, we consider more diverse models and goals.

%!TEX root = ../TukeyTrotterDraft.tex

\section{Adaptive normalization}
\label{sec:adapt}

In this section we propose and analyze a novel procedure for selecting a data-dependent value of $\lambda$ and describe properties of the adaptively normalized IPW estimator that follows. We establish that the resulting estimator has smaller asymptotic variance than either the Horvitz--Thompson or \Hajek{} estimator.

\subsection{The optimal choice of $\lambda$}

Before we can understand how to choose $\lambda$ from the data, we must first understand the behavior of $\hat{\mu}_{\lambda}$ for a fixed $\lambda$, as a function of $\lambda$ and the problem parameters. To this end, we contribute the following central limit theorem, which is a straightforward generalization of known results on the bias and variance of the \Hajek{} estimator. We defer the proof this result, as well as all other results in this section, to Appendix~\ref{sec:proofs}.
\begin{theorem}
	Suppose Assumption \ref{bounded} holds. Then, for any fixed $\lambda\in \R$, we have the CLT 
	\label{thm:mu-hat-lambda-clt}
	\begin{equation}
		\sqrt{n}(\hat{\mu}_{\lambda}-\mu)\xrightarrow{d} N\left( 0,  \sigma^2_{\lambda}\right),\quad \sigma^2_{\lambda}=\E\left[ \frac{1-p_k}{p_k}(Y_k-\lambda\mu)^2 \right].\label{eq:mu-hat-lambda-clt}
	\end{equation}
\end{theorem}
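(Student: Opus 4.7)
My plan is a standard ratio-CLT argument: write $\hat{\mu}_\lambda$ as a smooth function of two sample averages, linearize, and combine a CLT on the numerator with an LLN on the denominator via Slutsky.

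I would begin from the algebraic identity
\begin{equation*}
\sqrt{n}(\hat{\mu}_\lambda - \mu) \;=\; \frac{(1/\sqrt{n})\sum_{k=1}^n Z_k}{(1-\lambda) + \lambda(\hat{n}/n)}, \qquad Z_k \;:=\; (Y_k - \lambda\mu)\frac{I_k}{p_k} - (1-\lambda)\mu,
\end{equation*}
obtained by expanding $\hat{S} - \mu\bigl((1-\lambda)n+\lambda \hat{n}\bigr)$ termwise. An equivalent route is the delta method applied to $f(a,b) = a/((1-\lambda)+\lambda b)$ at $(\mu,1)$, whose gradient is $(1,-\lambda\mu)$ and produces the same linearization in terms of $\sqrt{n}(\hat{S}/n - \mu)$ and $\sqrt{n}(\hat{n}/n - 1)$.

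For the denominator, $I_k/p_k$ is i.i.d.\ with mean $1$ (from $\E[I_k\mid p_k]=p_k$) and bounded by $1/\delta$ under Assumption~\ref{bounded}, so $\hat{n}/n\to 1$ almost surely by the LLN and the denominator converges to $1$ in probability. For the numerator, the $Z_k$ are i.i.d., bounded almost surely (by a constant depending on $M,\delta,\lambda,\mu$), and have mean zero (direct from $\E[(Y_k-\lambda\mu)I_k/p_k] = \E[Y_k-\lambda\mu] = (1-\lambda)\mu$), so Lindeberg--L\'evy gives $(1/\sqrt{n})\sum_k Z_k \xrightarrow{d} N(0,\var(Z_k))$. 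Slutsky then delivers the stated CLT, with asymptotic variance $\var(Z_k)$.

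The step I expect to require the most care is the variance identification: showing that $\var(Z_k)$ reduces to $\sigma^2_\lambda = \E[(1-p_k)(Y_k-\lambda\mu)^2/p_k]$. Conditioning on $(Y_k,p_k)$ and using $\var(I_k\mid p_k) = p_k(1-p_k)$ gives $\var(Z_k \mid Y_k, p_k) = (Y_k-\lambda\mu)^2(1-p_k)/p_k$, whose expectation is exactly $\sigma^2_\lambda$; carefully accounting for the remaining super-population contribution from $\var(\E[Z_k\mid Y_k,p_k])$ to land on the claimed clean form is the principal algebraic obstacle. Everything else in the argument is routine Slutsky/CLT manipulation.
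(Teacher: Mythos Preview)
Your approach coincides with the paper's: both apply the delta method to $f(a,b)=a/\bigl((1-\lambda)+\lambda b\bigr)$ at $(\hat S/n,\hat n/n)$, with gradient $(1,-\lambda\mu)$ at $(\mu,1)$; your ratio-plus-Slutsky decomposition is simply that calculation written out explicitly. (The paper's appendix in fact writes $f$ with the roles of $\lambda$ and $1-\lambda$ interchanged, and hence records the gradient as $(1,0,0,-(1-\lambda)\mu)$, but this appears to be a typo; your version matches the estimator as defined.)

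The ``obstacle'' you flag at the end is real, and it is not an algebraic wrinkle that will resolve. Since $\E[Z_k\mid Y_k,p_k]=Y_k-\mu$, one has $\var\bigl(\E[Z_k\mid Y_k,p_k]\bigr)=\var(Y_k)$, and therefore
\[
\var(Z_k)\;=\;\E\!\left[\frac{1-p_k}{p_k}(Y_k-\lambda\mu)^2\right]+\var(Y_k)\;=\;\sigma^2_\lambda+\var(Y_k).
\]
The paper arrives at the bare $\sigma^2_\lambda$ only because its auxiliary lemma on the joint CLT records $n\var(\hat S/n)=\E\bigl[Y_k^2(1-p_k)/p_k\bigr]$, which is the design-based variance (conditional on the $(Y_k,p_k)$) and omits the super-population contribution $\var(Y_k)$; with that entry in $\Sigma$, the delta-method quadratic form collapses to $\sigma^2_\lambda$. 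So you should not expect your $\var(Z_k)$ to reduce to the stated $\sigma^2_\lambda$: your computation is correct, and the extra $\var(Y_k)$ term is genuinely present in the i.i.d.\ super-population model. The discrepancy is harmless for everything downstream in the paper, since $\var(Y_k)$ does not depend on $\lambda$ and therefore drops out of all the asymptotic-variance comparisons and of the minimization that produces $\lambda^*$.
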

With this result in hand, we now choose $\lambda$ to minimize $\sigma^2_{\lambda}$. Moving forward, we assume that $\mu\neq 0$, since if $\mu=0$, then $\sigma_{\lambda}$ does not actually depend on $\lambda$, and minimizing over $\lambda$ is no longer meaningful. 

Under the assumption that $\mu\neq 0$ then, there is a unique value of $\lambda$ that minimizes $\sigma_{\lambda}^2$ in \eqref{eq:mu-hat-lambda-clt}, given by 
\begin{equation}
	\lambda^*=\frac{\E\left[ \frac{1-p_k}{p_k}Y_k \right]}{\mu\E\left[ \frac{1-p_k}{p_k} \right]}=\frac{T}{\pi \mu},
	\label{eq:lambda-star}
\end{equation}
where $\pi$ and $T$ are the moments defined in \eqref{eqn:moments}. Assumption \ref{bounded} precludes the possibility that $\pi=0$. 

We can interpret \eqref{eq:lambda-star} to shed light on the role $\lambda$ plays. If the $Y_k$ and $p_k$ are positively correlated, then $Y_k$ and $\frac{1-p_k}{p_k}$ are negatively correlated, so $T<\mu\pi$ and $\lambda^*<1$. Similarly, if the $Y_k$ and $p_k$ are negatively correlated, we obtain $\lambda^*>1$. This interpretation of \eqref{eq:lambda-star} extends the conventional wisdom that \Hajek{} is preferable to Horvitz--Thompson when $Y_k$ and $p_k$ are negatively correlated \cite{sarndal}.

\subsection{Estimating $\lambda^*$ from the data}
\label{subsec:lambdastar}

Based on the above asymptotic analysis, we would seem to prefer $\hat{\mu}_{\lambda^*}$ over $\hat{\mu}_{\text{HT}}$ and $\hat{\mu}_{\text{\Hajek{}}}$. However, we cannot use $\hat{\mu}_{\lambda^*}$ directly as an estimator of $\mu$ because the prescribed choice of $\lambda^*$ depends on unknown moments of $\mathcal{D}$, including the very mean $\mu$ we are trying to estimate. What happens if we estimate $\lambda^*$ from data?

Our expression of $\lambda^*$ depends on three moments of $\mathcal{D}$, namely $T, \pi, \mu$. The first two of these can readily be estimated from the data by the IPW-style estimators 

\begin{equation}
	\hat{T}=\frac{1}{n}\sum_{k=0}^n \frac{1-p_k}{p_k}Y_k\frac{I_k}{p_k},\quad \hat{\pi}=\frac{1}{n}\sum_{k=0}^n \frac{1-p_k}{p_k}\frac{I_k}{p_k},
	\label{eq:sample-moments}
\end{equation}
and we already have the estimator $\hat{\mu}_{\text{HT}}$ of $\mu$. 

As a brief aside, we might wonder whether we can also use an adaptive normalization when estimating $T$ and $\pi$, instead of a traditional IPW estimator. Unfortunately, just as adaptively normalizing an estimator of $\mu$ requires estimating the higher-order moments $T$ and $\pi$, adaptively normalizing an estimator of $T$ or $\pi$ would require estimating even further higher-order moments. To avoid these complications, we restrict ourselves to using the Horvitz--Thompson normalization. 

Now, the estimators in \eqref{eq:sample-moments} lead us to estimate $\lambda^*$ and $\mu$ by 
\begin{equation}
\hat{\lambda}^*=\frac{\hat{T}}{\hat{\pi} \hat{\mu}_{\text{HT}}},\quad \hat{\mu}_{\hat{\lambda}^*}=\frac{\hat{S}}{(1-\hat{\lambda}^*)n+\hat{\lambda}^*\hat{n}}.
\label{eq:mu-hat-1}
\end{equation}
Unfortunately, we can check in simulation that this estimator is sometimes even worse than $\hat{\mu}_{\text{HT}}$. Essentially, the additional variance introduced by estimating $\lambda^*$ from data can outpace the variance reduction from using a tuned value of $\lambda^*$. However, we can make useful observation: $\hat{\mu}_{\hat{\lambda}^*}$ is often a better estimator of $\mu$ than $\hat{\mu}_{\text{HT}}$, so should we not use $\hat{\mu}_{\hat{\lambda}^*}$ rather than $\hat{\mu}_{\text{HT}}$ when estimating $\lambda$? In fact, every time we obtain a better estimate of $\mu$, we can use this to obtain a better estimate of $\lambda^*$. On the other hand, a better estimate of $\lambda^*$ will, because $\lambda^*$ is the optimal amount of normalization, lead to a better estimate of $\mu$. Combining these ideas leads to the following alternating scheme. 

Formally, we construct a sequence of estimators $(\hat{\lambda}^{(t)}, \hat{\mu}^{(t)})$ initialized at $(\hat{\lambda}^{(0)}, \hat{\mu}^{(0)})=(0, \hat{\mu}_{\text{HT}})$ and defined for $t>0$ by the recursions

\begin{equation}
	\hat{\lambda}^{(t)}=\frac{\hat{T}}{\hat{\pi}\hat{\mu}^{(t-1)}},\quad \hat{\mu}^{(t)}=\frac{\hat{S}}{(1-\hat{\lambda}^{(t)})n+\hat{\lambda}^{(t)}\hat{n}}.
	\label{eq:updates}
\end{equation}
The first equation in (\ref{eq:updates}) corresponds to estimating $\lambda^*$ using $\hat{\mu}^{(t-1)}$ as an estimate of $\mu$, while the second corresponds to estimating $\mu$ using $\hat{\lambda}^{(t)}$ as an estimate of $\lambda^*$. There are two possible stable limiting behaviors for this sequence: the first is to trivially have $\hat{\mu}^{(t)}\to 0$ and $\hat{\lambda}^{(t)}\to \infty$, while the second is to converge to a fixed point at a pair $(\hat{\mu}_{\AN}, \hat{\lambda}_{\AN})$ satisfying $$\hat{\lambda}_{\AN}=\frac{\hat{T}}{\hat{\pi}\hat{\mu}_{\AN}},\quad \hat{\mu}_{\AN}=\frac{\hat{S}}{(1-\hat{\lambda}_{\AN})n+\hat{\lambda}_{\AN}\hat{n}}.$$ This system of equations has the unique solution 

\begin{equation}
	\hat{\mu}_{\AN}=\frac{\hat{S}}{n}+\frac{\hat{T}}{\hat{\pi}}\left( 1-\frac{\hat{n}}{n} \right).
	\label{eq:mu-hat-fp}
\end{equation}

The following theorem formally establishes that the iterations (\ref{eq:updates}) converge to the non-trivial solution, the fixed point at (\ref{eq:mu-hat-fp}).

\begin{theorem}
	\label{thm:denom-convergence}
	Suppose Assumption \ref{bounded} holds, we have $\mu\neq0$, and consider the sequence of estimators $(\hat{\lambda}^{(t)}, \hat{\mu}^{(t)})$ initialized at $\hat{\lambda}^{(0)}=0, \hat{\mu}^{(0)}=\hat{\mu}_{\text{HT}}$ and defined for $t>0$ by the recursion (\ref{eq:updates}). Then 
	\begin{enumerate}[(i)]
		\item the sequence $\hat{\mu}^{(t)}$ converges as $t\to \infty$ to an estimator $\hat{\mu}_{\text{lim}}$;
		\item the estimator $\hat{\mu}_{\text{lim}}$ satisfies $$\lim_{n\to \infty} \P\left( \hat{\mu}_{\text{lim}}=\hat{\mu}_{\AN} \right)=1,$$ so that $\hat{\mu}_{\text{lim}}-\hat{\mu}_{\AN}$ converges in probability to 0.
	\end{enumerate}
\end{theorem}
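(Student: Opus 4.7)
The plan is to collapse the two-equation iteration into a single scalar recursion in $\hat{\mu}^{(t)}$, identify its fixed points, and then show that on a high-probability event the iteration contracts to the non-trivial fixed point $\hat{\mu}_{\AN}$. Substituting $\hat{\lambda}^{(t)} = \hat{T}/(\hat{\pi}\hat{\mu}^{(t-1)})$ into the update for $\hat{\mu}^{(t)}$, and writing $a = \hat{S}/n$, $b = \hat{T}/\hat{\pi}$, and $c = (\hat{n}-n)/n$, the recursion becomes
\[
  \hat{\mu}^{(t)} \;=\; \frac{a\,\hat{\mu}^{(t-1)}}{\hat{\mu}^{(t-1)} + bc},
\]
initialized at $\hat{\mu}^{(0)} = \hat{\mu}_{\text{HT}} = a$. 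The map $f(x) = ax/(x+bc)$ has exactly two fixed points, $x = 0$ and $x = a - bc$, and the latter matches the closed-form in (\ref{eq:mu-hat-fp}). So both (i) and (ii) reduce to showing that $\hat{\mu}^{(t)}$ converges to $a - bc$.

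Next, I would track the error $u_t := \hat{\mu}^{(t)} - \hat{\mu}_{\AN}$ explicitly. A short algebraic manipulation gives the clean recursion
\[
  u_{t+1} \;=\; \frac{bc}{a + u_t}\, u_t,
\]
so a single step shrinks the error by the factor $|bc|/|a+u_t|$. Crucially, the Horvitz--Thompson initialization gives $u_0 = a - (a-bc) = bc$, so the iterates start near $\hat{\mu}_{\AN}$ whenever $|bc|$ is small relative to $|a|$ and stay there by an obvious induction.

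The final step is a law-of-large-numbers argument that supplies the smallness. Under Assumption~\ref{bounded}, overlap gives $\pi \geq \delta/(1-\delta) > 0$, and the WLLN applied to the bounded IPW estimators in (\ref{eq:sample-moments}) together with $\hat{S}/n$ and $\hat{n}/n$ yields $a \xrightarrow{p} \mu \neq 0$, $b \xrightarrow{p} T/\pi$, and $c \xrightarrow{p} 0$. Thus for $\epsilon$ sufficiently small the event
\[
  E_n \;=\; \{\, |a| \geq \tfrac{3}{4}|\mu|,\; |b| \leq 2|T/\pi|+1,\; |c| \leq \epsilon \,\}
\]
forces $|bc| \leq |\mu|/8$ and satisfies $\P(E_n) \to 1$. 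On $E_n$, an induction shows that $|u_t| \leq |\mu|/8$ for all $t$ and $|a + u_t| \geq 5|\mu|/8$, so $|u_{t+1}| \leq (1/5)|u_t|$ and hence $\hat{\mu}^{(t)} \to \hat{\mu}_{\AN}$ geometrically. Defining $\hat{\mu}_{\text{lim}}$ as this limit when it exists (and by any convention off the convergence event) then gives part (i), and part (ii) is immediate because $\hat{\mu}_{\text{lim}} = \hat{\mu}_{\AN}$ holds on $E_n$.

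The only real subtlety is that the recursion has a competing trivial fixed point at $x = 0$, so a priori the iterates could be driven there or could diverge on bad samples. The argument above sidesteps any delicate global analysis of the dynamics of $f$: the LLN forces $|bc|$ to be asymptotically negligible compared with $|a|$, and the Horvitz--Thompson initialization happens to sit inside the basin of attraction of $\hat{\mu}_{\AN}$ under exactly these conditions, so contraction kicks in from iteration zero with probability tending to one.
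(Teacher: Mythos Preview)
Your argument is correct for part (ii) and follows a genuinely different---and in some ways cleaner---route than the paper. The paper also collapses the recursion to $\hat{\mu}^{(t)} = f(\hat{\mu}^{(t-1)})$ with $f(x) = ax/(x-b)$ (their $b$ is your $-bc$), but instead of tracking the error $u_t$ directly, it analyzes the \emph{two-step} map $f\circ f$ and proves, via casework on the signs of $a$ and $b$, that whenever $|a|>|b|$ the even- and odd-indexed subsequences each contract geometrically to $a+b$. The high-probability step is then done via Hoeffding's inequality rather than the WLLN, giving an explicit exponential bound $\P(|a|\le|b|)\le 4\exp\bigl(-2\mu^2 n/O(M/\delta)^2\bigr)$. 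Your one-step error recursion $u_{t+1}=bc\,u_t/(a+u_t)$ is more transparent and avoids the sign casework; the paper's two-step analysis buys a sharper dichotomy (convergence to $\hat{\mu}_{\AN}$ when $|a|>|b|$, to $0$ when $|a|<|b|$) at the cost of more bookkeeping.

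There is, however, a genuine gap in your treatment of part (i). As stated, the theorem asserts that $\hat{\mu}^{(t)}$ converges for \emph{every} sample (almost surely), not merely on an event whose probability tends to one as $n\to\infty$. The paper secures this by also handling the regime $|a|<|b|$ and showing that the iterates then converge to $0$, so that $\hat{\mu}_{\text{lim}}$ exists a.s.\ and equals either $\hat{\mu}_{\AN}$ or $0$. Your argument only establishes convergence on $E_n$, and the phrase ``by any convention off the convergence event'' sidesteps rather than proves (i). To close this with your machinery you would need to analyze the recursion when $|bc|\ge|a|$; there the one-step contraction factor $|bc|/|a+u_t|$ is no longer below $1$, which is precisely why the paper passes to the two-step map.
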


Thus, our attempts to develop an estimator of $\mu$ by estimating the optimal normalization parameter $\lambda^*$  culminate in the estimator $\hat{\mu}_{\AN}$, which we refer to as the adaptively normalized IPW estimator. The role of Theorem~\ref{thm:denom-convergence} and the iterative scheme we have presented is to make explicit the connection between adaptive normalization and the final estimator $\hat{\mu}_{\AN}$. The fixed point equations alone do not characterize this connection, since, based on those equations alone, we may be led to take $\hat{\mu}=0$ and think of this as using an infinite value of $\lambda$. Our theorem shows that the iterations \eqref{eq:updates}, which correspond to iteratively learning better and better normalizations, uniquely pick out $\hat{\mu}_{\AN}$ with high probability.

Before proceeding, we note a very attractive property of $\hat{\mu}_{\AN}$ in \eqref{eq:mu-hat-fp}: its simplicity. It is, for practical purposes, the Horvitz--Thompson estimator with a correction term, and can replace the Horvitz--Thompson and \Hajek{} estimators in essentially any application where they are used, with minimal additional computation. Beyond this simplicity, we will see in Section \ref{subsec:variance} that $\hat{\mu}_{\AN}$ typically has lower asymptotic variance than either the Horvitz--Thompson estimator or the \Hajek{} estimator, and in Section \ref{subsec:connections} that it is closely related to several existing ideas in the literature.

\paragraph{An optimization perspective.}
It may feel unsettled to minimize the asymptotic variance and then commence an iteration scheme. We can also derive $\hat{\mu}_{\AN}$ more directly by framing the problem of selecting $\lambda$ as a joint minimization, over $\mu$ and $\lambda$, of the asymptotic variance. Minimizing the asymptotic variance $\sigma^2_{\lambda}$ in (\ref{eq:mu-hat-lambda-clt}) is equivalent to minimizing $$-2\lambda\mu\E\left[ \frac{1-p_k}{p_k}Y_k \right]+\lambda^2\mu^2\E\left[ \frac{1-p_k}{p_k} \right]=-2\lambda\mu T+\lambda^2\mu^2\pi.$$ As before, we use $\hat{T}$ and $\hat{\pi}$ defined in (\ref{eq:sample-moments}) to estimate $T$ and $\pi$ but now we estimate $\mu$ by $\hat{\mu}_{\lambda}$ directly. This leads to the non-convex optimization problem $$\min_{\lambda} -2\lambda\hat{\mu}_{\lambda}\hat{T}+\lambda^2\hat{\mu}_{\lambda}^2\hat{\pi}.$$ 

Rather than directly solving this optimization problem, we consider the equivalent problem 
\begin{mini}|s|
	{\lambda, \hat{\mu}}{-2\lambda\hat{\mu}\hat{T}+\lambda^2\hat{\mu}^2\hat{\pi},}
	{}{\label{constrained-opt}}
	\addConstraint{\hat{\mu}=\frac{\hat{S}}{(1-\lambda)\hat{n}+\lambda n}}.
\end{mini}
This problem is also non-convex, but is more amenable to analysis. In particular, we claim that, despite its non-convexity, \eqref{constrained-opt} can be easily solved analytically to find a unique optimum. 

To solve \eqref{constrained-opt}, note that the objective is a quadratic function of $\lambda\hat{\mu}$ and so checking first-order stationarity conditions shows that the unconstrained minimum is achieved for any pair $(\lambda, \hat{\mu})$ satisfying $\lambda\hat{\mu}=\hat{T}/\hat{\pi}$. Then, direct algebra yields that there is a unique pair $(\hat{\lambda}_{\text{opt}}, \hat{\mu}_{\text{opt}})$ satisfying both $\hat{\lambda}_{\text{opt}}\hat{\mu}_{\text{opt}}=\hat{T}/\hat{\pi}$ as well as the constraint in (\ref{constrained-opt}), and the resulting $\hat{\mu}_{\text{opt}}$ is precisely $\hat{\mu}_{\AN}$.

\subsection{Properties of adaptive normalization}
\label{subsec:variance}

We now study the asymptotic behavior of $\hat{\mu}_{\text{AN}}$, and show that its asymptotic variance generically improves on the asymptotic variance of the \Hajek{} and Horvitz--Thompson estimators. 

\paragraph{Asymptotic variance.} 
To understand the asymptotics of $$\hat{\mu}_{\AN}=\frac{\hat{S}}{n}+\frac{\hat{T}}{\hat{\pi}}\left( 1-\frac{\hat{n}}{n} \right),$$ we use the fact that each of $\hat{S}/n, \hat{T}, \hat{\pi}, \hat{n}/n$ is an average of i.i.d.\ terms. This structure implies that the vector $(\hat{S}/n, \hat{T}, \hat{\pi}, \hat{n}/n)$ satisfies a CLT, and we can then apply the delta method to obtain a CLT for $\hat{\mu}_{\AN}$. This argument produces the following result.
\begin{theorem}
	\label{thm:mu-hat-fp-clt}
	Under Assumption \ref{bounded}, the estimator $\hat{\mu}_{\AN}$ satisfies the CLT
	\begin{equation}
		\sqrt{n}(\hat{\mu}_{\AN}-\mu)\xrightarrow{d} N\left( 0, \E\left[\frac{1-p_k}{p_k}\left(Y_k-\frac{T}{\pi}\right)^2\right]\right).
		\label{eq:mu-hat-fp-clt}
	\end{equation}
Furthermore, the asymptotic variance in \eqref{eq:mu-hat-fp-clt} is always smaller than the asymptotic variances of $\hat{\mu}_{\text{HT}}$ and $\hat{\mu}_{\Hajek{}}$, and is strictly smaller except if $T=\mu\pi$ or $T=0$.
\end{theorem}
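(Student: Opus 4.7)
The plan is to handle the two claims in turn: first establish the CLT for $\hat{\mu}_{\AN}$ via the multivariate delta method, then compare the resulting asymptotic variance against those of the Horvitz--Thompson and \Hajek{} estimators.

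For the CLT, I would begin by noting that the vector $(\hat{S}/n, \hat{T}, \hat{\pi}, \hat{n}/n)$ is a sample mean of i.i.d.\ random vectors, each coordinate bounded by Assumption~\ref{bounded}, so the multivariate CLT applies around the population mean $(\mu, T, \pi, 1)$ with some limiting covariance $\Sigma$. I would then apply the delta method to the map $g(a,b,c,d) = a + (b/c)(1 - d)$, which is smooth near the limit point since $\pi \geq \delta/(1-\delta) > 0$. The key algebraic observation is that at $d = 1$ the partials $\partial_b g = (1-d)/c$ and $\partial_c g = -b(1-d)/c^2$ both vanish, so only $\partial_a g = 1$ and $\partial_d g = -T/\pi$ contribute. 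The linearization of $\sqrt{n}(\hat{\mu}_{\AN} - \mu)$ is therefore the centered average of $(I_k/p_k)(Y_k - T/\pi)$, and a one-step variance calculation using iterated expectations over $I_k \mid (Y_k, p_k) \sim \mathrm{Ber}(p_k)$ then delivers the stated form $\E[(1-p_k)/p_k (Y_k - T/\pi)^2]$.

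For the variance comparison, my plan is to view all three limiting variances as values of the single quadratic function $V(c) = \E[(1-p_k)/p_k (Y_k - c)^2]$: setting $c = 0$ recovers $\sigma_0^2$ (the Horvitz--Thompson variance from Theorem~\ref{thm:mu-hat-lambda-clt}), $c = \mu$ recovers $\sigma_1^2$ (the \Hajek{} variance), and $c = T/\pi$ recovers the variance in \eqref{eq:mu-hat-fp-clt}. The function $V(c)$ is strictly convex in $c$ since its leading coefficient $\pi = \E[(1-p_k)/p_k] > 0$ under Assumption~\ref{bounded}, and the first-order condition $\E[(1-p_k)/p_k(Y_k - c)] = 0$ identifies its unique minimizer as $c^* = T/\pi$. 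Thus $V(T/\pi) \leq V(0)$ with equality only when $T/\pi = 0$, i.e.\ $T = 0$, and $V(T/\pi) \leq V(\mu)$ with equality only when $T/\pi = \mu$, i.e.\ $T = \mu\pi$.

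The main obstacle is the delta-method bookkeeping around the vanishing derivatives: one must verify carefully that fluctuations in $\hat{T}/\hat{\pi}$ do not enter at first order, despite $\hat{T}/\hat{\pi}$ appearing as an explicit factor in $\hat{\mu}_{\AN}$. The reason is that the companion factor $(1 - \hat{n}/n)$ is itself $O_p(n^{-1/2})$, so the product $(\hat{T}/\hat{\pi})(1 - \hat{n}/n)$ inherits its $n^{-1/2}$ scaling purely from $\hat{n}/n$ and perturbations of $\hat{T}/\hat{\pi}$ around $T/\pi$ only affect the $O_p(n^{-1})$ remainder. Once this is justified, the variance comparison reduces to the elementary convex-quadratic argument above.
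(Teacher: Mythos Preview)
Your proposal is correct and matches the paper's proof closely. For the CLT, the paper does exactly what you outline: a joint CLT for $(\hat S/n,\hat T,\hat\pi,\hat n/n)$ followed by the delta method with the map $g(a,b,c,d)=a+(b/c)(1-d)$, noting that the gradient at the limit point is $(1,0,0,-T/\pi)$ because the factor $1-d$ kills the $b$- and $c$-partials. For the variance comparison the paper proceeds by direct expansion---showing, for instance, that the \Hajek{} comparison reduces to $(\mu\pi-T)^2\ge 0$---while you organize the same algebra as minimizing the strictly convex quadratic $c\mapsto \E[(1-p_k)/p_k\,(Y_k-c)^2]$ over $c\in\{0,\mu,T/\pi\}$; this is a tidier packaging of the identical computation and makes the equality cases immediate, but it is not a genuinely different argument.
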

We defer the proof of the CLT to Appendix~\ref{sec:proofs}, but give the proof of the variance comparison below.

\begin{proof}
We begin by expanding $$\E\left[ \frac{1-p_k}{p_k}\left( Y_k-\frac{T}{\pi} \right)^2 \right]=\E\left[ Y_k^2\frac{1-p_k}{p_k} \right]-\frac{1}{\pi}\E\left[ Y_k\frac{1-p_k}{p_k} \right]^2.$$

Then, we first compare with $\hat{\mu}_{\text{HT}}$. For $\hat{\mu}_{\text{HT}}$, plugging $\lambda=0$ into \eqref{eq:mu-hat-lambda-clt} gives an asymptotic variance of $\E\left[ Y_k^2\frac{1-p_k}{p_k} \right]$. Since $\frac{1}{\pi}\E\left[ Y_k\frac{1-p_k}{p_k} \right]^2\geq 0$, this is always larger than the asymptotic variance in (\ref{eq:mu-hat-fp-clt}), and is strictly larger whenever $\E\left[ Y_k\frac{1-p_k}{p_k} \right]\neq 0$.

	For $\hat{\mu}_{\text{\Hajek{}}}$, plugging $\lambda=1$ into (\ref{eq:mu-hat-lambda-clt}) gives an asymptotic variance of $\E\left[ \frac{1-p_k}{p_k}(Y_k-\mu)^2 \right]$. Algebraic manipulations show that $$\E\left[ \frac{1-p_k}{p_k}(Y_k-\mu)^2 \right]\leq \E\left[ Y_k^2\frac{1-p_k}{p_k} \right]-\frac{1}{\pi}\E\left[ Y_k\frac{1-p_k}{p_k} \right]^2\iff (\mu\pi-T)^2\leq 0,$$ so the asymptotic variance of $\hat{\mu}_{\AN}$ is always smaller than the asymptotic variance of $\hat{\mu}_{\text{\Hajek{}}}$ unless $T=\pi \mu$, in which case they are equal. 
\end{proof}

Theorem~\ref{thm:mu-hat-fp-clt} shows that $\hat{\mu}_{\AN}$ can only ever improve on $\hat{\mu}_{\text{HT}}$ and $\hat{\mu}_{\Hajek{}}$ asymptotically. For a more careful understanding of when we expect to see improvement, note that the condition $T=\pi\mu$ is equivalent to $\E\left[ \frac{1-p_k}{p_k}Y_k \right]=\E\left[ \frac{1-p_k}{p_k} \right]\E[Y_k]$, which is equivalent to $Y_k$ and $\frac{1-p_k}{p_k}$ being uncorrelated. This condition will be satisfied, for example, when $Y_k$ and $p_k$ are independent. So in some sense, $\hat{\mu}_{\lambda}$ is exploiting dependence between $Y_k$ and $p_k$ to improve performance, and cannot improve on \Hajek{} in the absence of such dependence. 

Another useful observation is that the condition $T=\mu\pi$ corresponds to $\lambda^*=1$, while $T=0$ corresponds to $\lambda^*=0$. Thus we see that the only cases where wee fail to improve on $\hat{\mu}_{\Hajek{}}$ and $\hat{\mu}_{\text{HT}}$ are those in which $\hat{\mu}_{\Hajek{}}$ and $\hat{\mu}_{\text{HT}}$ were coincidentally using the optimal value of $\lambda$ already.

\paragraph{Finite sample variance.}

Examining the form of $\hat{\mu}_{\AN}$ directly, we can see that if it has lower variance than $\hat{\mu}_{\text{HT}}$ in finite samples, it must be because the correction term $\frac{\hat{T}}{\hat{\pi}}\left( 1-\frac{\hat{n}}{n} \right)$ is negatively correlated with the first term, $\hat{S}/n=\hat{\mu}_{\text{HT}}$. However, using the iterative scheme introduced in (\ref{eq:updates}), we can give an alternative explanation for why the finite sample variance of $\hat{\mu}_{\AN}$ is smaller than that of $\hat{\mu}_{\text{HT}}$, one that builds on the result of Theorem \ref{thm:denom-convergence}. This explanation is presented in Appendix~\ref{app:subsec:proof2}, following the proof of Theorem~\ref{thm:denom-convergence}, and suggests that under Assumption~\ref{bounded} and assuming $\mu\neq 0$, every two steps of the iterations in \eqref{eq:updates} reduce finite-sample variance. This further underscores the importance of adaptive normalization as a variance reduction technique; an experimental demonstration of this phenomenon is shown in Figure~\ref{fig:iterative-variance}.

\begin{figure}[t]
	\centering
	\includegraphics[width=0.6\linewidth]{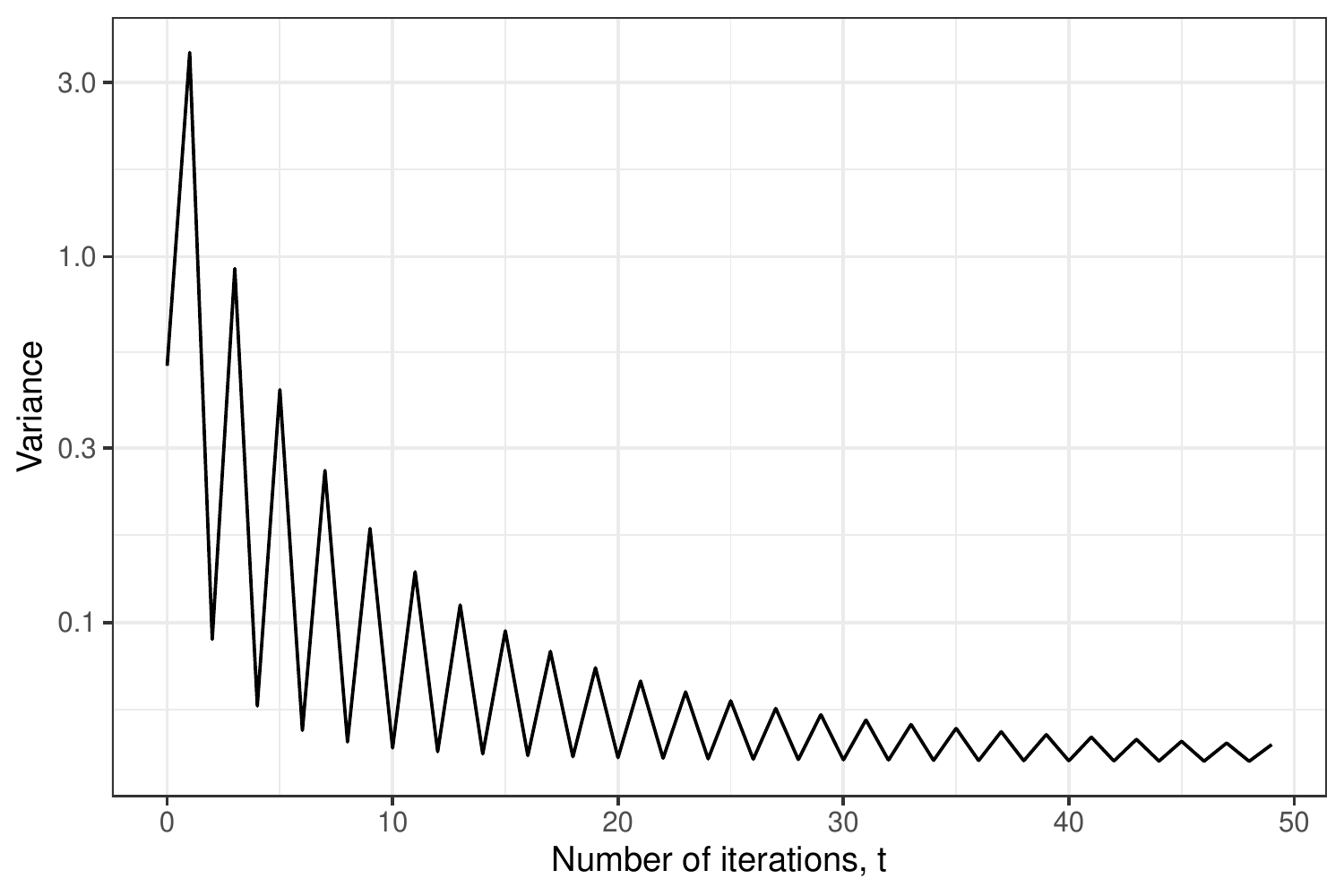}
	\caption{The variance of $\hat{\mu}^{(t)}$ as a function of $t$. Although a single iteration may increase the variance, we observe that, in this simulation, every two iterations reduce variance. The data here are generated from the normal model \eqref{eq:normal-model} in Section~\ref{sec:sims} with $n=100, \rho =0.1, \mu=-1$.}
	\label{fig:iterative-variance}
\end{figure}

\subsection{Connections to regression/control variate methods}
\label{subsec:connections}

In this section, we show how $\hat{\mu}_{\AN}$ can be understood as a control variate/regression control method, and also how it is connected to augmented IPW (AIPW) estimation.

A popular technique for variance reduction in survey sampling is the use of regression controls, and this technique is also known in the Monte Carlo literature, where regression controls are referred to as control variates. We will now show that $\hat{\mu}_{\AN}$ is equivalent to a particular choice of regression control/control variate that is known in the Monte Carlo community, but does not appear to have been widely adopted in the survey sampling and causal inference communities.

We follow the discussion in \cite{artnotes}, where the author strongly recommends using the importance weights as control variates whenever they are known, based on the results of \cite{hesterberg}. In our setting, the random variables $w_k=I_k/p_k$ play the role of the importance weights, since they have mean 1 and re-weight the observed $Y_kI_k$ to have mean $\mu$. Then, following \cite{artnotes}, we define the family of estimators 

\begin{equation}
	\hat{\mu}_{\beta}=\frac{1}{n}\sum_{k=1}^n Y_kw_k-\beta\left( \frac{1}{n}\sum_{k=1}^n w_k-1 \right).
	\label{eq:mu-hat-beta}
\end{equation} 

Each estimator in this family is unbiased, so we choose $\beta$ to minimize variance. The variance of $\hat{\mu}_{\beta}$ is $$\frac{1}{n}\var\left( Y_kw_k \right)-\frac{2\beta}{n}\cov\left( Y_kw_k, w_k \right)+\frac{\beta^2}{n}\var(w_k)$$ which is minimized for $\beta^*=\cov(Y_k, w_k)/\var(w_k)$.

Direct computation gives 
$$\cov(Y_kw_k, w_k)=\E\left[ Y_k\frac{1-p_k}{p_k} \right]=T\quad\text{and}\quad \var(w_k)=\E\left[ \frac{1-p_k}{p_k} \right]=\pi,$$
connecting the importance sampling problem to the notation of our survey sampling problem given in \eqref{eqn:moments}. Thus, estimating the numerator and denominator of $\beta^*$ separately by $\hat{T}$ and $\hat{\pi}$ respectively gives the estimator $\hat{\beta}=\hat{T}_0/\hat{\pi}_0$ of $\beta^*$, and thus the estimator $$\hat{\mu}_{\hat{\beta}}=\frac{1}{n}\sum_{k=1}^n Y_kw_k-\frac{\hat{T}}{\hat{\pi}}\left( \frac{1}{n}\sum_{k=1}^n w_k-1 \right)=\hat{\mu}_{\text{HT}}+\frac{\hat{T}}{\hat{\pi}}\left(1-\frac{\hat{n}}{n}\right)$$ of $\mu$, which is algebraically equivalent to the estimator $\hat{\mu}_{\AN}$ in \eqref{eq:mu-hat-fp} derived via adaptive normalization. We note that the version of this estimator in \cite{artnotes, hesterberg} estimates $\cov(Y_k, w_k)$ and $\var(w_k)$ directly rather than first simplifying, a minor difference with our version.

%\paragraph{Connections to AIPW estimation.}
%Thus far we have focused on probability weighting approaches to the survey sampling problem. However, there is another class of approaches that combines probability weighting with imputation. In Section~\ref{subsec:covariate} we will elaborate on how these approaches can in turn also be improved on with adaptive normalization, but for now we will show an algebraic equivalence between the estimator $\hat{\mu}_{\AN}$ in \eqref{eq:mu-hat-fp} and the AIPW estimator of \cite{aipw}.

%The standard set-up for AIPW estimation is to assume we have access to covariates $X_1,\cdots, X_n$ and that we use these covariates to learn a predictor $\hat{\mu}$ from the observed data. We then use $\hat{\mu}$ to make predictions $\hat{\mu}(X_1),\cdots, \hat{\mu}(X_n)$ of all of the $Y_k$, and estimate $\mu$ by $$\hat{\mu}_{\text{AIPW}}=\frac{1}{n}\sum_{k=1}^n \hat{\mu}(X_k)+\frac{1}{n}\sum_{k=1}^n (Y_k-\hat{\mu}(X_k))\frac{I_k}{p_k},$$ which corrects the imputed mean by an estimate of the bias of $\hat{\mu}(X_k)$.

%Now, suppose we were to use $\hat{\mu}_{\text{AIPW}}$, but rather than learning $\hat{\mu}$ from a sophisticated model, we use the trivial model $\hat{\mu}(x)=\beta$ for all $x$, i.e., we model the $Y_k$ as being constant. Then, for a given value of $\beta$, the corresponding AIPW estimator is 
$$\hat{\mu}_{\beta}=\beta+\frac{1}{n}\sum_{k=1}^n Y_k\frac{I_k}{p_k}-\beta\frac{I_k}{p_k}=\frac{1}{n}\sum_{k=1}^n \frac{Y_kI_k}{p_k}+\beta\left( 1-\frac{1}{n}\sum_{k=1}^n \frac{I_k}{p_k} \right).$$

%This has the same form as \eqref{eq:mu-hat-beta}, and minimizing over $\beta$ recovers $\hat{\mu}_{\AN}$. This is a slightly unusual way to perform AIPW estimation, since $\hat{\mu}(\cdot)$ is usually chosen to estimate $\E[Y_k\mid X_k]$, not to minimize variance, but we consider the interpretation of $\hat{\mu}_{\AN}$ as an AIPW estimator with a particular choice of model useful nonetheless. 

\paragraph{Value of adaptive normalization as a perspective.}

While the adaptively normalized estimator we have derived is algebraically equivalent to a known estimator in the Monte Carlo literature and also a kind of AIPW estimator, we feel that our derivation, based on the simple idea of combining the denominators of $\hat{\mu}_{\text{HT}}$ and $\hat{\mu}_{\Hajek{}}$, offers a valuable and unique motivation. Furthermore, our iterative analysis of $\hat{\mu}_{\AN}$ provides instructive intuition for finite-sample variance reduction. 

Finally, despite these other guises in which the adaptively normalized estimator has appeared, it has not received significant attention in the survey sampling or causal inference community. This inattention is especially surprising when we consider that, in the Monte Carlo setting, the importance weights are often not computable in closed form, and so they cannot be used as control variates. In contrast, in the causal inference setting, the treatment probabilities are often either known from the experimental design or estimated (as propensities), and so $\hat{\mu}_{\AN}$ is usually available as an immediate improvement over $\hat{\mu}_{\text{HT}}$ or $\hat{\mu}_{\Hajek{}}$. Thus it seems that $\hat{\mu}_{\AN}$ is more widely known in the community where it is of less practical value, a gap we hope to remedy. Of course, in an applied setting, practitioners will often use other control variates or variance reduction techniques regardless, and whether or not $\hat{\mu}_{\AN}$ would be preferable to those varies from application to application. But there are still a variety of template settings in causal inference where $\hat{\mu}_{\text{HT}}$ and $\hat{\mu}_{\Hajek{}}$ are used by default, and replacing them with $\hat{\mu}_{\AN}$ can be advantageous, as discussed in the sequel.

%!TEX root = ../TukeyTrotterDraft.tex

\section{Applications}
\label{sec:apps}

In this section, we consider various causal inference settings in which the Horvitz--Thompson or \Hajek{} estimators are used ``by default,'' and show how the adaptively normalized estimator acts as a free upgrade. 

\subsection{AIPW estimation}
\label{subsec:covariate}

Our first application of adaptive normalization focuses on AIPW estimation. A basic version of AIPW was first introduced for survey sampling in the 1980s \cite{gendiff2,gendiff}, and then re-discovered and significantly expanded on for ATE estimation by the causal inference community \cite{dml,aipw}, where it is also known as the doubly-robust estimator. Our approach follows the one developed in \cite{dml}, although we concentrate on a single group for now and defer ATE estimation until further on.

We discuss AIPW estimation in the more general model where we have access to covariate information. Specifically, we assume the pairs $(Y_1, X_1),\cdots, (Y_n, X_n)$ are i.i.d.\ from a distribution $\mathcal{D}$ on $\mathbb{R}\times \mathcal{X}$, and we observe all of $X_1,\cdots, X_n$ in addition to $Y_1I_1,\cdots, Y_nI_n$ where $I_k$ are independently $\text{Ber}(p_k)$ and $p_k=p(X_k)$ is a function of the covariates. Then, as in Section~\ref{subsec:connections}, the AIPW estimate of $\mu$ is 
\begin{equation}
	\hat{\mu}_{\text{AIPW}}=\frac{1}{n}\sum_{k=1}^n \hat{\mu}(X_k)+\frac{1}{n}\sum_{k=1}^n \frac{(Y_k-\hat{\mu}(X_k))I_k}{\hat{p}(X_k)},
	\label{eq:aipw}
\end{equation}
where $\hat{\mu}(X_k)$ is an estimate of $\mu(X_k)=\E[Y_k\mid X_k]$ and $\hat{p}(X_k)$ is an estimate of $p(X_k)$. We assume that $\hat{\mu}$ and $\hat{p}$ are trained on an external training dataset $\mathcal{T}_n$ of size $n$. This is equivalent to two-fold cross-fitting with a combined sample of size $2n$; our results continue to hold for an arbitrary number of folds, but we focus on this case for ease of notation. We will require that $\hat{\mu}(\cdot)$ and $\hat{p}(\cdot)$ satisfy the following standard conditions \cite{dml}:

\begin{assumption}[Consistency]
	\label{consistency}
	As $n\to \infty$, $\hat{\mu}(\cdot)$ and $\hat{p}(\cdot)$ satisfy $$\sup_{x\in\mathcal{X}}|\mu(x)-\hat{\mu}(x)|, \sup_{x\in \mathcal{X}}|\hat{p}(x)-p(x)|\xrightarrow{\P}0.$$
\end{assumption}

\begin{assumption}[Risk decay]
	\label{risk}
	We have that $\hat{\mu}(\cdot), \hat{p}(\cdot)$ satisfy $$\E\left[ (\hat{\mu}(X_k)-\mu(X_k))^2\mid \mathcal{T}_n\right]\times\E\left[ (\hat{p}(X_k)-p(X_k))^2 \mid \mathcal{T}_n\right]=o_P(n^{-1}).$$
\end{assumption}

Our focus on $\hat{\mu}_{\text{AIPW}}$ is motivated by the fact that, given Assumptions~\ref{consistency} and \ref{risk}, a cross-fitted version of the estimator $\hat{\mu}_{\text{AIPW}}$ is semi-parametrically efficient \cite{hahn}. Of course, this efficiency result means that we cannot hope to improve $\hat{\mu}_{\text{AIPW}}$ asymptotically through adaptive normalization, but we will see that we can use our ideas to to preserve its asymptotic efficiency and reduce its finite-sample MSE in simulations.

Consider the estimator in \eqref{eq:aipw}. The first term is an estimate of $\mu$ based on imputing all of the $Y_k$ by $\hat{\mu}(X_k)$, while the second term is an inverse probability weighted estimate of the bias of the $\hat{\mu}(X_k)$. In light of our work in Section \ref{sec:adapt}, we naturally propose replacing the second term with a adaptively normalized estimator, yielding the new estimator 
\small
\begin{equation}
\muanaipw
=
\frac{1}{n}\sum_{k=1}^n \hat{\mu}(X_k)+ \frac{1}{n}\sum_{k=1}^n \frac{(Y_k-\hat{\mu}(X_k))I_k}{\hat{p}(X_k)}+\frac{1}{\hat{\pi}}\left(\sum_{k=1}^n (Y_k-\hat{\mu}(X_k))\frac{1-\hat{p}(X_k)}{\hat{p}(X_k)}\frac{I_k}{\hat{p}(X_k)}\right)\left( 1-\frac{\hat{n}}{n} \right),
\end{equation}
\normalsize
where now $$\hat{\pi}=\frac{1}{n}\sum_{k=1}^n \frac{1-\hat{p}(X_k)}{\hat{p}(X_k)}\frac{I_k}{\hat{p}(X_k)},\quad \hat{n}=\sum_{k=1}^n \frac{I_k}{\hat{p}(X_k)},$$ are functions of the estimated propensities, rather than the true treatment probabilities as in Section~\ref{sec:adapt}.

The following theorem shows that this correction is asymptotically negligible; the proof appears in Appendix~\ref{subsec:equivalence}.

\begin{theorem}
	\label{thm:aipw-equivalence}
	Suppose Assumptions \ref{bounded}-\ref{risk} hold. Then $\sqrt{n}(\hat{\mu}_{\text{AIPW}}-\muanaipw)\xrightarrow{\P} 0$.
\end{theorem}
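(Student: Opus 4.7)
The plan is to show that the third (correction) term distinguishing $\muanaipw$ from $\hat\mu_{\text{AIPW}}$ is $o_P(n^{-1/2})$, so that $\sqrt{n}(\hat\mu_{\text{AIPW}} - \muanaipw)$ vanishes in probability. Interpreting the correction as $\hat\pi^{-1}\cdot U_n\cdot V_n$ with
$$U_n = \frac{1}{n}\sum_{k=1}^n (Y_k - \hat\mu(X_k))\frac{1-\hat p(X_k)}{\hat p(X_k)}\frac{I_k}{\hat p(X_k)}, \qquad V_n = 1 - \frac{\hat n}{n},$$
the strategy is to check that $\hat\pi$ is bounded away from zero in probability (so $\hat\pi^{-1} = O_P(1)$) and that the product $U_n V_n$ is $o_P(n^{-1/2})$ via a standard double-robust argument in the spirit of \cite{dml}.

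First I would verify $\hat\pi\xrightarrow{P}\pi>0$: by Assumption~\ref{consistency} and Assumption~\ref{bounded}, $\hat p$ is eventually uniformly bounded in $[\delta/2,1-\delta/2]$, so the summands defining $\hat\pi$ are uniformly bounded and close to the analogous quantities built from $p$; a law of large numbers then gives $\hat\pi \to \pi$. Next, I would condition on the training data $\mathcal{T}_n$ and decompose each factor into its conditional expectation plus a centered fluctuation: $U_n = m_U + \tilde U_n$ and $V_n = m_V + \tilde V_n$ with $m_U = \E[U_n\mid\mathcal{T}_n]$ and $m_V = \E[V_n\mid\mathcal{T}_n]$. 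A direct calculation using $\E[I_k\mid X_k]=p(X_k)$ and $\E[Y_k\mid X_k]=\mu(X_k)$ yields
\begin{align*}
m_V &= \E\!\left[\frac{\hat p(X_k) - p(X_k)}{\hat p(X_k)} \,\Big|\, \mathcal{T}_n\right],\\
m_U &= \E\!\left[(\mu(X_k) - \hat\mu(X_k))\frac{(1 - \hat p(X_k))p(X_k)}{\hat p(X_k)^2}\,\Big|\, \mathcal{T}_n\right].
\end{align*}
Cauchy--Schwarz and boundedness give $|m_V| \lesssim \sqrt{\E[(\hat p - p)^2\mid\mathcal{T}_n]}$ and $|m_U| \lesssim \sqrt{\E[(\hat\mu - \mu)^2\mid\mathcal{T}_n]}$, so by Assumption~\ref{risk},
$$|m_U||m_V| \le \mathrm{const}\cdot\sqrt{\E[(\hat\mu-\mu)^2\mid\mathcal{T}_n]\cdot\E[(\hat p-p)^2\mid\mathcal{T}_n]} = o_P(n^{-1/2}).$$

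The fluctuations $\tilde U_n$ and $\tilde V_n$ are each averages of $n$ bounded, mean-zero i.i.d.\ summands conditional on $\mathcal{T}_n$, so each is $O_P(n^{-1/2})$ by Chebyshev. Assembling the four cross terms: $\tilde U_n \tilde V_n = O_P(n^{-1}) = o_P(n^{-1/2})$; each of $m_U\tilde V_n$ and $\tilde U_n m_V$ is $o_P(1)\cdot O_P(n^{-1/2}) = o_P(n^{-1/2})$ by Assumption~\ref{consistency}; and $m_U m_V = o_P(n^{-1/2})$ from the above. Altogether $U_n V_n = o_P(n^{-1/2})$, hence $\sqrt{n}\,\hat\pi^{-1} U_n V_n \xrightarrow{P} 0$. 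The main obstacle is the $m_U m_V$ cross term, where the full strength of Assumption~\ref{risk} (rather than just consistency) is essential; this is the familiar double-robustness phenomenon, and hinges on the clean form of the conditional expectations above, in which the residual $Y_k - \hat\mu(X_k)$ gets replaced by the bias $\mu(X_k) - \hat\mu(X_k)$ once we take conditional expectation of $Y_k$ given $X_k$.
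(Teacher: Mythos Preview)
Your proposal is correct. Both proofs reduce to showing the correction term $\hat\pi^{-1}U_nV_n$ is $o_P(n^{-1/2})$ via a four-way cross-term expansion, but the two split the factors differently. The paper first absorbs the weight $(1-\hat p)/\hat p^2$ into a constant (via overlap and consistency of $\hat p$) and then uses an \emph{oracle decomposition}: it writes $Y_k-\hat\mu(X_k)=(Y_k-\mu(X_k))+(\mu(X_k)-\hat\mu(X_k))$ and $1-\hat n/n=(1-\tfrac{1}{n}\sum I_k/p(X_k))+\tfrac{1}{n}\sum(I_k/p(X_k)-I_k/\hat p(X_k))$, so that the ``noise'' pieces are mean-zero i.i.d.\ sums and the ``bias'' pieces are handled by sup-norm consistency and, for their product, by a Cauchy--Schwarz/conditioning argument that invokes Assumption~\ref{risk}. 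You instead condition on $\mathcal{T}_n$ and split each factor into its exact conditional mean plus a centered fluctuation, obtaining closed-form biases $m_U,m_V$ that isolate $\hat\mu-\mu$ and $\hat p-p$ respectively. The ingredients are identical (Chebyshev for the fluctuations, sup-norm consistency for the individual biases, the risk-product assumption for $m_Um_V$), and both routes yield the same $o_P(n^{-1/2})$ rates on all four cross terms. Your version is a little more streamlined and makes the double-robustness structure explicit in the form of $m_U$ and $m_V$; the paper's version is slightly more hands-on and avoids computing conditional means, at the cost of an extra Cauchy--Schwarz step on the product of the two bias sums.
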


Crucially, Theorem \ref{thm:aipw-equivalence} implies that $\muanaipw$ has the same asymptotic variance as $\hat{\mu}_{\text{AIPW}}$ under the given conditions. These are exactly the  conditions required for the efficiency of $\hat{\mu}_{\text{AIPW}}$, so we conclude that $\muanaipw$ is efficient whenever $\hat{\mu}_{\text{AIPW}}$ is. On the other hand, in finite samples, the additional correction term in $\muanaipw$ is negatively correlated with the other terms, and reduces variance, a fact we demonstrate empirically in Section \ref{sec:sims}. Taken together, these observations suggest that $\muanaipw$ should typically be preferred to $\hat{\mu}_{\text{AIPW}}$.

Our proposal is by no means the only attempt to improve on $\hat{\mu}_{\text{AIPW}}$. However, our proposal differs from existing work in two important ways. First, many other proposals are meant to improve on AIPW in the case when $\hat{\mu}$ is misspecified, which is not our motivation here, although we do explore this setting in simulation and see that $\muanaipw$ handles misspecification better than $\hat{\mu}_{\text{AIPW}}$. Second, and more importantly, other proposed estimators are significantly more complex, and sometimes requires solving certain estimating equations numerically. In contrast, $\muanaipw$ has an explicit, simple, closed form, and computing it requires nothing more than what is required to compute $\hat{\mu}_{\text{AIPW}}$.

\subsection{ATE estimation}
\label{subsec:ate}

Another setting in which IPW estimators play a key role is in the problem of average treatment effect (ATE) estimation. Our model for ATE estimation is that there are triplets $$(Y_1(1), Y_1(0), p_1),\ldots, (Y_n(1), Y_n(0), p_n)$$ drawn i.i.d.\ from a distribution $\mathcal{D}$ on $\R\times \R\times [0,1]$ satisfying Assumption \ref{bounded} for both $Y_k(1)$ and $Y_k(0)$. We assume that we observe $Y_1(I_1),\ldots, Y_n(I_n)$ where $I_k\mid p_k\sim \text{Ber}(p_k)$, and want to estimate the ATE $\tau=\E[Y_k(1)-Y_k(0)]$ from these observations. There are a variety of general approaches to estimating $\tau$ \cite{imbens}; the one relevant to our work is the approach of using IPW estimators to separately estimate $\mu_1=\E[Y_k(1)]$ and $\mu_0=\E[Y_k(0)]$, and then subtracting the two estimates. 

Since the problems of estimating $\mu_1$ and $\mu_0$ are survey sampling problems, they are often estimated with Horvitz--Thompson or \Hajek{} estimators. Continuing our general theme, why not use $\hat{\mu}_{\AN}$ instead? Thus, we propose 
\begin{equation}
\hat{\tau}_{\AN}=\hat{\mu}_{1,\AN}-\hat{\mu}_{0,\AN},
\label{eq:ate-an}
\end{equation}
where $\hat{\mu}_{1,\AN}$ is the adaptively normalized estimator based on $Y_1(1)I_1,\ldots, Y_n(1)I_n$ and $\hat{\mu}_{0, \AN}$ is the same based on $Y_1(0)(1-I_1),\ldots, Y_n(0)(1-I_n)$. 

This is not the only possible use of adaptive normalization in this setting. Note that $\hat{\mu}_{1, \AN}$ and $\hat{\mu}_{0, \AN}$ here are designed to minimize variance within each group, even though our target estimand is the difference between groups. It is natural to ask what happens if we instead attempt to directly minimize the variance of the estimated difference between the two groups; we discuss this approach in detail and compare it to $\hat{\tau}_{\AN}$ in Appendix~\ref{sec:joint}. We conclude that estimating the two groups separately and taking the difference is reasonable and generally advised. 

Similarly, rather than using AIPW estimation in each group for ATE estimation, we can also use adaptively normalized AIPW estimation in each group instead. Indeed, we will show in Section~\ref{sec:sims} that these estimators improve on both of the usual estimators in simulation.

\subsection{Policy evaluation}
\label{subsec:policy}

The final setting in which we propose replacing an IPW estimator with an adaptively normalized estimator is policy evaluation. In this context, we wish to learn a statistical rule for assigning treatments to a population that maximizes the total welfare. We follow the regret minimization framework introduced by \cite{manski} and build directly on the work of \cite{kitagawa}, although much of our notation is drawn from \cite{wager}.

Formally, suppose that individual $k$ has potential outcomes $Y_k(1)$ and $Y_k(0)$ depending on whether or not they receive a treatment, and we wish to learn a policy $\pi$ that maps known covariates $X_k\in \mathcal{X}$ to a treatment assignment in $\{0,1\}$. The value of a policy $\pi$ (note that we are no longer using $\pi$ to represent moments of the unknown distribution as in Section~\ref{sec:adapt}) is $V(\pi)=\E[Y_i(\pi(X_i))]$, the average outcome of an individual treated using the policy. Assuming we are restricted to a class of policies $\Pi$, the best possible policy is $\pi^*=\arg\max_{\pi'\in \Pi} V(\pi')$, and we evaluate a policy based on its regret $V(\pi^*)-V(\pi)$. 

Ideally we would learn a policy by maximizing $V$, but we cannot compute $V$. Instead we estimate $V$ from historical data $(Y_1(I_1), X_1),\cdots, (Y_n(I_n), X_n)$ where $I_k\sim \text{Ber}(p(X_k))$ is an indicator of whether or not $Y_k$ received the treatment, and we assume that Assumption~\ref{bounded} holds. Under the assumption that the propensity map $p(\cdot)$ is known, \cite{kitagawa} proposed $$\hat{V}_{\text{IPW}}(\pi)=\frac{1}{n}\sum_{k=1}^n \frac{\mathbf{1}\left\{ I_k=\pi(X_k) \right\}Y_k}{\P(I_k=\pi(X_k)\mid X_k)},$$ as an unbiased estimate of $V(\pi)$.

Of course, at this point we can sense that a better estimate of $V$ would be the adaptively normalized 
\begin{equation}
	\hat{V}_{\text{AN}}=\hat{V}_{\text{IPW}}(\pi)+\frac{\sum_{k=1}^n Y_k\frac{1-\P(I_k=\pi(X_k)\mid X_k)}{\P(I_k=\pi(X_k)\mid X_k)}\frac{\mathbf{1}\left\{ I_k=\pi(X_k) \right\}}{\P(I_k=\pi(X_k)\mid X_k)}}{\sum_{k=1}^n\frac{1-\P(I_k=\pi(X_k)\mid X_k)}{\P(I_k=\pi(X_k)\mid X_k)}\frac{\mathbf{1}\left\{ I_k=\pi(X_k) \right\}}{\P(I_k=\pi(X_k)\mid X_k)}}\left( 1-\frac{1}{n}\sum_{k=1}^n \frac{\mathbf{1}\left\{ I_k=\pi(X_k) \right\}}{\P(I_k=\pi(X_k)\mid X_k)}\right).
	\label{eq:v-hat}
\end{equation}

The main result of \cite{kitagawa} showed that maximizing $\hat{V}_{\text{IPW}}(\pi)$ yields a policy whose regret decays at a rate of $1/\sqrt{n}$. We now give an analogous result for $\hat{V}_{\text{AN}}$.
\begin{theorem}
\label{thm:regret-bound}
	Let $\hat{\pi}_{\text{AN}}=\argmin_{\pi\in \Pi} \hat{V}_{\text{AN}}(\pi)$ and suppose that $\Pi$ has finite VC-dimension. Then $$\E\left[ V(\pi^*)-V(\hat{\pi}_{\text{AN}}) \right]\leq O\left( \frac{M}{\delta}\sqrt{\frac{\text{VC}(\Pi)}{n}} \right).$$
\end{theorem}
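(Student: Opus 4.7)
The plan is to mirror the Kitagawa--Tetenov argument \cite{kitagawa} for the pure IPW objective. Writing $\hat{\pi}_{\text{AN}}$ for the maximizer of $\hat V_{\text{AN}}$ over $\Pi$, the standard ``optimizer plus triangle inequality'' step gives
$$V(\pi^*) - V(\hat \pi_{\text{AN}}) \leq 2 \sup_{\pi \in \Pi} \bigl|\hat V_{\text{AN}}(\pi) - V(\pi)\bigr|,$$
so it suffices to bound the expected uniform error by $O((M/\delta)\sqrt{\text{VC}(\Pi)/n})$. I would then split
$$\hat V_{\text{AN}}(\pi) - V(\pi) = \bigl(\hat V_{\text{IPW}}(\pi) - V(\pi)\bigr) + R(\pi),$$
where $R(\pi)$ denotes the correction term appearing in \eqref{eq:v-hat}, and handle each piece separately.

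For the first piece I would invoke \cite{kitagawa} directly: the function class $\{(X,I,Y) \mapsto Y\,\mathbf{1}\{I = \pi(X)\}/\P(I = \pi(X) \mid X)\}_{\pi \in \Pi}$ has envelope $M/\delta$ by Assumption~\ref{bounded}, and its VC-subgraph complexity is controlled by $\text{VC}(\Pi)$, so symmetrization together with the usual VC entropy bound yields
$$\E \sup_{\pi \in \Pi} \bigl|\hat V_{\text{IPW}}(\pi) - V(\pi)\bigr| \leq O\!\left( \frac{M}{\delta} \sqrt{\frac{\text{VC}(\Pi)}{n}} \right).$$

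For the correction, the key deterministic observation is that the numerator and denominator of the ratio appearing inside $R(\pi)$ are built from the same non-negative weights $\tfrac{1 - q_k(\pi)}{q_k(\pi)^2}\mathbf{1}\{I_k = \pi(X_k)\}$, where $q_k(\pi) = \P(I_k = \pi(X_k) \mid X_k)$. Since $|Y_k| \leq M$, this pointwise forces $|\hat T(\pi)| \leq M \hat \pi(\pi)$ and hence $|\hat T(\pi)/\hat \pi(\pi)| \leq M$ almost surely (with the convention $0/0 = 0$), so that $|R(\pi)| \leq M \cdot |1 - \hat n(\pi)/n|$. The residual process $\hat n(\pi)/n - 1 = n^{-1}\sum_k \bigl(\mathbf{1}\{I_k = \pi(X_k)\}/q_k(\pi) - 1\bigr)$ is an average of terms with conditional mean zero given $X_k$ and absolute value bounded by $1/\delta$, indexed by a class which, since it depends on $\pi$ only through the binary-valued $\pi(X)$, inherits finite VC complexity from $\Pi$. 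The same symmetrization bound therefore yields $\E \sup_\pi |1 - \hat n(\pi)/n| \leq O((1/\delta)\sqrt{\text{VC}(\Pi)/n})$, and combining with the IPW bound via the triangle inequality delivers the claimed rate.

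The main obstacle I anticipate is the uniform-in-$\pi$ control of the ratio $\hat T(\pi)/\hat \pi(\pi)$: a naive approach would require a uniform lower bound on $\hat \pi(\pi)$, which is delicate because some policies might align with very few observed treatment assignments. The shared-weights trick above sidesteps this completely and is really the one piece of the argument that goes beyond the $\hat V_{\text{IPW}}$ analysis. A secondary, more mechanical point is to verify that the policy-indexed function classes above inherit their VC complexity from $\Pi$, but this follows from standard preservation properties of VC classes under composition with indicator operations and fixed measurable functions.
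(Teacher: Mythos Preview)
Your proposal is correct and matches the paper's proof essentially step for step: the same optimizer-plus-triangle decomposition, the same splitting into the IPW error (handled by \cite{kitagawa}) and the correction term $R(\pi)=\hat V_{\text{AN}}(\pi)-\hat V_{\text{IPW}}(\pi)$, the same deterministic ratio bound $|\hat T(\pi)/\hat\pi(\pi)|\leq M$ from the shared non-negative weights, and the same VC/empirical-process control of $\sup_\pi |1-\hat n(\pi)/n|$. Your anticipated ``main obstacle'' and its resolution via the shared-weights trick is exactly the content of the paper's equation~\eqref{eq:ratio-bound}.
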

Thus, $\hat{V}_{\AN}$ preserves the theoretical guarantees associated with $\hat{V}_{\text{IPW}}$, and we verify in the next section that policies learned with $\hat{V}_{\text{AN}}$ are closer to the optimal policy. One drawback of $\hat{V}_{\text{AN}}$ however, is that $\hat{V}_{\text{IPW}}$ can be interpreted as a weighted classification objective \cite{wager, kitagawa}, facilitating optimization, while $\hat{V}_{\text{AN}}$ unfortunately does not have such an interpretation. Finally, we note that \cite{wager} introduced the idea of policy learning based on AIPW estimation instead of IPW estimation; we consider the possibility of extending our ideas from Section~\ref{subsec:covariate} in this direction an exciting opportunity for future work.

%!TEX root = ../TukeyTrotterDraft.tex

\section{Experiments}
\label{sec:sims}

In this section, we use a series of experiments to evaluate the empirical performance of adaptive normalization. The goal of our experiments is to validate that applying adaptive normalization for mean estimation as well as in the various settings of Section~\ref{sec:apps} actually pays the dividends we expect. 

The first portion of our experiments are focused on survey sampling as discussed in Section~\ref{sec:adapt} and ATE estimation as discussed in Section~\ref{subsec:ate}. Each of these serves as an opportunity to compare $\hat{\mu}_{\text{AN}}$ to $\hat{\mu}_{\text{HT}}$ and $\hat{\mu}_{\Hajek{}}$, and also an opportunity to compare $\muanaipw$ of Section~\ref{subsec:covariate} to $\hat{\mu}_{\text{AIPW}}$, since the AIPW estimator can be used for both survey sampling and ATE estimation. The second portion of our experiments focus on policy estimation, as introduced in Section~\ref{subsec:policy}, where we show that minimizing $\hat{V}_{\text{AN}}$ learns better policies than minimizing $\hat{V}_{\text{IPW}}$. Additional simulations, including real-data experiments, appear in Appendix~\ref{sec:sims_app}.

\subsection{Survey and ATE experiments}
\label{subsec:models}

\paragraph{Data generating models.}
The goal of our simulations is to compare the various proposed estimators to each other and existing estimators in different settings. To this end, we consider two models, one of which represents a fairly benign setting in which $Y_k$ and $p_k$ are approximately jointly normal, and estimating $\mu$ is not too difficult. The second model represents a more pathological setting, where $Y_k$ and $p_k$ have an extremely strong negative relationship, and so estimating $\mu$ becomes significantly more challenging. We now describe these models in detail.

The first model, which we refer to as our \emph{normal model}, is a model in which we generate
\begin{equation}
	(Y_k, \tilde{p}_k)\sim N\left( \left[ \begin{array}{c} \mu\\ 0\end{array} \right], \left[ \begin{array}{cc} 1&\theta\\ \theta&1 \end{array} \right] \right),\quad p_k=\Phi^{-1}(\tilde{p}_k),\label{eq:normal-model}
\end{equation}
where $\Phi$ is the normal CDF. This model produces pairs $(Y_k, p_k)$ for which $p_k$ is marginally uniform on $[0,1]$ and the correlation between $Y_k$ and $p_k$ is approximately $\theta$. The correlation is not exactly $\theta$, owing to the non-linearity of $\Phi^{-1}$, but the parameter $\theta$ still has the desired effect of making $Y_k$ and $p_k$ more or less correlated. Additionally, we truncate both $p_k$ and $Y_k$ so that they satisfy Assumption \ref{bounded} with $M=50$ and $\delta=0.01$.

The second model we consider is our \emph{power law model}, where we generate 
\begin{equation}
p_k\sim\text{Uni}(\epsilon, 1),\quad Y_k=p_k^{-\alpha}+Z_k,\quad Z_k\sim N(0, \sigma^2).
\label{eq:powerlaw-model}
\end{equation}
This model captures settings where $Y$ and $p$ have a non-linear negative relationship. When $\alpha=0$, $Y$ and $p$ are independent, and as $\alpha$ increases, the strength of the negative relationship increases. The parameter $\sigma^2$ is an arbitrary noise parameter which we set to $\sigma^2=9$. As before, we truncate the $p_k$ and $Y_k$ to satisfy Assumption \ref{bounded} with $M=10^6$ and $\delta=10^{-3}$.

For problems with covariates, we incorporate covariate information $X_k$ into either model by taking $X_k=p_k$, so that the propensity mapping is the identity. For AIPW estimation, we assume the propensity map is known and estimated exactly, while $\E[Y_k\mid X_k]$ is estimated with linear regression and two-fold cross-fitting. This is nearly well-specified in the normal model, where where $Y_k$ and $p_k$ are approximately multivariate normal, but is severely misspecified in the power law model. For ATE estimation problems, we take $Y_k(1)=Y_k$ and $Y_k(0)=Y_k-\tau$ for a constant treatment effect of $\tau=0.5$.

\begin{figure}[t]
	\centering
	\begin{tabular}{cc}\includegraphics[width=0.48\linewidth]{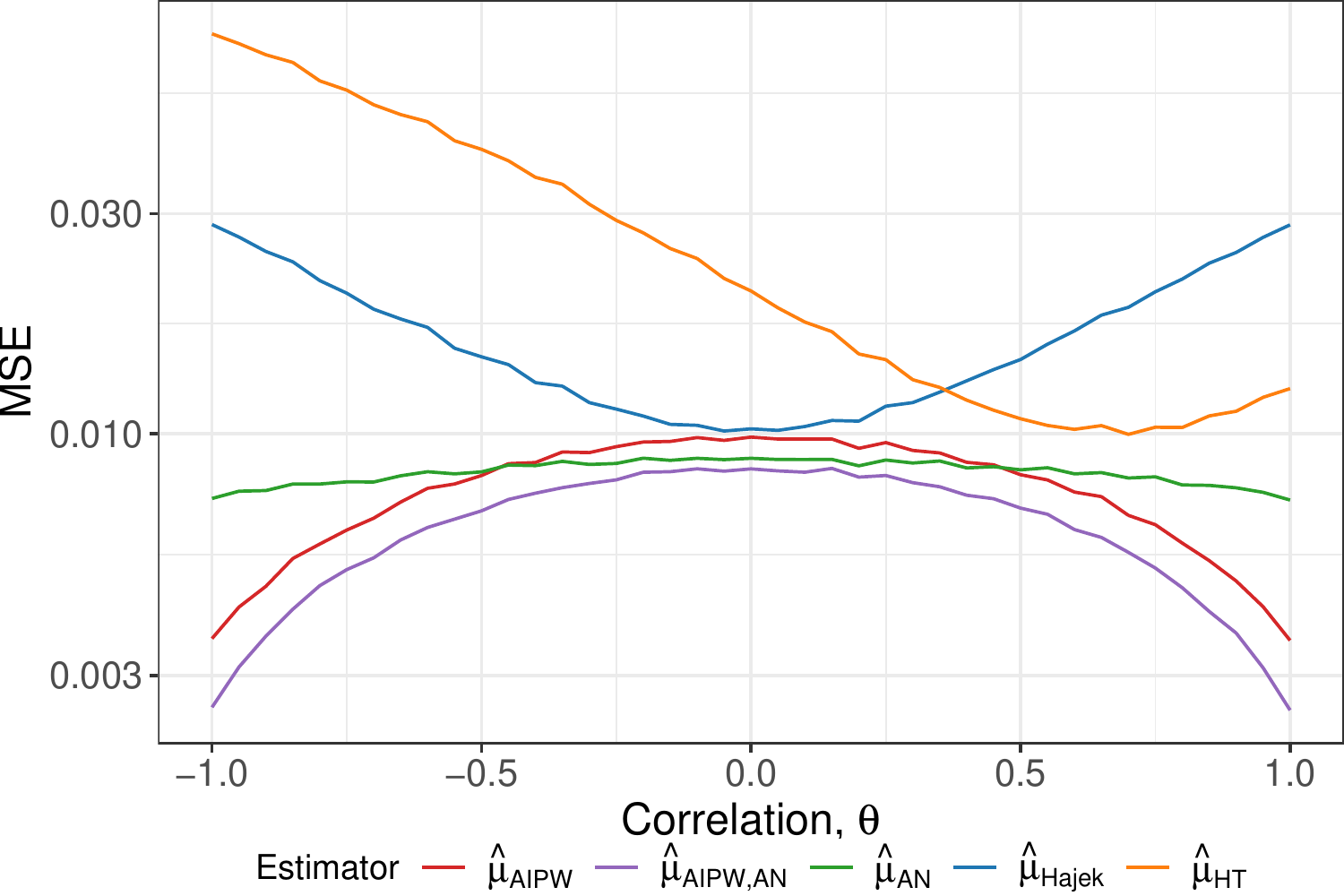}&\includegraphics[width=0.48\linewidth]{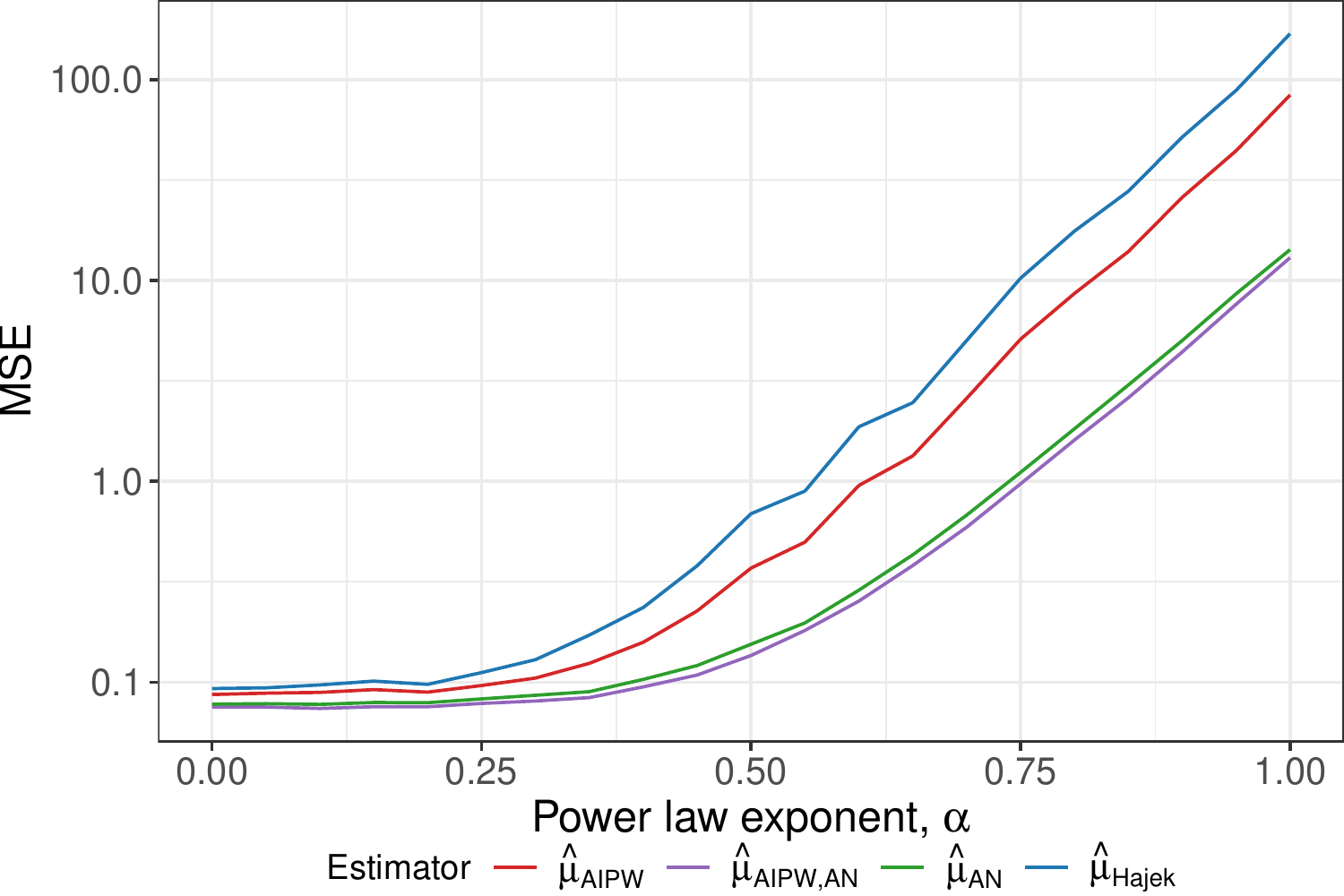}\end{tabular}
	\caption{Left: the estimated MSE of all discussed estimators on data generated from the normal model \eqref{eq:normal-model} with $n=500$ and $\mu=1$ for different values of $\theta$. Right: the estimated MSE of the same estimators on data generated from the power law model \eqref{eq:powerlaw-model} with $n=500$ and $\alpha$ varying (the value of $\mu$ depends on $\alpha$). All MSEs are averaged across 20,000 trials. Across both models and the full range of parameter values, we see that the adaptively normalized estimators improve on the traditional estimators.}
	\label{fig:mse-compare}
\end{figure}

\paragraph{Survey sampling simulations.}
We simulate our data from both the normal model and the power law model, fixing $n=500$ but allowing the parameters $\theta$ and $\alpha$ to vary so as to explore a range of different possible relationships between $Y$ and $p$. The results are shown in Figure \ref{fig:mse-compare}, omitting $\hat{\mu}_{\text{HT}}$ in the power law model to preserve the scale of the plot. We split our discussion into two portions, focusing separately on the estimators that do and do not use covariate information.

Among the estimators that do not use covariate information, namely $\hat{\mu}_{\text{HT}}$, $\hat{\mu}_{\Hajek{}}$, and $\hat{\mu}_{\AN}$, it is clear that $\hat{\mu}_{\AN}$ is the best choice. In the normal model, the MSE of both $\hat{\mu}_{\text{HT}}$ and $\hat{\mu}_{\Hajek{}}$ is significantly affected by the strength of the correlation, as controlled by the parameter $\theta$, while $\hat{\mu}_{\AN}$ is able to do well regardless of the correlation structure in the problem, and always has the lowest MSE in this group. In the power law model, which represents a more challenging problem, the difference is even more stark: the MSE of $\hat{\mu}_{\AN}$ is, for larger values of $\alpha$, an order of magnitude lower than that of the other two estimators.

Among the estimators that do use covariate information, namely $\hat{\mu}_{\text{AIPW}}$ and $\muanaipw$, there a small but noticeable difference in the normal model. However, in the power law model, where the estimates of $\E[Y_k\mid X_k]$ are misspecified, there is a more substantial difference. Because of the misspecification, $\hat{\mu}_{\text{AIPW}}$ offers only the slightest of improvements over $\hat{\mu}_{\Hajek{}}$. In contrast, $\muanaipw$ is noticeably better than either of $\hat{\mu}_{\text{AIPW}}$ or $\hat{\mu}_{\Hajek{}}$, but is essentially identical to $\hat{\mu}_{\AN}$ (again, the lack of improvement from $\hat{\mu}_{\AN}$ to $\muanaipw$ is caused by model misspecification). This suggests that the additional care taken in $\hat{\mu}_{\text{AIPW}, \AN}$ to estimate the bias of $\hat{\mu}(\cdot)$ is especially important when $\hat{\mu}(\cdot)$ is misspecified and this bias is large.

\paragraph{ATE simulations.}

\begin{figure}[t]
	\centering
	\begin{tabular}{cc}\includegraphics[width=0.48\linewidth]{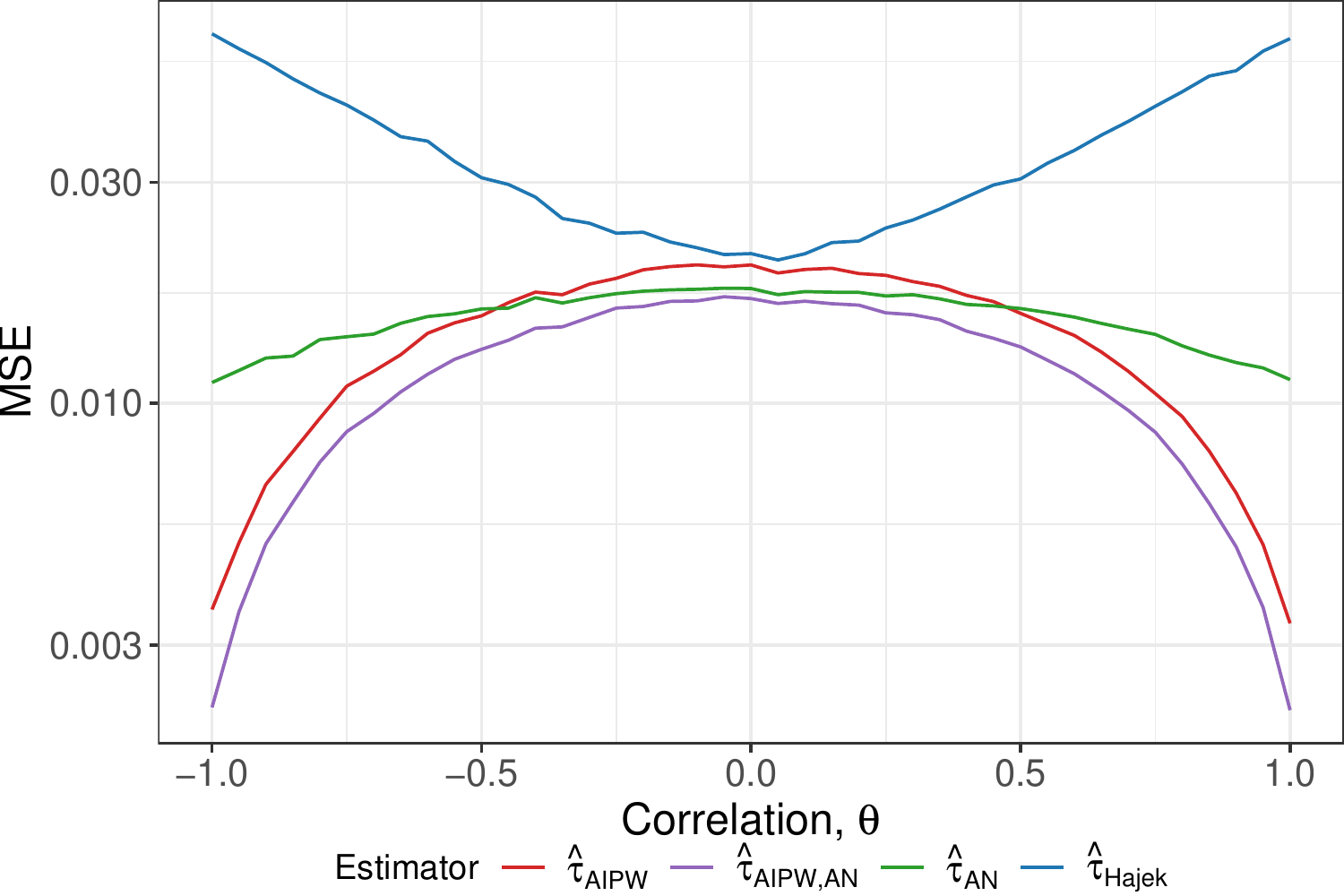}&\includegraphics[width=0.48\linewidth]{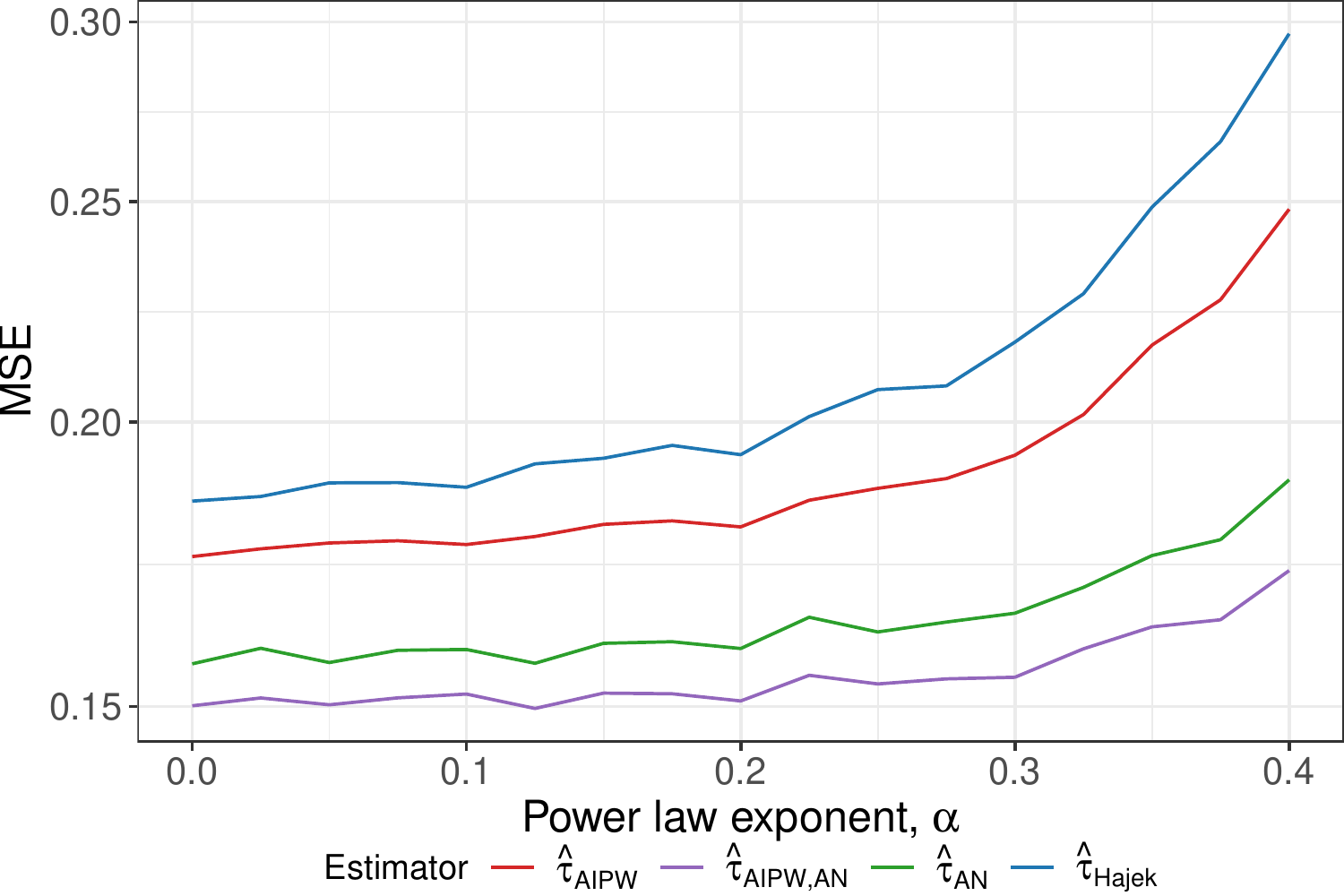}\end{tabular}
	\caption{Left: estimated MSE for estimators applied to data from the normal model \eqref{eq:normal-model} for ATE estimation with $n=500$, $\mu=1$, and $\tau=0.5$ for different values of $\theta$. Right: the estimated MSE of the same estimators applied to data from the power law model \eqref{eq:powerlaw-model} for ATE estimation with $n=500$ and $\tau=0.5$ for different value of $\alpha$. All MSEs are averaged over 20,000 trials. In all cases, adaptive normalization improves MSE.} 
	\label{fig:ate-compare}
\end{figure}

As above, we simulate in both models, fixing $n=500$ and $\tau=0.5$ while varying $\theta$ and $\alpha$. Here, we omit $\hat{\tau}_{\text{HT}}$ from both plots, since its MSE is so large as to distort the scale of the plot. Our main observations from the survey sampling setting largely carry over: $\hat{\tau}_{\AN}$ is consistently better than $\hat{\tau}_{\Hajek{}}$, and $\hat{\tau}_{\text{AIPW}, \AN}$ is slightly better than $\hat{\tau}_{\text{AIPW}}$ when the model for $Y_k$ is well-specified and significantly better when the model for $Y_k$ is misspecified. 

\subsection{Policy evaluation experiments}

Our second set of simulations compares the two policy learning objectives, $\hat{V}_{\text{IPW}}$ and $\hat{V}_{\text{AN}}$ of Section~\ref{subsec:policy}. 

\paragraph{Data generating model.}
Due to the slightly different set-up of the problem, we deviate from the models introduced above and instead use a model similar to the one used in the simulation studies of \cite{wager}. For each $k$, we take 
\begin{equation}
	X_k\sim N(0, I_{3\times 3}),\quad p(X_k)=\frac{1}{1+\exp(-X_{k,1})},\quad Y_k(1)=X_{k,1}, Y_k(0)=Y_k(1)-\text{sgn}(X_{k,2}+X_{k,3}),
	\label{eq:policy-model}
\end{equation}
where $X_{k,i}$ is the $i^{\text{th}}$ entry of $X_k$.

In general the problem of minimizing $\hat{V}_{\text{IPW}}$ or $\hat{V}_{\text{AN}}$ is non-convex. To obtain a tractable problem, we restrict the class $\Pi$ to be the set of all policies of the form $\pi(X_k)=\mathbf{1}\{X_{k,2}>T\}$ for $T\in[-1,1]$. Note that this is a misspecified setting, in the sense that the optimal policy $\pi(X_k)=\mathbf{1}\{X_{k,2}+X_{k,3}>0\}$ is not in the class we are optimizing over.

\paragraph{Policy learning simulations.}

For each of $\hat{V}_{\text{IPW}}$ and $\hat{V}_{\text{AN}}$, we generate a sample of size $n$ from \eqref{eq:policy-model} and learn a cut-off $T$ that minimizes the corresponding policy objective by grid search on $T\in[-1,1]$. The average threshold learned, for a range of values of $n$, is shown in Table~\ref{tab:policy}.

The optimal policy is to threshold at $T=0$, and so we take this as a point of comparison. We see that the thresholds learned from minimizing $\hat{V}_{\text{AN}}$ are consistently closer to the optimal threshold of zero than those learned by minimizing $\hat{V}_{\text{IPW}}$, and that the gap between the performance of the two objectives is consistent across the range of values of $n$ we consider. We note that, although the improvements we obtain are small in absolute terms, this may be a consequence of the simplicity of the policy class we consider. In particular, \cite{wager} finds little improvement from using AIPW estimators for policy learning of depth-1 trees, but more substantial improvements for policy learning of depth-2 trees, suggesting that $\hat{V}_{\text{AN}}$ may also be a more substantial improvement over $\hat{V}_{\text{IPW}}$ when learning policies from more complex classes. 

\begin{table}[t]
	\centering
	\begin{tabular}{lllll}
		\toprule
		&\multicolumn{4}{c}{\emph{Sample size}}\\
		\cmidrule{2-5}
		\emph{Objective}&$n=250$&$n=500$&$n=750$&$n=1000$\\
		\midrule
		$\hat{V}_{\text{IPW}}$&$-0.057\pm 0.0013$&$-0.035\pm 0.0011$&$-0.026\pm 0.0011$&$-0.020\pm 0.0014$\\
		$\hat{V}_{\text{AN}}$&$-0.039\pm 0.0018$&$-0.015\pm 0.0014$&$-0.010\pm 0.0009$&$-0.004\pm 0.0009$\\
		\bottomrule
	\end{tabular}
	\caption{Thresholds learned by optimizing $\hat{V}_{\text{IPW}}$ and $\hat{V}_{\text{AN}}$ on samples of different sizes of data generated according to \eqref{eq:policy-model}. Each entry is the average threshold chosen over 100,000 trials and standard errors are over 10 replications. The optimal policy is to threshold at 0, so we see that minimizing $\hat{V}_{\text{AN}}$ consistently learns better thresholds.}
	\label{tab:policy}
\end{table}

\section{Discussion}

In this paper we study \emph{adaptive normalization} for IPW estimators: rather than normalizing by the sample size or by the sum of the weights, we propose normalizing by a data-dependent affine combination of the two. For mean estimation in survey sampling, our proposed estimator is algebraically equivalent to a control variate method from the Monte Carlo literature that has not been studied in the causal inference literature before. We further develop the adaptive normalization idea in causal inference settings by using it to improve on the AIPW estimators, to propose new estimators for the ATE in randomized experiments, and new objectives for policy learning.

There are several possible future directions for this work. One is to relax the assumption that the treatment indicators $I_k$ are independent, which is unrealistic in many observational datasets and directly violated in certain experimental designs. However, if the correlation structure of the $I_k$ is known or estimated, analogues of our limit theorems and estimators could be developed. In a different direction, there are many places within and beyond causal inference where inverse probability weighted estimators are used in context-specific ways, such as off-policy evaluation on networks \cite{seeding}, recommender system evaluation \cite{schnabel2016recommendations}, in learning to rank problems \cite{oosterhuis2021unifying}, and inference from bandits \cite{hadad2019confidence,bibaut2021post}. Developing and applying similar ideas in these contexts suggests many promising lines of future work.

%\begin{ack}
\subsection*{Acknowledgements}
We thank Guillaume Basse, Alex Chin, Dean Eckles, Kevin Guo, Ramesh Johari, Brian Karrer, and Fredrik S\"avje for helpful discussions and feedback on early versions of this work. This work was partially supported by ARO award 73348-NS-YIP.
%
%Use unnumbered first level headings for the acknowledgments. All acknowledgments
%go at the end of the paper before the list of references. Moreover, you are required to declare
%funding (financial activities supporting the submitted work) and competing interests (related financial activities outside the submitted work).
%More information about this disclosure can be found at: \url{https://neurips.cc/Conferences/2021/PaperInformation/FundingDisclosure}.
%
%Do {\bf not} include this section in the anonymized submission, only in the final paper. You can use the \texttt{ack} environment provided in the style file to autmoatically hide this section in the anonymized submission.
%\end{ack}

\bibliographystyle{plain}

\bibliography{TukeyTrotterDraft}

\appendix

%!TEX root = ../TukeyTrotterDraft.tex

\section{Technical proofs}
\label{sec:proofs}

\subsection{Proofs of Theorems \ref{thm:mu-hat-lambda-clt} and \ref{thm:mu-hat-fp-clt}} 

In this subsection, we detail the calculations underlying the CLTs in Section \ref{sec:adapt}. The building block of our results is a joint CLT for the vector $\hat{\beta}=(\hat{S}/n, \hat{T}, \hat{\pi}, \hat{n}/{n})$, which has mean $\beta=(\mu, T, \pi, 1)$.

\begin{lemma}
	\label{lem:clt}
	Under Assumption \ref{bounded}, we have $$\sqrt{n}(\hat{\beta}-\beta)\xrightarrow{d} N(0, \Sigma),$$ where the entries of $\Sigma$ are given by

	\begin{multicols}{2}
		$$n\var(\hat{S}/n)=\E\left[ Y_k^2\frac{1-p_k}{p_k} \right]$$
$$n\var(\hat{T})=\E\left[ Y_k^2\frac{(1-p_k)^3}{p_k^3} \right]$$
$$n\var(\hat{\pi})=\E\left[ \frac{(1-p_k)^3}{p_k^3} \right]$$
$$n\var(\hat{n}/n)=\E\left[ \frac{1-p_k}{p_k} \right]$$
$$n\cov\left( \hat{S}/n, \hat{T} \right)=\E\left[ Y_k^2\frac{(1-p_k)^2}{p_k^2} \right]$$
$$n\cov\left( \hat{S}/n, \hat{\pi} \right)=\E\left[ Y_k\frac{(1-p_k)^2}{p_k^2} \right]$$
$$n\cov\left( \hat{S}/n, \hat{n}/n \right)=\E\left[ Y_k\frac{1-p_k}{p_k} \right]$$
$$n\cov\left( \hat{T}, \hat{\pi} \right)=\E\left[ Y_k\frac{(1-p_k)^3}{p_k^3} \right]$$
$$n\cov\left( \hat{T}, \hat{n}/n \right)=\E\left[ Y_k\frac{(1-p_k)^2}{p_k^2} \right]$$
$$n\cov\left( \hat{\pi}, \hat{n}/n \right)=\E\left[ \frac{(1-p_k)^2}{p_k^2} \right].$$
\end{multicols}
\end{lemma}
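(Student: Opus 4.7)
The plan is to recognize each component of $\hat{\beta}$ as an i.i.d.\ average and invoke the multivariate central limit theorem. Write $\hat{\beta} = \frac{1}{n}\sum_{k=1}^n \xi_k$ where
$$\xi_k = \left(\frac{Y_k I_k}{p_k},\; \frac{(1-p_k)Y_k I_k}{p_k^2},\; \frac{(1-p_k) I_k}{p_k^2},\; \frac{I_k}{p_k}\right)^\top.$$
The vectors $\xi_k$ are i.i.d.\ because the triples $(Y_k, p_k, I_k)$ are, and Assumption~\ref{bounded} makes every coordinate of $\xi_k$ almost surely bounded and therefore square-integrable. The classical multivariate CLT then yields $\sqrt{n}(\hat{\beta} - \E[\xi_1]) \xrightarrow{d} N(0, \Sigma)$ with $\Sigma$ equal to the covariance matrix of the single summand $\xi_1$, reducing the lemma to the computation of $\E[\xi_1]$ and $\cov(\xi_{1,i}, \xi_{1,j})$ for each $i,j$.

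The mean is immediate from the tower property: conditioning on $(Y_k, p_k)$ and using $\E[I_k \mid Y_k, p_k] = p_k$ gives $\E[\xi_1] = (\mu, T, \pi, 1) = \beta$. The covariance calculation has the same flavor but requires more bookkeeping, since there are ten distinct entries. For each $\cov(\xi_{1,i}, \xi_{1,j})$, the product $\xi_{1,i}\xi_{1,j}$ always carries a factor $I_k^2 = I_k$, and taking the conditional expectation over $I_k$ pulls out exactly one factor of $p_k$. That simplification collapses every second moment into a closed-form expectation in $(Y_k, p_k)$, which can then be rearranged into the form involving $(1-p_k)/p_k$ asserted in the statement.

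I anticipate no serious obstacle; the argument is a standard multivariate CLT followed by ten parallel moment calculations, and Assumption~\ref{bounded} provides all the integrability needed at once. The only mild care needed is applying $I_k^2 = I_k$ consistently (so that stray factors of $p_k$ do not introduce off-by-one algebra errors) and using the identity $\E[X/p_k^a] = \E[X/p_k^{a-1}] + \E[X(1-p_k)/p_k^a]$ to convert raw second moments into the ``design-style'' expressions that the lemma displays.
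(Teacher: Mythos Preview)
Your proposal is correct and essentially identical to the paper's proof: both recognize $\hat{\beta}$ as an average of i.i.d.\ bounded vectors and invoke the classical CLT, with the covariance entries following by direct moment computation. The only cosmetic difference is that the paper phrases the CLT step via the Cram\'er--Wold device (one-dimensional CLT for every linear combination $v^T\hat{\beta}$) rather than citing a multivariate CLT outright, which is of course equivalent.
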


\begin{proof}
	For any vector $v\in \R^4$, the quantity $\sqrt{n}(v^T\hat{\beta}-v^T\beta)$ is a sum of i.i.d.\ random variables. These random variables have finite variance by Assumption \ref{bounded}, so the classical Lindeberg CLT applies with a limiting distribution that is $N(0, v^T\Sigma v)$. The lemma then follows from the Cramer-Wold device.
\end{proof}

The CLTs for $\hat{\mu}_{\lambda}$ and $\hat{\mu}_{\text{AN}}$ now follow from applications of the delta method to different functions of $\beta$. For $\hat{\mu}_{\lambda}$, we will need only the first and last coordinate of $\beta$, since $\hat{\mu}_{\lambda}$ depends only on these, while $\hat{\mu}_{\text{AN}}$ is a function of all four coordinates of $\beta$.

\begin{proof}[Proof of Theorem \ref{thm:mu-hat-lambda-clt}]
	Note that $\hat{\mu}_{\lambda}=f(\hat{\beta})$ for $f(x,y,z,w; \lambda)=\frac{x}{\lambda+(1-\lambda)w}$ The function $f$ is differentiable at the point $\beta$, and has gradient $\nabla_{\beta}f=(1, 0, 0, -(1-\lambda)\mu)$, and so by the delta method we conclude that $\hat{\mu}_{\lambda}$ satisfies a CLT with the asymptotic variance $$\nabla_{\beta}f^T\Sigma\nabla_{\beta}f=\E\left[ Y_k^2\frac{1-p_k}{p_k} \right]-2(1-\lambda)\mu\E\left[ Y_k\frac{1-p_k}{p_k} \right]+(1-\lambda)^2\mu^2\E\left[ \frac{1-p_k}{p_k} \right],$$ which rearranges to the variance in (\ref{eq:mu-hat-lambda-clt}), as desired. 
\end{proof}

\begin{proof}[Proof of Theorem \ref{thm:mu-hat-fp-clt}]
We have $\hat{\mu}_{\text{AN}}=f(\hat{\beta})$ for $f(x,y,z,w)=x+\frac{y}{z}(1-w)$. Taking gradients gives $\nabla_{\beta}f=(1, 0, 0, -T/\pi)$, and so the asymptotic variance of $\hat{\mu}_{\text{AN}}$ is $$\nabla_{\beta}f^T\Sigma\nabla_{\beta}f=\E\left[ Y_k^2\frac{1-p_k}{p_k} \right]-\frac{2T}{\pi}\E\left[ Y_k\frac{1-p_k}{p_k} \right]+\frac{T^2}{\pi^2}\E\left[ \frac{1-p_k}{p_k} \right],$$ which simplifies to the variance in (\ref{eq:mu-hat-fp-clt}), as desired.
\end{proof}

\subsection{Proof of Theorem \ref{thm:denom-convergence}}
\label{app:subsec:proof2}

This subsection contains the proof of Theorem \ref{thm:denom-convergence} and details for the heuristic argument for variance reduction given in the main text. 

As before, we combine the two equations in (\ref{eq:updates}) to obtain $$\hat{\mu}^{(t)}=\frac{\hat{S}}{\left( 1-\frac{\hat{T}}{\hat{\pi}\hat{\mu}^{(t-1)}} \right)n+\frac{\hat{T}}{\hat{\pi}\hat{\mu}^{(t-1)}}\hat{n}}=\frac{\hat{S}/n}{1-\frac{\hat{T}}{\hat{\pi}\hat{\mu}^{(t-1)}}\left( 1-\frac{\hat{n}}{n}\right)},$$ and write this as 
$\hat{\mu}^{(t)}=f(\hat{\mu}^{(t-1)})$ where 
\begin{equation}
	f(x)=\frac{ax}{x-b},\quad a=\frac{\hat{S}}{n},\quad b=\frac{\hat{T}}{\hat{\pi}}\left( 1-\frac{\hat{n}}{n} \right).
	\label{eq:new-notation}
\end{equation}
In this notation, the fixed point of $f$ is $x=a+b=\hat{\mu}_{\text{AN}}$.

	The proof now proceeds in two steps: first, we generalize the problem slightly and consider the discrete dynamical system $x^{(t)}$ initialized at $x^{(0)}=a$ and with iterative map $x^{(t)}=f(x^{(t-1)})$ for arbitrary fixed $a$ and $b$. For this dynamical system, we show that if $|a|>|b|$, then $x^{(t)}$ converges to $a+b$. Second, we show that, with high probability, the particular $a$ and $b$ defined in (\ref{eq:new-notation}) satisfy these conditions	.

\paragraph{Analyzing the dynamical system.} 
The function $f$ has two fixed points, at $x^*_1=0$ and at $x_2^*=a+b$. To understand the stability of these fixed points, we compute the derivative $$|f'(x)|=\left|\frac{ab}{(x-b)^2}\right|\implies |f'(x_1^*)|=\frac{|a|}{|b|},\quad |f'(x_2^*)|=\frac{|b|}{|a|}.$$ A fixed point $x^*$ of the map $f$ is stable if and only if $|f'(x^*)|<1$, so we see that if $|a|>|b|$, then $x_1^*$ is unstable and $x_2^*$ is stable, and if $|a|<|b|$, then $x_1^*$ is stable and $x_2^*$ is unstable. The case $|a|=|b|$ occurs with probability zero and so we do not consider it. 

In light of these stabilities, we should expect $x^{(t)}$ to converge to $a+b$ whenever $|a|>|b|$. The following lemma confirms this.

	\begin{lemma}
		If $|a|>|b|$, then the dynamical system with initial point $x_0=a$ and iterative map $x_{t+1}=f(x_t)$ converges to $x^*=a+b$.
	\label{lem:dynamics}
	\end{lemma}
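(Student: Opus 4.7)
The plan is to linearize the recursion via the substitution $y_t = 1/x_t$. Starting from $x_{t+1} = ax_t/(x_t - b)$ and inverting, I would obtain the affine recursion
$$y_{t+1} = \frac{1}{a} - \frac{b}{a}\,y_t,$$
which is a contraction with rate $|b/a|$ and has the unique fixed point $y^* = 1/(a+b)$, corresponding to the target $x^* = a+b$. Solving in closed form gives
$$y_t - \frac{1}{a+b} = \left(-\frac{b}{a}\right)^{t}\!\left(y_0 - \frac{1}{a+b}\right) = \left(-\frac{b}{a}\right)^{t}\cdot \frac{b}{a(a+b)}.$$
Since $|b/a| < 1$ by hypothesis, $y_t \to 1/(a+b)$ geometrically, and taking reciprocals will give $x_t \to a+b$.

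The one technicality is to confirm that this formal computation actually tracks the original trajectory. The substitution requires $y_t \neq 0$ (so that $x_t = 1/y_t$ is a genuine real number) and $y_t \neq 1/b$ (so that $f$ is defined at $x_t$, or equivalently $y_{t+1} \neq 0$). Since $y_0 = 1/a \neq 0$, it suffices to rule out $y_t = 1/b$ for all $t \geq 0$. I would do this by substituting $y_t = 1/b$ into the closed form above and simplifying; after cancellation this forces $|a/b|^{t+2} = 1$, which contradicts $|a/b| > 1$. Thus the trajectory stays in the domain where the reciprocal substitution is valid.

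The main creative step is spotting the linearizing substitution; once that is done the argument reduces to a geometric-series computation and a short algebraic check to exclude two singular values. I do not anticipate any deeper obstacle. The only thing to double-check is that the hypothesis $|a|>|b|$ rules out the two degenerate cases $a=0$ and $a+b=0$ (the first because it would force $b=0$ and render $f$ trivially the zero map, the second because $|a|=|b|$ is excluded), ensuring that $y_0$ and $y^*$ are both well-defined and nonzero.
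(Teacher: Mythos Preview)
Your proof is correct and takes a genuinely different route from the paper's. The paper works directly in the $x$-variable: it passes to the two-step map $f\circ f$, derives the identity
\[
|f(f(x)) - (a+b)| \;=\; \left|\frac{b^2}{x(a-b)+b^2}\right|\cdot |x-(a+b)|,
\]
and then proves by induction, with casework on the signs of $a$ and $b$, that the multiplicative factor is at most $|b|/|a|$ along the even and odd subsequences separately. Your reciprocal substitution $y_t = 1/x_t$ sidesteps all of this: it conjugates the M\"obius map $f$ to an affine map, producing an exact closed form and the same geometric rate $|b/a|$ for the \emph{entire} sequence in one stroke, with no induction and no sign casework. The residual work is just the short algebraic check that the trajectory avoids the singular value $y_t=1/b$, which you handle cleanly (the case $b=0$ is trivial and needs no such check). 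Your argument is shorter and more transparent; the one thing the paper's approach buys is that its analysis of the two-step map $f\circ f$ is reused later for a heuristic finite-sample variance-reduction argument, so in context that detour is not entirely wasted.
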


	\begin{proof}
		Our analysis relies on the two-step map $$f(f(x))=\frac{a^2x}{x(a-b)+b^2}.$$ In particular, we will show that the subsequences $x_0, x_2,\cdots,$ and $x_1, x_3,\cdots,$ both converge to $x^*$, and this will prove the lemma. In both cases, the key observation is that
		\begin{align}
			|f(f(x))-x^*|&= \left|\frac{a^2x}{x(a-b)+b^2}-(a+b)\right| \\
			&= \left|\frac{b^2}{x(a-b)+b^2}\right|\cdot |x-(a+b)|.\label{contraction}
		\end{align}

		We first consider the subsequence $x_0, x_2,\cdots$, which for convenience we denote by $y_t=x_{2t}$. We claim that for any $t\geq 0$, we have 
		\begin{equation}
			\left|y_{t+1}-x^*\right|\leq \frac{|b|}{|a|}|y_t-x^*|,\label{induction}
		\end{equation}
The proof of the claim is by induction. 
For the base case, which is $t=0$ and $y_0=a$, we have that $$\left|\frac{b^2}{a(a-b)+b^2}\right|\leq \left|\frac{b^2}{ab}\right|=\frac{|b|}{|a|},$$ and substituting this into (\ref{contraction}) gives (\ref{induction}).

For the inductive step, suppose the result holds for $y_1,\cdots, y_{t-1}$. We will show that $y_t$ also satisfies $$\left|\frac{b^2}{y_t(a-b)+b^2}\right|\leq \frac{|b|}{|a|},$$ and this together with (\ref{contraction}) will prove the claim. 
The previous display is equivalent to 
		\begin{equation}
			|y_t(a-b)+b^2|\geq |ab|. \label{bound}
		\end{equation}
This inequality can be established by casework on the signs of $a$ and $b$. We discuss the case $a>0, b>0$ in detail; the other three cases are analogous. 

		If $a>0$ and $b>0$, then $a+b>a$, and since by the inductive hypothesis, $y_t$ is closer to $a+b$ than $a$ is, we must have $a\leq y_t\leq a+2b$. Thus $$|y_t(a-b)+b^2|\geq \min_{a\leq t\leq a+2b} |t(a-b)+b^2|.$$ Since the function we are minimizing is piecewise linear, the minimum must be attained at an endpoint ($t=a$ or $t=a+2b$) or where $t(a-b)+b^2=0$. 

		At $t=a$, the objective is $|a^2-ab+b^2|\geq |ab|$; at $t=a+2b$ the objective is $|a^2+ab-b^2|$. Since $|a|>|b|$ and $a,b$ are both positive, this is equal to $a^2+ab-b^2\geq ab$ as well. Finally, $t(a-b)+b^2=0$ is not possible because this requires $t=\frac{b^2}{b-a}$ and because $a>b$, $\frac{b^2}{b-a}<0<a$.
		Thus we conclude that (\ref{bound}) holds, and this completes the induction for the case $a>0$ and $b>0$. The other cases are analogous, and combining them establishes \eqref{induction}.

		Now, using our claim in \eqref{induction} repeatedly, we have that for any $t>0$, $$|y_t-x^*|\leq \frac{|b|}{|a|}|y_{t-1}-x^*|\leq\cdots \leq \left( \frac{|b|}{|a|} \right)^t|y_0-x^*|.$$ Since $|b|<|a|$, we thus have $|y_t-x^*|\to 0$ as $t\to \infty$, proving the convergence.

		Recalling that $y_t=x_{2t}$, we have shown that the subsequence $x_0, x_2,\cdots,$ converges to $x^*$. An analogous argument gives that $x_1, x_3,\cdots$ converges to $x^*$ as well, and these two together imply that $x_t\to x^*$.
	\end{proof}

\paragraph{High-probability guarantees}

Our next lemma carries out the proof of the second part of Theorem~\ref{thm:denom-convergence}, which is showing that $a$ and $b$ as defined in (\ref{eq:new-notation}) satisfy $|a|>|b|$ with high probability.

\begin{lemma}
	\label{lem:hoeffding}
	Suppose Assumption~\ref{bounded} holds and that $\mu\neq 0$. Then, $$\P\left( \left|\frac{\hat{S}}{n}\right|>\left|\frac{\hat{T}}{\hat{\pi}}\left( 1-\frac{\hat{n}}{n} \right)\right| \right)\geq 1-4\exp\left(\frac{-2\mu^2n}{O(M/\delta)^2}\right).$$
\end{lemma}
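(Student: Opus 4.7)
The plan is to show that the desired inequality holds on an event of the right probability by separately controlling $|\hat{S}/n|$ from below and $|\hat{T}/\hat{\pi}(1-\hat{n}/n)|$ from above. The key observation that makes this easy is that $\hat{T}/\hat{\pi}$ admits a \emph{deterministic} bound: both $\hat{T}$ and $\hat{\pi}$ are sums with the same nonnegative weights $\frac{1-p_k}{p_k}\frac{I_k}{p_k}$, so $\hat{T}/\hat{\pi}$ is a convex combination of the $Y_k$'s, and Assumption~\ref{bounded} yields $|\hat{T}/\hat{\pi}| \leq M$ whenever $\hat{\pi}>0$. The event $\{\hat{\pi}=0\}$ has probability at most $(1-\delta)^n$, which is already absorbed in the target bound.

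Next I would apply Hoeffding's inequality twice. First, $\hat{S}/n=\frac{1}{n}\sum_k Y_kI_k/p_k$ is an average of i.i.d.\ terms bounded by $M/\delta$ in absolute value, with mean $\mu$. Hoeffding gives
\begin{equation*}
\P\!\left(\left|\frac{\hat{S}}{n}-\mu\right|>\frac{|\mu|}{2}\right)\leq 2\exp\!\left(-\frac{\mu^2 n}{2(M/\delta)^2}\right),
\end{equation*}
and on the complement we have $|\hat{S}/n|>|\mu|/2$ by the reverse triangle inequality. Second, $\hat{n}/n=\frac{1}{n}\sum_k I_k/p_k$ is an average of i.i.d.\ terms bounded by $1/\delta$ with mean $1$, so
\begin{equation*}
\P\!\left(\left|1-\frac{\hat{n}}{n}\right|>\frac{|\mu|}{2M}\right)\leq 2\exp\!\left(-\frac{\mu^2 n \delta^2}{2M^2}\right).
\end{equation*}

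On the intersection of the two complementary events, I can combine the deterministic bound with the concentration bound to obtain
\begin{equation*}
\left|\frac{\hat{T}}{\hat{\pi}}\left(1-\frac{\hat{n}}{n}\right)\right|\leq M\cdot\frac{|\mu|}{2M}=\frac{|\mu|}{2}<\left|\frac{\hat{S}}{n}\right|,
\end{equation*}
which is exactly the inequality we want. A union bound over the two Hoeffding failure events (plus the vanishing $(1-\delta)^n$ contribution) gives the claimed probability of at least $1-4\exp(-2\mu^2 n/O(M/\delta)^2)$, with the constant hidden in $O(\cdot)$ absorbing the factors of $\tfrac{1}{4}$ and the distinction between $M/\delta$ and $\max(M/\delta,1/\delta)$.

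There is no real obstacle here; the one item worth flagging is the handling of $\hat{\pi}=0$, since then $\hat{T}/\hat{\pi}$ is undefined. This is resolved either by restricting to $\{\hat{\pi}>0\}$, which loses only $(1-\delta)^n$ in probability, or by observing that when $\hat{\pi}=0$ we also have $\hat{T}=0$ and the correction term can be defined as zero, in which case the inequality reduces to $|\hat{S}/n|>0$, which follows from the same Hoeffding bound on $\hat{S}/n$ since $\mu\neq 0$.
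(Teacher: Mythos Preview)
Your proposal is correct and follows essentially the same approach as the paper: both use the deterministic bound $|\hat{T}/\hat{\pi}|\leq M$ (a weighted average of the $Y_k$), then apply Hoeffding separately to $\hat{S}/n$ and $\hat{n}/n$ and combine by a union bound. The only cosmetic difference is that the paper introduces a free threshold $\epsilon$ and then optimizes it to $\epsilon=\frac{M}{M+1}|\mu|$, whereas you simply fix the split at $|\mu|/2$; since the statement carries an $O(\cdot)$ in the exponent, this makes no difference. Your explicit handling of the $\hat{\pi}=0$ edge case is a nice touch that the paper omits.
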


\begin{proof}
	For any $0\leq \epsilon\leq |\mu|$, define the events $$E_1=\left\{ \left|\frac{\hat{S}}{n}\right|> \left|\frac{\hat{T}}{\hat{\pi}}\left( 1-\frac{\hat{n}}{n} \right)\right|\right\}, E_2=\left\{ \left|\frac{\hat{S}}{n}-\mu\right| \leq \epsilon \right\}, E_3=\left\{ \left|\frac{\hat{T}}{\hat{\pi}}\left( 1-\frac{\hat{n}}{n} \right)\right|\leq |\mu|-\epsilon \right\}.$$ 

	We need to lower bound the probability of $E_1$, which we do by upper bounding the probability of $E_1^c$. By the union bound, 
	\begin{align*}
		\P(E_1^c)&\leq \P(E_2^c)+\P(E_3^c),\\
		&= \P\left( \left|\frac{1}{n}\sum_{k=1}^n \frac{Y_kI_k}{p_k}-\mu\right|>\epsilon \right)+\P\left( \left|\frac{\hat{T}}{\hat{\pi}}\left( 1-\frac{\hat{n}}{n} \right)\right|>|\mu|-\epsilon \right),\\
		&\leq\P\left( \left|\frac{1}{n}\sum_{k=1}^n \frac{Y_kI_k}{p_k}-\mu\right|>\epsilon \right)+\P\left( \left|\left( 1-\frac{\hat{n}}{n} \right)\right|> \frac{|\mu|-\epsilon}{M}\right),\\
		&\leq 2\exp\left( -\frac{2\epsilon^2n}{(M/\delta)^2} \right)+2\exp\left( -\frac{2(|\mu|-\epsilon)^2n}{(1/\delta)^2} \right).
	\end{align*}
	The second inequality follows from the bound 

	\begin{equation}
		\left|\hat{T}\right|=\left|\frac{1}{\hat{n}}\sum_{k=1}^n Y_k\frac{1-p_k}{p_k}\cdot \frac{I_k}{p_k}\right|\leq \frac{1}{\hat{n}}\sum_{k=1}^n |Y_k|\frac{1-p_k}{p_k}\cdot \frac{I_k}{p_k}\leq M|\hat{\pi}|,
		\label{eq:ratio-bound}
	\end{equation}
	which implies that $|\hat{T}/\hat{\pi}|\leq M$. The third inequality follows from applying Hoeffding's inequality to each term with the bounds $|Y_kI_k/p_k|\leq M/\delta$ and $|I_k/p_k|\leq 1/\delta$. 

	Finally, we choose $\epsilon$ to balance these two terms. The optimal choice is $\epsilon=\frac{M}{M+1}|\mu|$, and with this value of $\epsilon$ we conclude that $$\P(E_1)\geq 1 - 4\exp\left( -\frac{2\mu^2n}{(M+1)^2/\delta^2}\right),$$ finishing the proof.
\end{proof}

\paragraph{Combining the lemmas}

With these two lemmas, the proof of Theorem \ref{thm:denom-convergence} is straightforward. 

\begin{proof}[Proof of Theorem \ref{thm:denom-convergence}].
	Let $a, b$ be as defined in (\ref{eq:new-notation}).

	For (i), if $|a|>|b|$, then Lemma \ref{lem:dynamics} implies the result with $\hat{\mu}_{\text{lim}}=\hat{\mu}_{\text{fp}}$ If $|a|<|b|$, then an argument similar to the one in the proof of Lemma \ref{lem:dynamics} establishes that $x^{(t)}\to 0$ as $t\to \infty$, and so the statement holds with $\hat{\mu}_{\text{lim}}=0$.

	For (ii), we have $$\P(\hat{\mu}_{\text{lim}}\neq \hat{\mu}_{\text{fp}})\leq \P\left( \left|\frac{\hat{S}}{n}\right|>\left|\frac{\hat{T}}{\hat{\pi}}\left( 1-\frac{\hat{n}}{n} \right)\right| \right)\leq 4\exp\left( -\frac{2\mu^2n}{O(M/\delta)^2} \right),$$ where the first inequality is from the contrapositive of Lemma \ref{lem:dynamics} and the second is Lemma \ref{lem:hoeffding}. This bound goes to zero, implying (ii). 
\end{proof}

\paragraph{Intuition for variance reduction.}

In this section, we present the intuitive argument for why every two applications of $f$ reduce variance, as alluded to in Section~\ref{subsec:variance}.

For convenience, let $g(x)=f(f(x))$. If it were the case that $|g'(x)|\leq 1$ for all $x$, then applying $g$  would certainly reduce variance, and we would conclude that $\hat{\mu}_{\AN}$ has smaller variance than $\hat{\mu}_{\text{HT}}$. But this is not the case: $g(x)$ approaches $\infty$ as its denominator approaches 0, and so $|g'(x)|$ can be arbitrarily large . However, the actual sequence of iterates $\hat{\mu}^{(2t)}$ lies, with high probability, in an interval where $|g'(x)|$ is bounded by 1.

In what follows, we assume again that $\mu\neq 0$, and that $|a|>2|b|$, which holds with high probability by the same arguments as those in the proof of Lemma \ref{lem:hoeffding}. We also assume, without the loss of generality, that $a$ and $b$ are both positive, but this is only for clarity. Now, in the proof of Lemma \ref{lem:dynamics}, we showed that under these conditions, the entire sequence of iterates $\hat{\mu}^{(0)},\hat{\mu}^{(2)},\cdots$ lies in the interval $[a, a+2b]$. By standard concentration arguments, $a$ is concentrated around $\mu$ and $b$ is concentrated around $0$, so the interval between $[a, a+2b]$ lies with high probability in a small interval $\mathcal{I}$ centered at $\mu$. On the other hand, we can check by direct computation that $|g'(x)|\leq 1$ for $x$ outside the interval $(-3b, b)$. Again by concentration arguments, for large $n$, this interval will be concentrated around zero, and thus be disjoint from the interval $\mathcal{I}$ centered at $\mu$. 

To summarize, there exists an interval $\mathcal{I}$ centered around $\mu$ such that, with high probability, $x^{(2t)}\in \mathcal{I}$ for all $t>0$ and $|g'(x)|\leq 1$ for all $x\in\mathcal{I}$. If the function $g$ were fixed, this would be enough to conclude that each application of $g$ reduces variance, and thus that $\hat{\mu}_{\AN}$ has smaller variance than $\hat{\mu}_{\text{HT}}$. Unfortunately, because the function $g$ is random, this is not a rigorous argument, only an intuitive one. 

We note that an argument similar to ours appears in Section 5 of \cite{gmm}. In that context, when studying a generalized method of moments (GMM) estimator, the authors iterate a random data-dependent estimate of an underlying true function. The underlying true function is a contraction mapping, and thus reduces variance when applied to a random variable, so they argue heuristically that the iterations of the data-dependent estimated function should reduce variance as well.

\subsection{Proof of Theorem \ref{thm:aipw-equivalence}}
\label{subsec:equivalence}

\begin{proof}
	Let $\mathcal{T}_n$ be an auxiliary data set of size $n$ on which $\hat{\mu}(\cdot)$ and $\hat{p}(\cdot)$ are trained. (This will be the case if, for example, we do cross-fitting as in \cite{dml}; we choose not to write out the explicit cross-fitting set-up to simplify the notation.) Then, it is sufficient to show that the correction term we have introduced is $o_P(n^{-1/2})$. That error term is 
	\begin{align}
		&=\frac{1}{\hat{\pi}}\left( \frac{1}{n}\sum_{k=1}^n(Y_k-\hat{\mu}(X_k))\frac{1-\hat{p}(X_k)}{\hat{p}(X_k)}\frac{I_k}{\hat{p}(X_k)} \right)\left( 1-\frac{1}{n}\sum_{k=1}^n \frac{I_k}{\hat{p}(X_k)} \right),\\
		&\lesssim \left( \frac{1}{n}\sum_{k=1}^n (Y_k-\hat{\mu}(X_k))I_k \right)\left( 1-\frac{1}{n}\sum_{k=1}^n \frac{I_k}{\hat{p}(X_k)} \right),\\
		&=\left( \frac{1}{n}\sum_{k=1}^n (Y_k-\mu(X_k))I_k+\frac{1}{n}\sum_{k=1}^n (\mu(X_k)-\hat{\mu}(X_k))I_k \right)\left( 1-\frac{1}{n}\sum_{k=1}^n \frac{I_k}{p(X_k)}+\frac{1}{n}\sum_{k=1}^n \frac{I_k}{p(X_k)}-\frac{I_k}{\hat{p}(X_k)} \right),\\
		&\leq \left( O_P(n^{-1/2})+\frac{1}{n}\sum_{k=1}^n |(\mu(X_k)-\hat{\mu}(X_k))I_k| \right)\left( O_P(n^{-1/2})+\frac{1}{n}\sum_{k=1}^n \left|\frac{I_k}{p(X_k)}-\frac{I_k}{\hat{p}(X_k)}\right| \right),\\
		&\leq 
		\begin{multlined}[t]
			O_P(n^{-1})+\frac{O_P(n^{-1/2})}{n}\sum_{k=1}^n |\mu(X_k)-\hat{\mu}(X_k)|+\frac{O_P(n^{-1/2})}{n}\sum_{k=1}^n \left|\frac{1}{p(X_k)}-\frac{1}{\hat{p}(X_k)}\right|+\\ \left( \frac{1}{n}\sum_{k=1}^n |\mu(X_k)-\hat{\mu}(X_k|\right)\left( \frac{1}{n}\sum_{k=1}^n \left|\frac{1}{p(X_k)}-\frac{1}{\hat{p}(X_k)}\right|\right),
			\end{multlined}\\
			&\leq 
			\begin{multlined}[t]
				O_P(n^{-1})+O_P(n^{-1/2})\left( \sup_{x\in\mathcal{X}} |\mu(x)-\hat{\mu}(x)|+\sup_{x\in\mathcal{X}}\left|\frac{1}{p(x)}-\frac{1}{\hat{p}(x)}\right| \right)+\\ \left( \frac{1}{n}\sum_{k=1}^n |\mu(X_k)-\hat{\mu}(X_k|\right)\left( \frac{1}{n}\sum_{k=1}^n \left|\frac{1}{p(X_k)}-\frac{1}{\hat{p}(X_k)}\right|\right)\\
			\end{multlined}\\
			&= O_P(n^{-1})+O_P(n^{-1/2})o_P(1)+\left( \frac{1}{n}\sum_{k=1}^n |\mu(X_k)-\hat{\mu}(X_k|\right)\left( \frac{1}{n}\sum_{k=1}^n \left|\frac{1}{p(X_k)}-\frac{1}{\hat{p}(X_k)}\right|\right).\label{eq:decay-rate}
	\end{align}
where the second inequality follows from the fact that, by the consistency of $\hat{p}(\cdot)$, we have $\delta/2\leq \hat{p}(x)\leq 1-\delta/2$ for all $x\in\mathcal{X}$ for sufficiently large $n$, the fourth inequality applies the triangle inequality and the CLT for i.i.d. sums, and the final equality uses Assumption~\ref{consistency}. (Note that since $p(\cdot)$ is bounded, the consistency of $\hat{p}$ implies the consistency of $1/\hat{p}$ as well.)

	Examining \eqref{eq:decay-rate}, the first two terms are $o_P(n^{-1/2})$ as needed, so it remains to analyze the final term. We thus compute \[\E\left[ \left( \frac{1}{n}\sum_{k=1}^n |\mu(X_k)-\hat{\mu}(X_k|\right)^2\left( \frac{1}{n}\sum_{k=1}^n \left|\frac{1}{p(X_k)}-\frac{1}{\hat{p}(X_k)}\right|\right)^2 \right]\] as 

	\begin{align}
		&\leq \E\left[ \left( \frac{1}{n}\sum_{k=1}^n (\mu(X_k)-\hat{\mu}(X_k))^2 \right)\left( \frac{1}{n}\sum_{k=1}^n \left( \frac{1}{p(X_k)}-\frac{1}{\hat{p}(X_k)} \right)^2 \right) \right],\\
		&= \E\left[ \frac{1}{n^2}\sum_{k\neq \ell}( \mu(X_k)-\hat{\mu}(X_k))^2\left( \frac{1}{p(X_{\ell})}-\frac{1}{\hat{p}(X_{\ell})} \right)^2\right]+o(n^{-1}),\\
		&= \E\left[ \frac{1}{n^2}\sum_{k\neq \ell} \E\left[ (\mu(X_k)-\hat{\mu}(X_k))^2\left( \frac{1}{p(X_{\ell})}-\frac{1}{\hat{p}(X_{\ell})} \right)^2\mid \mathcal{T}_n \right]\right]+o(n^{-1}),\\
		&= \E\left[ \frac{n(n-1)}{n^2} \E\left[ (\mu(X_k)-\hat{\mu}(X_k))^2\left( \frac{1}{p(X_{\ell})}-\frac{1}{\hat{p}(X_{\ell})} \right)^2\mid \mathcal{T}_n \right]\right]+o(n^{-1}),\\
		&\leq \E\left[\E\left[ (\mu(X_k)-\hat{\mu}(X_k))^2\mid \mathcal{T}_n \right]\E\left[ \left( \frac{1}{p(X_{\ell})}-\frac{1}{\hat{p}(X_{\ell})} \right)^2\mid \mathcal{T}_n \right]\right]+o(n^{-1}),\label{eq:product}
	\end{align}
where the first inequality is Cauchy-Schwarz, the second equality expands the product and drops the $k=\ell$ terms, the fourth equality uses the fact that the terms of the sum are equal after conditioning on $\mathcal{T}_n$, and the fifth equality uses the fact that the two errors are independent after conditioning on $\mathcal{T}_n$.

	By Assumption~\ref{risk}, the product of conditional expectations in \eqref{eq:product} is $o_P(n^{-1})$. Since $\mu(X_k)$ and $p(X_{\ell})$ are bounded by Assumption~\ref{bounded}, and $\hat{\mu}(\cdot)$ and $\hat{p}(\cdot)$ are sup-norm consistent by Assumption~\ref{consistency}, that product of conditional expectations is eventually dominated by a constant, and so the expectation in \eqref{eq:product} is $o(n^{-1})$ as well, completing the proof.
\end{proof}

\subsection{Proof of Theorem \ref{thm:regret-bound}}

Our proof closely follows ideas and tools developed in \cite{kitagawa}.

\begin{proof}
	Our goal is to control $V(\pi^*)-V(\hat{\pi}_{\text{AN}})$. We do this by first writing $V(\pi^*)-V(\hat{\pi}_{\text{AN}})$ as

	\begin{align}
		&= V(\pi^*)-\hat{V}_{\text{AN}}(\hat{\pi}_{\text{ AN}})+\hat{V}_{\text{AN}}(\hat{\pi}_{\text{AN}})-V(\hat{\pi}_{\text{AN}}),\\
	&\leq V(\pi^*)-\hat{V}_{\text{AN}}(\pi^*)+\sup_{\pi \in \Pi}|\hat{V}_{\text{AN}}(\pi)-V(\pi)|,\\
		&\leq 2\sup_{\pi \in \Pi} |\hat{V}_{\text{AN}}(\pi)-V(\pi)|,\\
		&\leq 2\sup_{\pi \in \Pi} |\hat{V}_{\text{AN}}(\pi)-\hat{V}_{\text{IPW}}(\pi)|+2\sup_{\pi\in \Pi}|\hat{V}_{\text{IPW}}(\pi)-V(\pi)|.\label{eq:decompose}
	\end{align}

	The second term of \eqref{eq:decompose} is bounded in Theorem 2.1 of \cite{kitagawa}, and so we are interested in bounding the first term. By the definitions of $\hat{V}_{\text{IPW}}$ and $\hat{V}_{\text{AN}}$, that first term of \eqref{eq:decompose} is 
	\begin{align}
		&=\sup_{\pi\in\Pi}\left|\frac{\sum_{k=1}^n Y_k\frac{1-\P(I_k=\pi(X_k)\mid X_k)}{\P(I_k=\pi(X_k)\mid X_k)}\frac{\mathbf{1}\left\{ I_k=\pi(X_k) \right\}}{\P(I_k=\pi(X_k)\mid X_k)}}{\sum_{k=1}^n\frac{1-\P(I_k=\pi(X_k)\mid X_k)}{\P(I_k=\pi(X_k)\mid X_k)}\frac{\mathbf{1}\left\{ I_k=\pi(X_k) \right\}}{\P(I_k=\pi(X_k)\mid X_k)}}\left( 1-\frac{1}{n}\sum_{k=1}^n \frac{\mathbf{1}\left\{ I_k=\pi(X_k) \right\}}{\P(I_k=\pi(X_k)\mid X_k)}\right)\right|,\\
		&\leq M\sup_{\pi \in \Pi}\left|1-\frac{1}{n}\sum_{k=1}^n \frac{\mathbf{1}\left\{ I_k=\pi(X_k) \right\}}{\P(I_k=\pi(X_k)\mid X_k)}\right|, \label{eq:diff-bound}
	\end{align}
by a calculation analogous to the one in \eqref{eq:ratio-bound}. Combining \eqref{eq:decompose} and \eqref{eq:diff-bound} gives

	\begin{equation}
		\E[V(\pi^*)-V(\hat{\pi}_{\text{AN}})]\leq M\E\left[ \sup_{\pi \in \Pi} \left|1-\frac{1}{n}\sum_{k=1}^n \frac{\mathbf{1}\left\{ I_k=\pi(X_k) \right\}}{\P(I_k=\pi(X_k)\mid X_k)}\right|\right]+O\left( \frac{M}{\delta}\sqrt{\frac{\text{VC}(\Pi)}{n}} \right).
		\label{eq:process}
	\end{equation}

	Finally, the expectation in the first term of \eqref{eq:process} can be bounded using standard empirical process tools; in particular, Lemmas A.1 and A.4 of \cite{kitagawa} imply that it is $O\left( \frac{1}{\delta}\sqrt{\frac{\text{VC}(\Pi)}{n}} \right)$, finishing the proof.
\end{proof}

%!TEX root = ../TukeyTrotterDraft.tex
\section{Further experiments}
\label{sec:sims_app}
\subsection{Semi-synthetic experiments}

We now move beyond pure simulations and consider a real data set. Specifically we work with a dataset of Swiss municipalities, provided by the R package \texttt{sampling} \cite{sampling} under the GPL-2 license. This data set contains demographic and financial data for 2896 different municipalities in Switzerland, all aggregated at the municipality level, and does not contain data about any individual. We consider two responses: $Y_1$, the wooded area of a municipality and $Y_2$, the industrial area of a municipality, and we assume a probability-proportional-to-size sampling scheme in which the probability of observing the response is proportional to the total area of the municipality. We take the sum of the probabilities to be either 50 or 250, meaning that the average number of observations is either 50 or 250. We perform simulations by resampling the set of municipalities that are observed according to the given probabilities and comparing the estimators to the mean value across all municipalities, making this a fixed-population setting. The RMSE of the Horvitz--Thompson estimator, \Hajek{} estimator, and adaptively normalized estimator for the 4 specifications are shown in Table \ref{tab:swiss}.

\begin{table}[t]
	\centering
	\begin{tabular}{lllll}
		\toprule
		&\multicolumn{4}{c}{\emph{Problem specification}}\\
		\cmidrule{2-5}
		\emph{Estimator}&$\sum p_k=50, Y_1$ & $\sum p_k=250, Y_1$& $\sum p_k=50, Y_2$& $\sum p_k= 250, Y_2$\\
		\midrule
		$\hat{\mu}_{\text{HT}}$&$68.4\pm 0.1030$&$27.8\pm 0.0710$&$2.51\pm 0.0051$&$1.07\pm 0.0026$\\
		$\hat{\mu}_{\Hajek{}}$&$95.3\pm 0.3587$&$39.3\pm 0.1510$&$2.52\pm 0.0076$&$1.06\pm 0.00244$\\
		$\hat{\mu}_{\text{AN}}$&$61.5\pm 0.1035$&$23.1\pm 0.0538$&$2.45\pm 0.0086$&$1.01\pm0.0028$\\
		\bottomrule
	\end{tabular}
	\caption{RMSE of estimators on Swiss municipality data; $Y_1$ is wood area and $Y_2$ is industrial area; probabilities are chosen proportional to total municipality area, which is strongly positively correlated with $Y_1$ and weakly positively correlated with $Y_2$. RMSEs are averaged over 100,000 trials and standard errors are over 10 replications.}
	\label{tab:swiss}
\end{table}

To contextualize the results, we note that $Y_1$ is strongly positively correlated with the probabilities and $Y_2$ is weakly positively correlated with them. Thus in the first two columns of Table \ref{tab:swiss}, the Horvitz--Thompson estimator is much better than the \Hajek{} estimator, but not as good as the adaptively normalized estimator. In the latter two columns, because the correlation is weaker, the Horvitz--Thompson and \Hajek{} estimators have nearly the same RMSE, but again the adaptively normalized estimator improves on both. For reference, plots of the MSE in each problem as a function of different values of $\lambda$ are shown in Figure~\ref{fig:swiss-opt}.

\begin{figure}[t]
	\centering
	\begin{tabular}{cc}
		\includegraphics[width=0.4\linewidth]{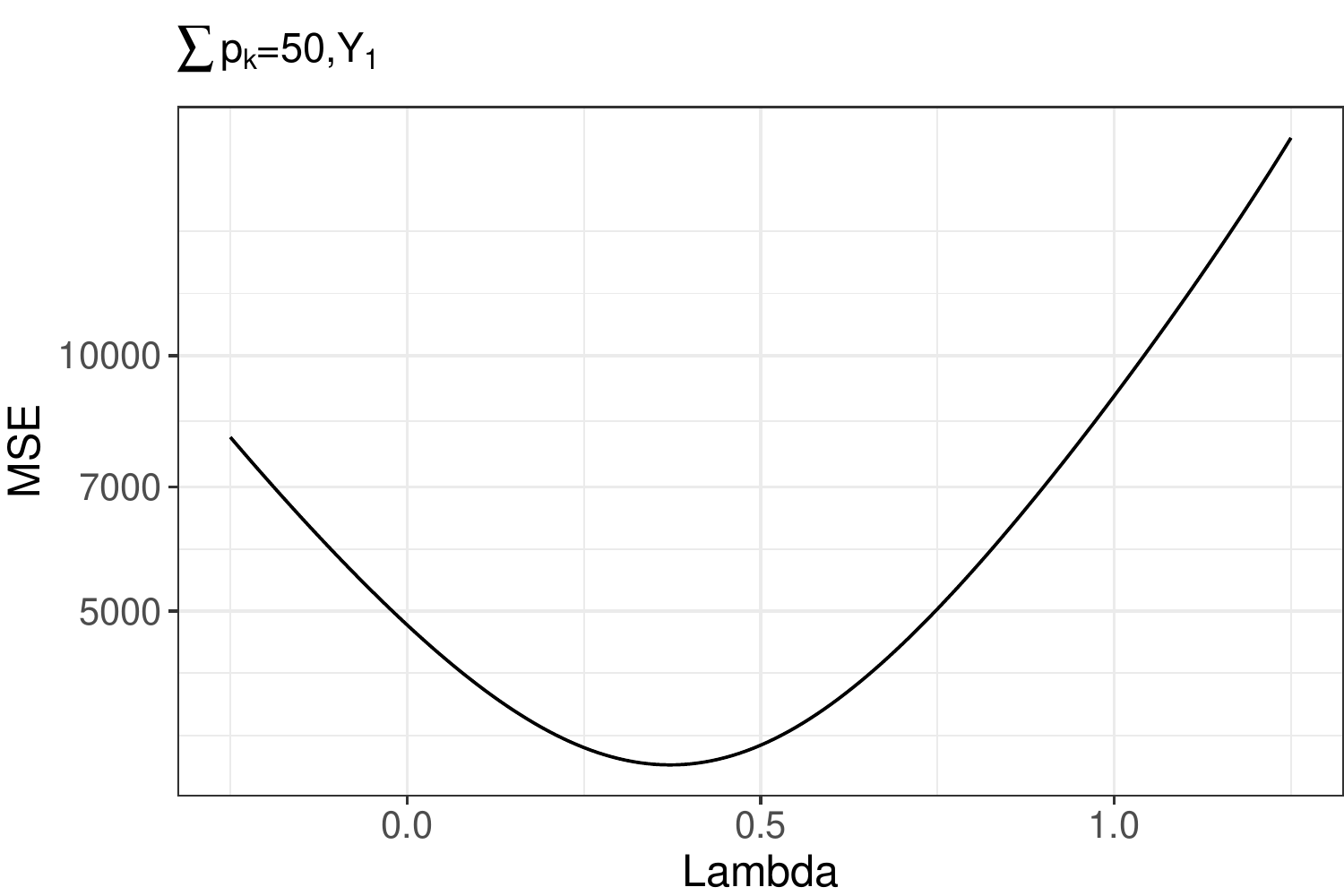}&\includegraphics[width=0.4\linewidth]{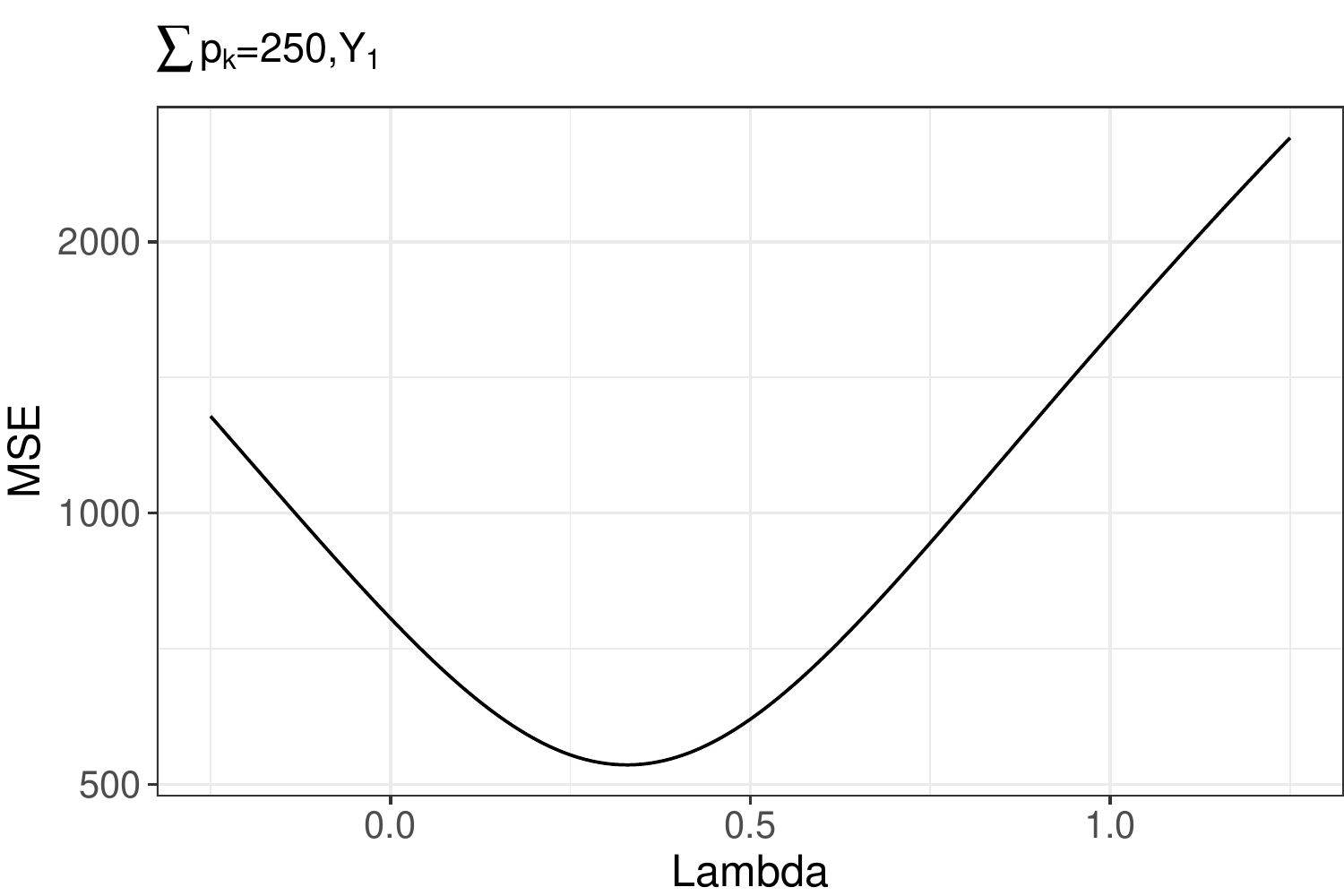}\\
	\vspace{-2mm}
		\includegraphics[width=0.4\linewidth]{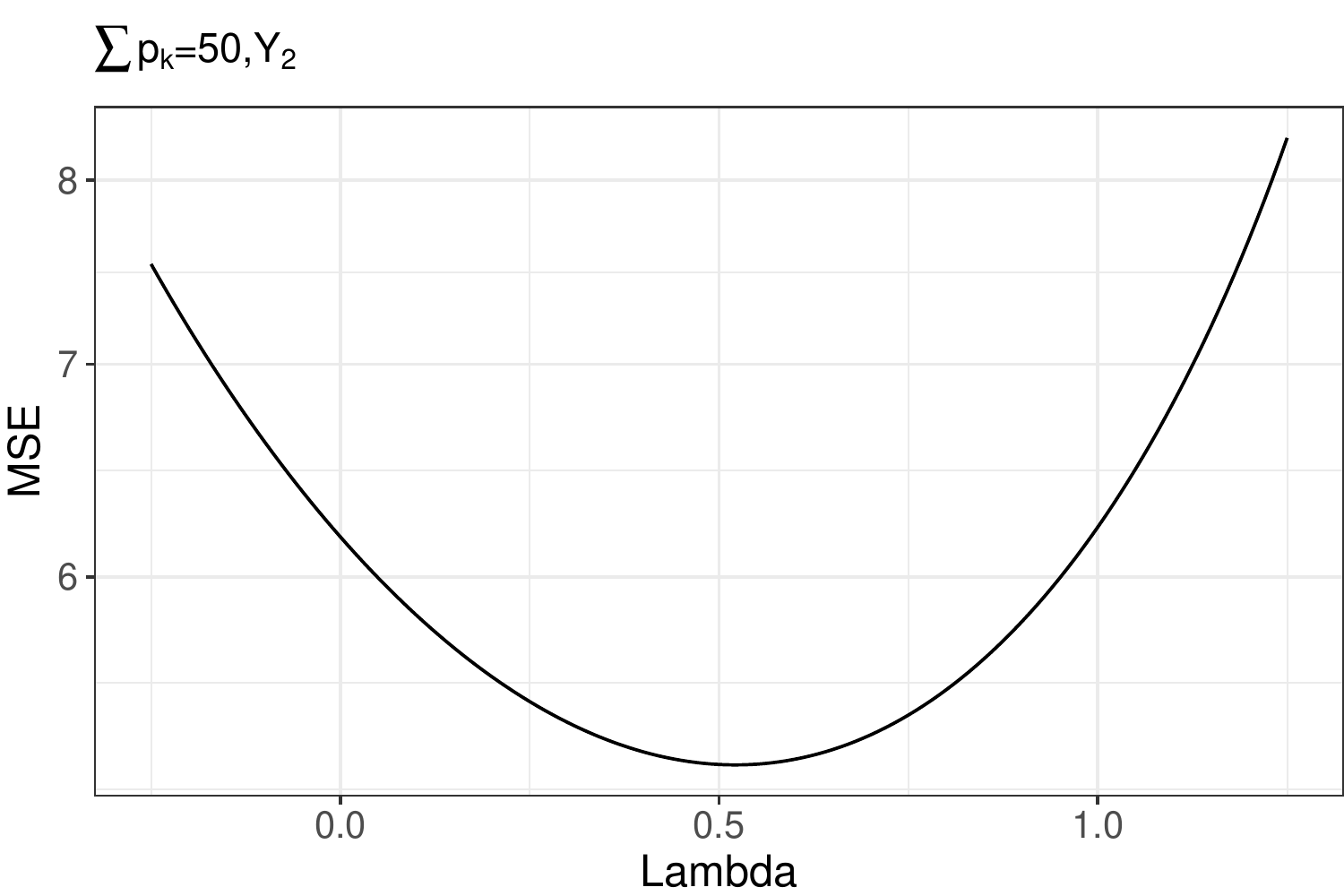}&\includegraphics[width=0.4\linewidth]{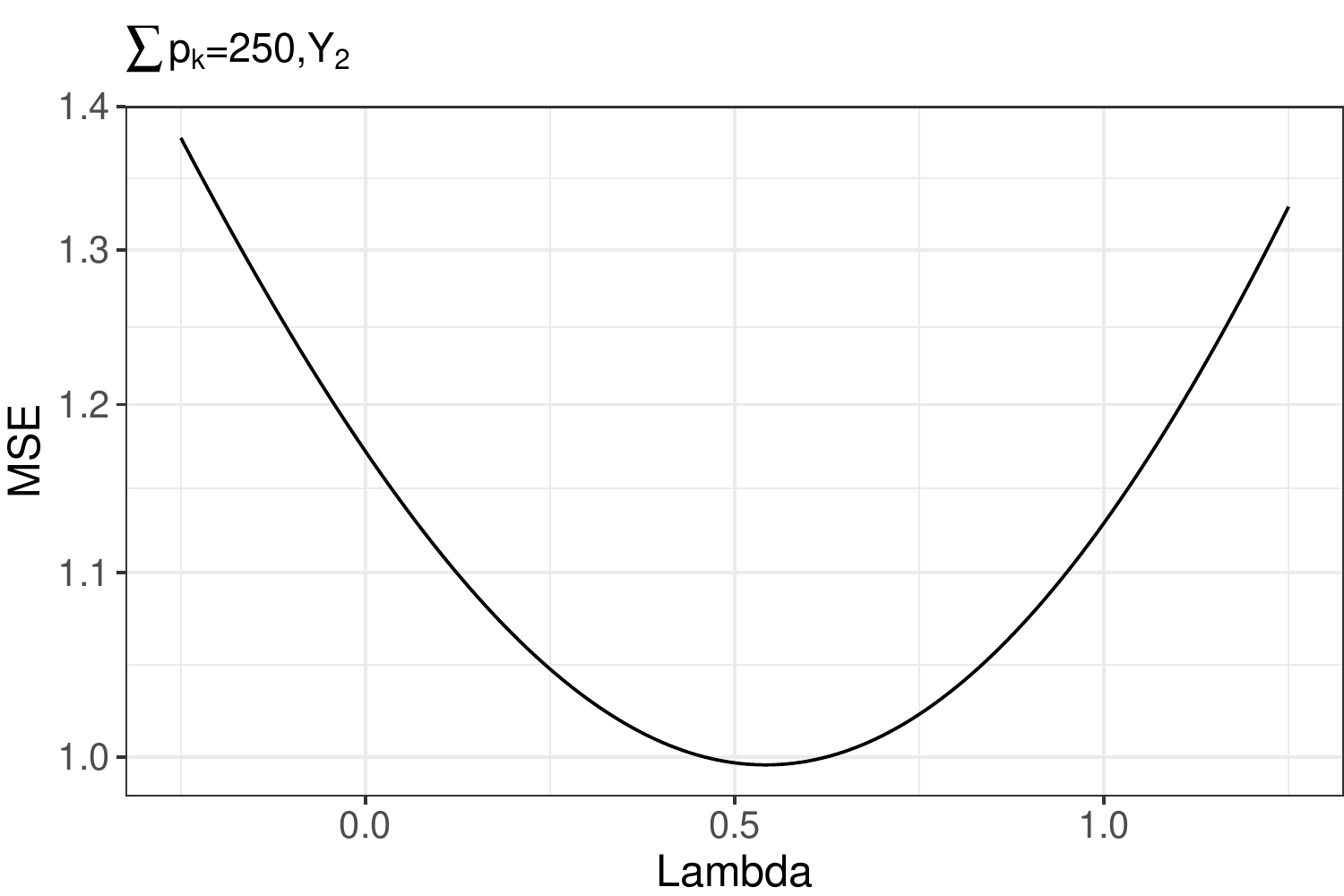}\\
	\end{tabular}
	\caption{MSE of $\hat{\mu}_{\lambda}$ as a function of $\lambda$ for the four problem specifications of the Swiss municipality data. Interestingly, the optimal choices of $\lambda$ are all between 0 and 1, suggesting that the \Hajek{} estimator is actually ``over-normalizing'' in this case.}
	\label{fig:swiss-opt}
\end{figure}

\subsection{Confidence intervals and coverage}

In this section we discuss the problem of constructing confidence intervals for $\hat{\mu}_{\AN}$. Since we have computed the asymptotic variance of $\hat{\mu}_{\AN}$ in Theorem~\ref{thm:mu-hat-fp-clt}, we can use a normal approximation to construct confidence intervals that are asymptotically valid. The coverage of these intervals in a simulation experiment is shown in Figure~\ref{fig:coverage}.

\begin{figure}[t]
	\centering
	\begin{tabular}{cc}
		\includegraphics[width=0.45\linewidth]{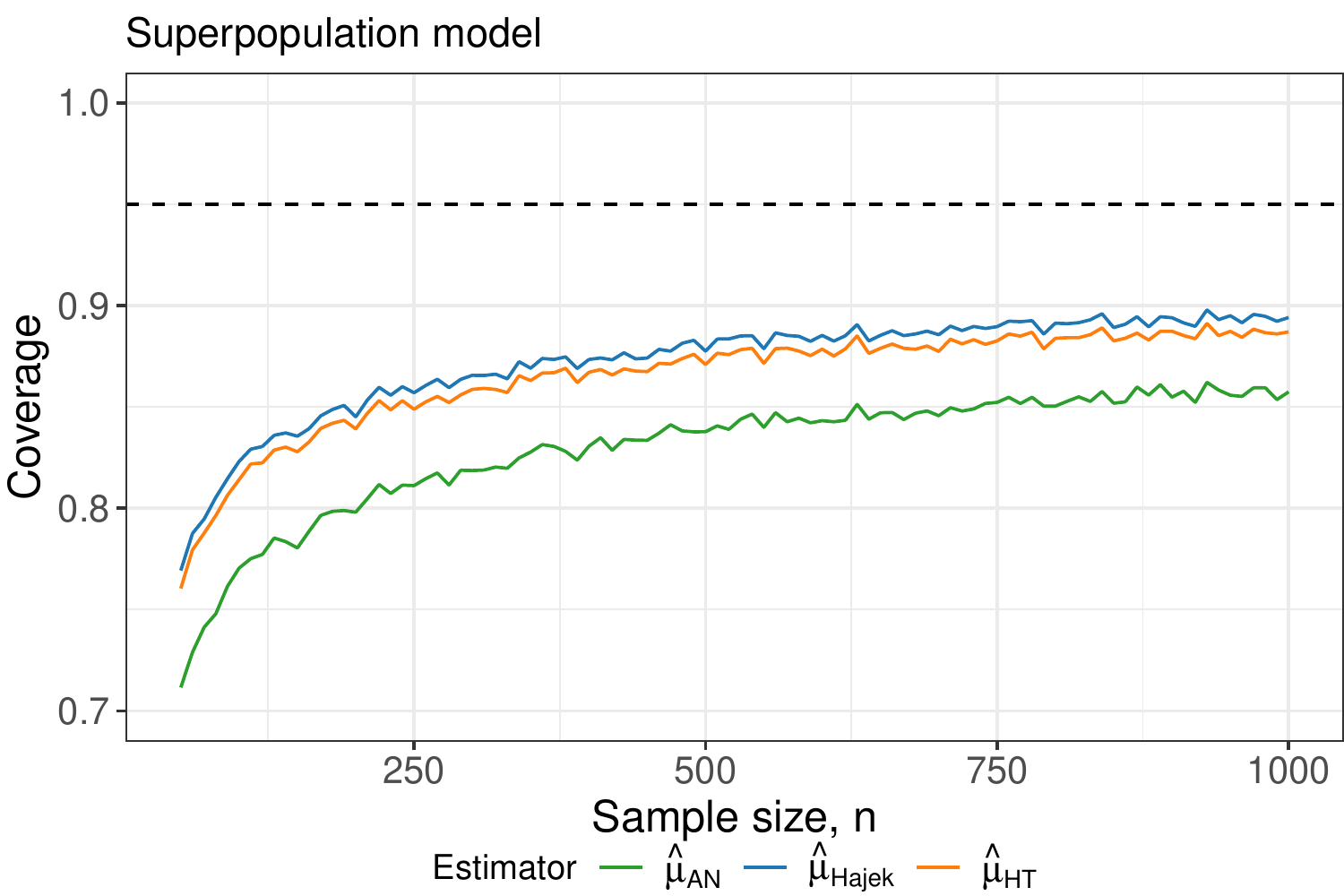}&\includegraphics[width=0.45\linewidth]{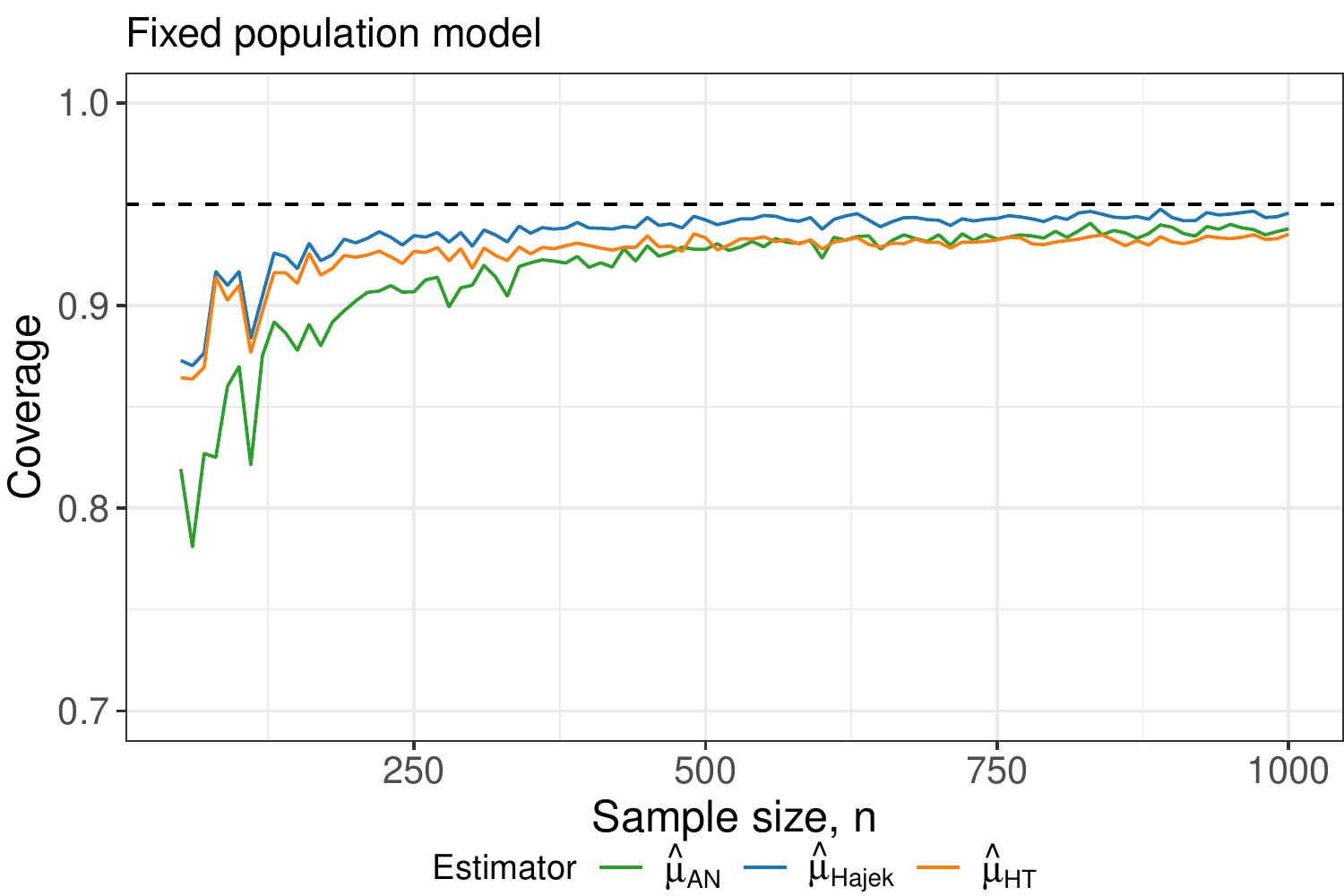}
\end{tabular}
\caption{The coverage of a 95\% confidence interval for $\hat{\mu}_{\text{HT}}$, $\hat{\mu}_{\Hajek{}}$, and $\hat{\mu}_{\AN}$ based on the asymptotic variances in Theorems~\ref{thm:mu-hat-lambda-clt} ($\lambda=0,1$) and \ref{thm:mu-hat-fp-clt} in a superpopulation model (left) and a fixed population model (right). In the superpopulation model, at each trial, the $Y_k, p_k, I_k$ are all drawn anew, while in the fixed population model, the $Y_k, p_k$ are fixed and the $I_k$ are drawn anew at each trial. The target coverage is indicated with a dashed line. The coverage in the fixed population model is significantly better than the coverage in the superpopulation model, for reasons discussed in the main text.}
	\label{fig:coverage}
\end{figure}

Unfortunately, we see that in the superpopulation model, which is the model we have considered throughout, the coverage is quite poor. To better understand the reason for this behavior, consider the asymptotic variance of the \Hajek{} estimator, $\E\left[ \frac{1-p_k}{p_k}(Y_k-\mu)^2 \right]$. Crucially, $p\mapsto \frac{1-p}{p}$ is a decreasing function of $p\in[0,1]$, and so this expectation up-weights small values of $p_k$. But these are exactly the values we are least likely to observe, hence the difficulty in obtaining an accurate approximation of the asymptotic variance, and thus in obtaining confidence intervals with good coverage. In particular, we do not believe that these results are due to a failure of the normal approximation: the normal approximation should be very nearly exactly correct for values as large as $n=1000$, but estimating the parameters of the target normal approximation remains challenging. This is further evidenced by the fact that $\hat{\mu}_{\text{HT}}$, for which there is no approximation and the asymptotic variance is exactly correct, actually has slightly lower coverage than $\hat{\mu}_{\text{\Hajek}}$. Rather, the challenge is in estimating $\E\left[ \frac{1-p_k}{p_k}Y_k^2 \right]$, which is slightly harder than estimating the asymptotic variance of $\hat{\mu}_{\Hajek}$ because $Y_k^2$ is typically larger than $(Y_k-\mu)^2$.

This point is further demonstrated by the significantly improved coverage of the same intervals in the fixed population simulation in the right panel of Figure~\ref{fig:coverage}. In this setting, the asymptotic variance of, for example, the \Hajek{} estimator is no longer $\E\left[ \frac{1-p_k}{p_k}(Y_k-\mu)^2 \right]$ but is rather $\lim \frac{1}{n}\sum_{k=1}^n \frac{1-p_k}{p_k}(Y_k-\overline{Y})^2$ where the limit is over a sequence of finite populations of increasing size and $\overline{Y}=\frac{1}{n}\sum_{k=1}^n Y_k$. The improved coverage is driven by the fact that now, only the small values of $p_k$ that are actually in the finite population, rather than all possible values of $p_k$, contribute to the variance, and so there is less unobserved variance negatively impacting the coverage.

To summarize, intervals constructed using our theory show sub-par coverage in simulation, but this is largely due to unobserved variance in the problem and the difficulty of estimating such variance, as evidenced by the comparison with the finite population setting, rather than a failure of our theory or of the normal approximation.

%!TEX root = ../TukeyTrotterDraft.tex

\section{Joint adaptive normalization in ATE estimation}
\label{sec:joint}

This appendix explores some subtleties of how the adaptive normalizations should be chosen for ATE estimation. In particular, the estimator in~\eqref{eq:ate-an} is equivalent to selecting $\lambda$ in \eqref{eqn:tt} to separately minimize the asymptotic variance of the two mean estimates. However, this ``plug-in'' use of adaptive normalization ignores the fact that the asymptotic variance of an adaptively normalized ATE estimator depends not just on the variances of the two mean estimators but also their covariance.
 
In what follows, we jointly choose the normalizations of each mean estimator to minimize the asymptotic variance of estimating the combined estimand $\tau$. Specifically, we define $$\hat{\mu}_{1, \lambda_1}=\frac{\hat{S}_1}{(1-\lambda_1)n+\lambda_1 \hat{n}_1},\quad \hat{\mu}_{0, \lambda_0}=\frac{\hat{S}_0}{(1-\lambda_0)n+\lambda_0\hat{n}_0},$$ where $$\hat{S}_1=\sum_{k=1}^n \frac{Y_k(1)I_k}{p_k},\quad \hat{S}_0=\sum_{k=1}^n \frac{Y_k(0)(1-I_k)}{1-p_k},\quad \hat{n}_1=\sum_{k=1}^n \frac{I_k}{p_k},\quad \hat{n}_0=\sum_{k=1}^n \frac{1-I_k}{1-p_k}.$$

Combining these estimators leads to the estimator $\hat{\tau}_{\lambda_1, \lambda_0}=\hat{\mu}_{1,\lambda_1}-\hat{\mu}_{0, \lambda_0}$ of $\tau$. To follow the program of Section \ref{sec:adapt}, we seek to choose $\lambda_1$ and $\lambda_0$ to minimize the following asymptotic variance.

\begin{theorem}
	\label{thm:ate-clt}
	Assuming that $Y_k(1)$ and $Y_k(0)$ both satisfy Assumption \ref{bounded}, we have $$\sqrt{n}(\hat{\tau}_{\lambda_1, \lambda_0}-\tau)\xrightarrow{d}N(0, \sigma^2),$$ where 
$$
\sigma^2=\lambda_1^2\mu_1^2\pi_1+\lambda_0^2\mu_0^2\pi_0-2\lambda_1\mu_1(T_1+\mu_0)+2\lambda_0\mu_0(-\mu_1-T_0)+2\lambda_1\lambda_0\mu_1\mu_0+C,
$$ 
for 
$$
\pi_1=\E\left[ \frac{1-p_k}{p_k} \right], \pi_0=\E\left[ \frac{p_k}{1-p_k} \right], T_1=\E\left[ Y_k(1)\frac{1-p_k}{p_k} \right], T_0=\E\left[ Y_k(0)\frac{p_k}{1-p_k} \right],
$$ 
and $C$ denotes terms that do not depend on either $\lambda_1$ or $\lambda_0$.
\end{theorem}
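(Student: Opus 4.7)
The plan is to mirror the strategy used for Theorems~\ref{thm:mu-hat-lambda-clt} and~\ref{thm:mu-hat-fp-clt}: establish a joint CLT for a vector of i.i.d.\ sums and then apply the delta method to the smooth function that produces $\hat{\tau}_{\lambda_1,\lambda_0}$. First I would introduce the vector
$$\hat{\beta}=\left(\hat{S}_1/n,\ \hat{S}_0/n,\ \hat{n}_1/n,\ \hat{n}_0/n\right),$$
whose entries are each averages of i.i.d.\ bounded random variables under Assumption~\ref{bounded} applied to both potential outcomes. Exactly as in Lemma~\ref{lem:clt}, the Cramér--Wold device combined with the classical Lindeberg CLT gives $\sqrt{n}(\hat{\beta}-\beta)\xrightarrow{d} N(0,\Sigma)$, where $\beta=(\mu_1,\mu_0,1,1)$ and $\Sigma$ is a $4\times 4$ covariance matrix whose entries are explicit moments of $\mathcal{D}$.

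Next I would express $\hat{\tau}_{\lambda_1,\lambda_0}=g(\hat{\beta})$ for
$$g(s_1,s_0,u_1,u_0)=\frac{s_1}{(1-\lambda_1)+\lambda_1 u_1}-\frac{s_0}{(1-\lambda_0)+\lambda_0 u_0},$$
which is differentiable at $\beta$ with gradient $\nabla g(\beta)=(1,-1,-\lambda_1\mu_1,\lambda_0\mu_0)$. The delta method then yields $\sqrt{n}(\hat{\tau}_{\lambda_1,\lambda_0}-\tau)\xrightarrow{d} N(0,\nabla g(\beta)^{\top}\Sigma\,\nabla g(\beta))$, reducing the problem to evaluating this quadratic form.

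The main computational step is bookkeeping the six cross-covariances between treated and control statistics. The key simplification is that $I_k(1-I_k)=0$, so any product of a treated-group term and a control-group term has zero expectation, and each such cross-covariance collapses to minus the product of the two marginal means. Concretely, I would compute
\begin{align*}
n\cov(\hat{S}_1/n,\hat{n}_1/n)&=T_1,\qquad n\cov(\hat{S}_0/n,\hat{n}_0/n)=T_0,\\
n\cov(\hat{n}_1/n,\hat{n}_0/n)&=-1,\qquad n\cov(\hat{S}_1/n,\hat{n}_0/n)=-\mu_1,\\
n\cov(\hat{S}_0/n,\hat{n}_1/n)&=-\mu_0,\qquad n\cov(\hat{S}_1/n,\hat{S}_0/n)=-\mu_1\mu_0,
\end{align*}
along with $n\var(\hat{n}_1/n)=\pi_1$, $n\var(\hat{n}_0/n)=\pi_0$, and analogues of the treated/control Horvitz--Thompson variances that depend only on the data-generating distribution (not on $\lambda_1$ or $\lambda_0$).

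Finally, I would expand $\nabla g(\beta)^{\top}\Sigma\,\nabla g(\beta)$ term by term and collect. The six diagonal-free contributions produce exactly $-2\lambda_1\mu_1 T_1$, $-2\lambda_0\mu_0 T_0$, $-2\lambda_1\mu_1\mu_0$, $-2\lambda_0\mu_0\mu_1$, and $+2\lambda_1\lambda_0\mu_1\mu_0$ after doubling, matching the stated formula; the diagonal contributions involving $\var(\hat{n}_1/n)$ and $\var(\hat{n}_0/n)$ give $\lambda_1^2\mu_1^2\pi_1$ and $\lambda_0^2\mu_0^2\pi_0$, while the $\var(\hat{S}_1/n)$ and $\var(\hat{S}_0/n)$ contributions, together with the $\lambda$-free cross term $2\mu_1\mu_0$ from $\cov(\hat{S}_1/n,\hat{S}_0/n)$, are absorbed into the constant $C$. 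The main obstacle is purely organizational: keeping the signs in the gradient and the sign conventions for the control-group weights $(1-I_k)/(1-p_k)$ consistent so that the cross-covariances combine correctly; there is no substantive analytic difficulty beyond that.
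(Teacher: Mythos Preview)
Your proposal is correct and follows essentially the same approach as the paper: form the four-dimensional vector $(\hat{S}_1/n,\hat{S}_0/n,\hat{n}_1/n,\hat{n}_0/n)$, invoke the i.i.d.\ CLT, and apply the delta method with the same gradient $(1,-1,-\lambda_1\mu_1,\lambda_0\mu_0)$, computing the identical list of covariances. If anything, your write-up is slightly more explicit than the paper's in flagging the $I_k(1-I_k)=0$ simplification and in tracking the $\cov(\hat{S}_1/n,\hat{S}_0/n)=-\mu_1\mu_0$ term that gets absorbed into $C$.
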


As with our other CLTs, this is a routine delta method calculation.

\begin{proof}
	Define the vector $$\hat{\beta}=\left( \frac{1}{n}\sum_{k=1}^n \frac{Y_k(1)I_k}{p_k}, \frac{1}{n}\sum_{k=1}^n \frac{Y_k(0)(1-I_k)}{1-p_k}, \frac{1}{n}\sum_{k=1}^n \frac{I_k}{p_k}, \frac{1}{n}\sum_{k=1}^n \frac{1-I_k}{1-p_k} \right)$$ with mean $\beta=(\mu_1, \mu_0, 1, 1)$. By the same arguments as in \ref{lem:clt}, $\hat{\beta}$ satisfies the usual CLT for the mean of i.i.d. random variables, and so we can apply the delta method with the function $f(x,y,z,w)=\frac{x}{1-\lambda_1+\lambda_1z}-\frac{y}{1-\lambda_0+\lambda_0w}.$ The relevant gradient is $(1, -1, -\lambda_1\mu_1, \lambda_0\mu_0)$ and so the asymptotic variance of $\hat{\tau}_{\lambda_1, \lambda_0}$ is 
	\begin{align*}
		&\lambda_1^2\mu_1^2\var\left( \frac{I_k}{p_k} \right)+\lambda_0^2\mu_0^2\var\left( \frac{1-I_k}{1-p_k} \right)\\
		&-2\lambda_1\mu_1\left( \cov\left( \frac{I_k}{p_k}, \frac{Y_k(1)I_k}{p_k}\right)-\cov\left( \frac{I_k}{p_k}, \frac{Y_k(0)(1-I_k)}{1-p_k} \right) \right),\\
		&+2\lambda_0\mu_0\left( \cov\left( \frac{1-I_k}{1-p_k}, \frac{Y_k(1)I_k}{p_k} \right)-\cov\left( \frac{1-I_k}{1-p_k}, \frac{Y_k(0)(1-I_k)}{1-p_k} \right) \right)\\
		&-2\lambda_1\lambda_0\mu_1\mu_0\cov\left( \frac{I_k}{p_k}, \frac{1-I_k}{1-p_k} \right) +C,
	\end{align*}

	where $C$ denotes terms that do not depend on either $\lambda_1$ or $\lambda_2$.

	Finally, we can compute $$\var\left( \frac{I_k}{p_k} \right)= \pi_1,\quad \var\left( \frac{1-I_k}{1-p_k} \right)= \pi_0$$ and $$\cov\left( \frac{I_k}{p_k}, \frac{Y_k(1)I_k}{p_k} \right)= T_1,\quad \cov\left( \frac{I_k}{p_k}, \frac{Y_k(0)(1-I_k)}{1-p_k} \right)=-\mu_0,$$ and $$\cov\left( \frac{1-I_k}{1-p_k}, \frac{Y_k(1)I_k}{p_k}\right)=-\mu_1,\quad \cov\left( \frac{1-I_k}{1-p_k}, \frac{Y_k(0)(1-I_k)}{1-p_k} \right) T_0,$$ and finally $\cov\left( \frac{I_k}{p_k}, \frac{1-I_k}{1-p_k} \right)=-1$. Substituting these in gives the result. 
\end{proof}

We now minimize the asymptotic variance of Theorem \ref{thm:ate-clt} in $\lambda_1$ and $\lambda_0$. The first-order stationary conditions tell us that the optimal pair $(\lambda_1^*, \lambda_0^*)$ will satisfy $$2\lambda_1^*\mu_1^2\pi_1-2\mu_1(T_1+\mu_0)+2\lambda_0^*\mu_1\mu_0=0\implies \lambda_1^*=\frac{T_1+\mu_0-\lambda_0^*\mu_0}{\mu_1\pi_1}$$ and $$2\lambda_0^*\mu_0^2\pi_0-2\mu_0(\mu_1+T_0)+2\lambda_1^*\mu_1\mu_0=0\implies\lambda_0^*=\frac{T_0+\mu_1-\lambda_1^*\mu_1}{\mu_0\pi_0}.$$

As before, we replace $T_0, T_1, \pi_0, \pi_1$ with IPW estimates $\hat{T}_0, \hat{T}_1, \hat{\pi}_0, \hat{\pi}_1$, and then jointly estimate $\lambda_0, \lambda_1, \mu_0, \mu_1$ by solving the system of fixed point equations 

\begin{equation}
	\hat{\lambda}_{1,\AN}=\frac{\hat{T}_1+\hat{\mu}_{0,\AN}-\hat{\lambda}_{0, \AN}\hat{\mu}_{0,\AN}}{\hat{\mu}_{1,\AN}\hat{\pi}_1},\quad \hat{\lambda}_{0,\AN}=\frac{\hat{T}_0+\hat{\mu}_{1,\AN}-\hat{\lambda}_{1,\AN}\hat{\mu}_{1,\AN}}{\hat{\mu}_{0,\AN}\hat{\pi}_0},
	\label{eq:lambda-fp}
\end{equation}
and 
\begin{equation}
	\hat{\mu}_{1,\AN}=\frac{\hat{S}_1}{(1-\hat{\lambda}_{1,\AN})n+\hat{\lambda}_{1,\AN}\hat{n}_1},\quad \hat{\mu}_{0,\AN}=\frac{\hat{S}_0}{(1-\hat{\lambda}_{0,\AN})n+\hat{\lambda}_{0,\AN}\hat{n}_0}.
	\label{eq:mu-fp}
\end{equation}

(Note that we have slightly overloaded the notation $\hat{\mu}_{1,\AN}$, which is used differently in Section~\ref{subsec:ate}. In the entirety of this appendix, the definition above is the one used.)

To actually solve this system, we focus first on \eqref{eq:lambda-fp}, noting that it is linear in $\hat{\lambda}_{1,\AN}\hat{\mu}_{1,\AN}$ and $\hat{\lambda}_{0,\AN}\hat{\mu}_{0,\AN}$, and has solution
\begin{equation}
	\hat{\lambda}_{1,\AN}\hat{\mu}_{1,\AN}=\frac{\hat{T}_1\hat{\pi}_0+\hat{\mu}_{0,\AN}\hat{\pi}_0-\hat{T}_0-\hat{\mu}_{1,\AN}}{\hat{\pi}_0\hat{\pi}_1-1},\quad \hat{\lambda}_{0,\AN}\hat{\mu}_{0,\AN}=\frac{\hat{T}_0\hat{\pi}_1+\hat{\mu}_{1,\AN}\hat{\pi}_1-\hat{T}_1-\hat{\mu}_{0,\AN}}{\hat{\pi}_0\hat{\pi}_1-1}.
	\label{eq:linear-sol}
\end{equation}
Then, we rewrite \eqref{eq:mu-fp} as 
\begin{equation}
	\hat{\mu}_{1,\AN}=\hat{\mu}_{1,\AN}\hat{\lambda}_{1,\AN}\left(1-\frac{\hat{n}_1}{n}\right)+\frac{\hat{S}_1}{n},\quad \hat{\mu}_{0,\AN}=\hat{\mu}_{0,\AN}\hat{\lambda}_{0,\AN}\left( 1-\frac{\hat{n}_0}{n} \right)+\frac{\hat{S}_0}{n}.
\end{equation}
Using \eqref{eq:linear-sol}, we can conclude that $\hat{\mu}_{1,\AN}$ and $\hat{\mu}_{0,\AN}$ must satisfy the system of linear equations $$\left[ \begin{array}{cc}1+\frac{1}{\hat{\pi}_0\hat{\pi}_1-1}\left( 1-\frac{\hat{n}_1}{n} \right)& -\frac{\hat{\pi}_0}{\hat{\pi}_0\hat{\pi}_1-1}\left( 1-\frac{\hat{n}_1}{n} \right)\\ -\frac{\hat{\pi}_1}{\hat{\pi}_0\hat{\pi}_1-1}\left( 1-\frac{\hat{n}_0}{n} \right)&1+\frac{1}{\hat{\pi}_0\hat{\pi}_1-1}\left( 1-\frac{\hat{n}_0}{n} \right)\end{array}\right]\left[ \begin{array}{c} \hat{\mu}_{1,\AN}\\ \hat{\mu}_{0,\AN} \end{array}\right]=\left[ \begin{array}{c}\frac{\hat{S}_1}{n}+\frac{\hat{T}_1\hat{\pi}_0-\hat{T}_0}{\hat{\pi}_0\hat{\pi}_1-1}\left( 1-\frac{\hat{n}_1}{n} \right)\\ \frac{\hat{S}_0}{n}+\frac{\hat{T}_0\hat{\pi}_1-\hat{T}_1}{\hat{\pi}_0\hat{\pi}_1-1}\left( 1-\frac{\hat{n}_0}{n} \right)\end{array}\right].$$

Finally, solving these equations to recover $\hat{\mu}_{1, \AN}$, $\hat{\mu}_{0, \AN}$, and $\hat{\tau}_{\AN^2}:=\hat{\mu}_{1, \AN}-\hat{\mu}_{0, \AN}$ can be done using any standard matrix inversion technique. We refer to the resulting estimator as $\hat{\tau}_{\AN^2}$ to distinguish it from $\hat{\tau}_{\AN}$.

Somewhat surprisingly, $\hat{\tau}_{\AN^2}$ does not always improve on the MSE of $\hat{\tau}_{\AN}$. The results of a simulation are shown in Figure~\ref{fig:joint-vs-separate}. The normal model simulations suggest that $\hat{\tau}_{\AN^2}$ and $\hat{\tau}_{\AN}$ are preferable in different regimes, but the power law model clearly shows that $\hat{\tau}_{\AN}$ is better there. This suggests that, although $\hat{\tau}_{\AN^2}$ is estimating the optimal values of $\lambda_1, \lambda_0$, the complicated functional form of the data-dependent estimates of the optimal values substantially inflates variance. Although $\hat{\tau}_{\AN}$ is estimating a sub-optimal pair of values $\lambda_1, \lambda_0$, its estimates of those values are lower variance, and ultimately lead to a lower variance estimator.

\begin{figure}[t]
	\centering
	\begin{tabular}{cc}
		\includegraphics[width=0.48\linewidth]{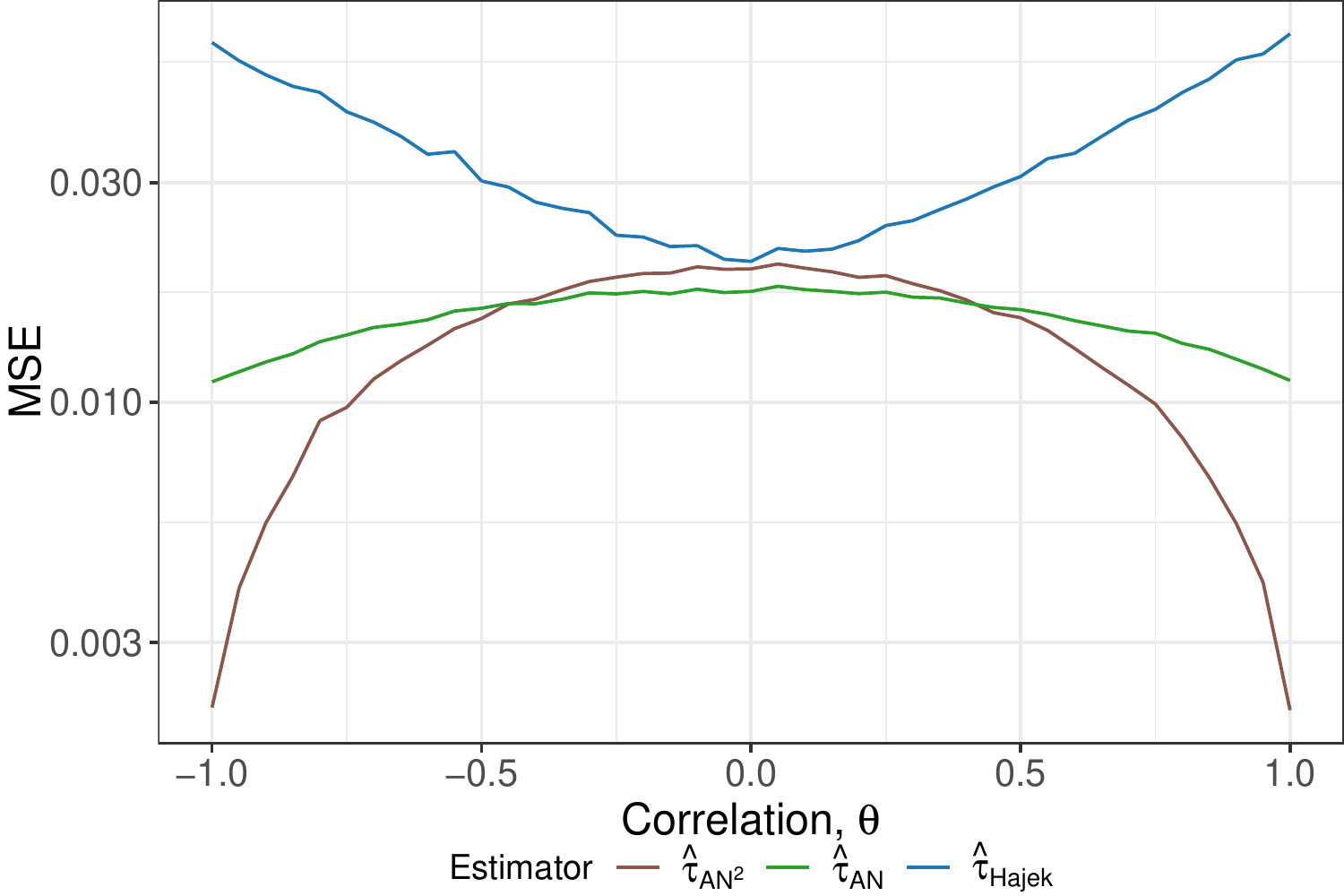}&\includegraphics[width=0.48\linewidth]{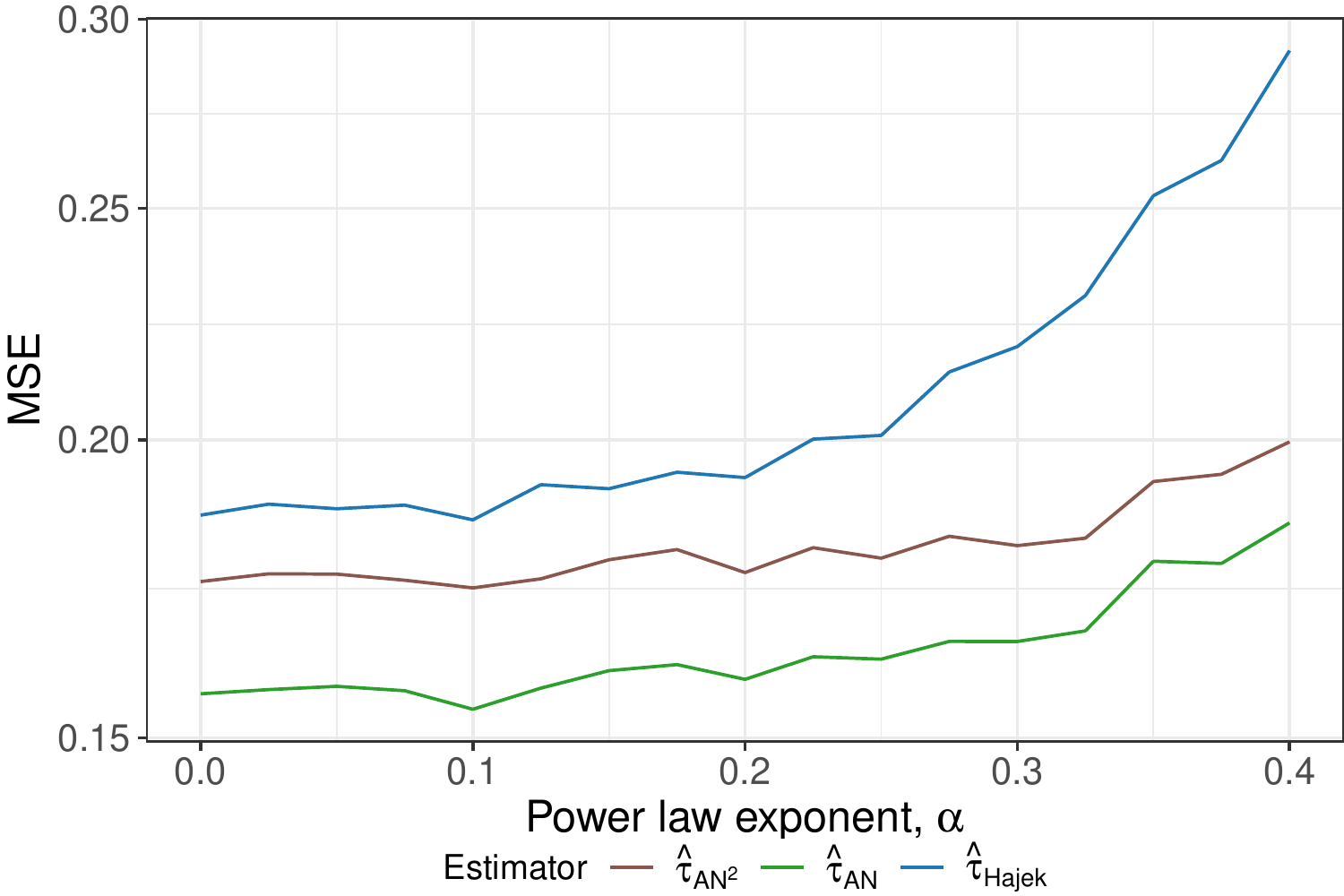}
	\end{tabular}
	\caption{Comparison of $\hat{\tau}_{\AN^2}$ and $\hat{\tau}_{\AN}$ in the normal model with $n=500, \mu=1$ (left) and power law model with $n=500$ (right). In the normal model, $\hat{\tau}_{\AN^2}$ is sometimes better than $\hat{\tau}_{\AN}$, but in the power law model, $\hat{\tau}_{\AN^2}$ is consistently worse.}
	\label{fig:joint-vs-separate}
\end{figure}

\end{document}